\newtheorem{theorem}{Theorem}
\newtheorem{lemma}{Lemma}
\newtheorem{proposition}[theorem]{Proposition}
\newtheorem{corollary}{Corollary}
\theoremstyle{definition}
\newtheorem{definition}{Definition}
\newtheorem{problem}{Problem}
\newcommand{\var}{\mathit{var}}
\newcommand{\OrEstimateSample}{\mathsf{estimateSample}_+}
\newcommand{\AndEstimateSample}{\mathsf{estimateSample}_\times}
\newcommand{\approxMCDNNFcore}{\textup{\textsf{countCore}}}
\newcommand{\approxMCDNNF}{\textup{\textsf{counter}}}
\newcommand{\nSamples}{m}
\newcommand{\tree}{\mathit{tree}}
\newcommand{\desc}{\mathit{desc}}
\newcommand{\prev}{\mathit{children}}
\newcommand{\children}{\mathit{children}}
\newcommand{\round}{\mathsf{roundDown}}
\newcommand{\lcsn}[2]{#1 \sqcap #2}
\newcommand{\lcsnvar}[3]{#1 \sqcap_{#3} #2}
\newcommand{\Ex}{\textsf{E}}
\newcommand{\Va}{\textsf{Var}}
\newcommand{\calE}{\mathcal{E}}
\newcommand{\calP}{\mathcal{P}}
\newcommand{\calH}{\mathcal{H}}
\newcommand{\calA}{\mathcal{A}}
\newcommand{\calL}{\mathcal{L}}
\newcommand{\calZ}{\mathcal{Z}}
\newcommand{\supp}{\mathit{supp}}
\newcommand{\nsnt}{{n_sn_t}}
\newcommand{\mods}{\mathit{models}}
\newcommand{\threshold}{\theta}
\newcommand{\median}{\mathsf{median}}
\newcommand{\reduce}{\mathsf{reduce}}
\newcommand{\premises}{\mathit{premises}}
\newcommand{\descendants}{\mathit{desc}}
\newcommand{\union}{\mathsf{union}}
\newcommand{\vY}[4]{\mathfrak{S}^{#1}_{#2,#3}(#4)}
\newcommand{\vZ}[3]{\hat{\mathfrak{S}}^{#1}_{#2}(#3)}
\newtheorem{fact}{Fact}
\newtheorem{example}{Example}
\begin{document}

\newcommand\relatedversion{}
\renewcommand\relatedversion{\thanks{The authors decided to forgo the old convention of alphabetical ordering of authors in favor of a randomized ordering, denoted by \textcircled{r}. The publicly verifiable record of the randomization is available at
	\protect \url{https://www.aeaweb.org/journals/policies/random-author-order/search}
	} \thanks{The current version expands the exposition of our techniques and fixes minor errors in Algorithms 3 and 4 and their corresponding proofs. The current version expands the exposition of our techniques and fixes minor errors in Algorithms 3 and 4 and their corresponding proofs. These fixes were incorporated in the camera-ready version published at SODA 2026. }} %

\title{\Large \#CFG and \#DNNF admit FPRAS\relatedversion}

\author{Kuldeep S. Meel$^{1}$ \and \textcircled{r} \and Alexis de Colnet$^2$}
\date{\small $^1$University of Toronto, Canada and Georgia Institute of Technology, USA  \\ $^2$TU Wien, Austria}

\maketitle

\begin{abstract}
	We provide the first fully polynomial-time randomized approximation scheme for the following two counting problems: 
	\begin{enumerate}
		\item Given a Context-Free Grammar $G$ over alphabet $\Sigma$, count the number of words of length exactly $n$ generated by $G$
		\item Given a circuit $\varphi$ in Decomposable Negation Normal Form (DNNF) over the set of Boolean variables $X$, compute the number of assignments to $X$ such that $\varphi$ evaluates to 1. 
	\end{enumerate}
	
Finding polynomial time algorithms for the aforementioned problems has been a longstanding open problem. 
Prior work could either only obtain a quasi-polynomial runtime 
or a polynomial-time randomized approximation scheme for non-deterministic finite automata 
and non-deterministic tree automata.  
	
\end{abstract}

\section{Introduction.}\label{sec:introduction}  

In this paper, we focus on the following computational problems:

\begin{problem}[\#CFG]
	Given a Context-Free Grammar (CFG) $G$ over alphabet $\Sigma$ and an integer $n$ in unary, compute the number of words of length exactly $n$ generated by $G$. Notationally, we are interested in computing $|L_{n}(G)|$, where $L_n(G) =  L(G) \cap \Sigma^n$ and $L(G)$ is the language generated by $G$.
\end{problem}

\begin{problem}[\#DNNF]
	Given a circuit $\varphi$ in Decomposable Negation Normal Form (DNNF) over $n$ variables, count the number of assignments over inputs that $\varphi$ evaluates to $1$. Notationally, we are interested in computing $|\mods(\varphi)|$.
\end{problem}

Since counting and sampling are inter-reducible for self-reducible problems, we are equally interested in the design of uniform samplers for CFG and DNNF circuits.

Context-free grammars and decomposable negation normal form circuits are fundamental objects in computer science. Context-free grammars are central in programming languages,  and form the basis for almost every modern programming language~\cite{AMJ07}. Decomposable Negation Normal Form (DNNF) circuits 
were introduced by ~\cite{Darwiche01},  as a representation for Boolean functions that support many interesting problems in polynomial-time, such as satisfiability, clausal entailment or existential variable quantification. Accordingly, DNNF circuits have found applications in planning~\cite{PBDG05}, diagnosis~\cite{Darwiche01}, databases~\cite{AS11}, design automation~\cite{SGRM18}, and the like.

Both \#CFG and \#DNNF are \#P-complete via simple reductions from counting models of a formula in disjunctive normal form, and consequently, there has been an interest in determining whether there exists a polynomial-time randomization scheme to approximately compute the desired quantity within specified tolerance and confidence. 
The best-known randomized approximation scheme~\cite{GoreJKSM97,KSM95} for \#CFG and \#DNNF has a {\em quasi-polynomial} runtime, i.e., running time $\exp(\mathrm{polylog}(|G|,n))$ for \#CFG and $\exp(\mathrm{polylog}(|\varphi|))$ for \#DNNF. Recent work~\cite{ArenasCJR21,ArenasCJR21a,MeelCM24,MeeldC25} has demonstrated the existence of a fully polynomial-runtime approximation scheme (FPRAS) for subsets of \#CFG and \#DNNF, but the problem of whether there exists a fully polynomial runtime approximation scheme for \#CFG and \#DNNF remained open. 

As highlighted by~\cite{GoreJKSM97}, these problems have proven resistant to numerous approaches:
\begin{quote}
	{\em
		``The main and obvious open question is whether a truly polynomial-time
		algorithm exists for sampling from the n-slice of a context-free language, or for
		estimating its size $\ldots$ A superficially appealing approach is  $\ldots$ using dynamic
		programming.  We have not been able to make this approach work nor have we been successful with other approaches such as proving rapid mixing of a suitably defined Markov chain, or finding unbiased estimators with small standard deviations"
	}
\end{quote}

The primary contribution of our work is to present the first FPRAS for \#CFG and \#DNNF, formalized in the form of the following theorems.

\begin{theorem}\label{thm:cfg}
	There is an algorithm $\calA$ that takes a CFG $G$ and $n$ (in unary) as input and returns an estimate $\mathsf{est}$ such that $\Pr\left[ \mathsf{est} \in (1 \pm \varepsilon)|L_n(G)|\right] \geq 1 - \delta.$ Furthermore, $\calA$ runs in time $\mathrm{poly}(\varepsilon^{-1}, \log \delta^{-1}, |G|,n)$.
\end{theorem}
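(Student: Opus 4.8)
The plan is to reduce \#CFG to \#DNNF and then build an FPRAS for \#DNNF. First put $G$ in Chomsky normal form (polynomial time, polynomial blow-up), so every production has the form $A \to BC$ or $A \to a$. Introduce one gate $g_{A,[i,j]}$ for each nonterminal $A$ and each subinterval $[i,j] \subseteq [1,n]$, and set
\[
g_{A,[i,i]} = \bigvee_{A \to a}[x_i = a], \qquad g_{A,[i,j]} = \bigvee_{A \to BC}\ \bigvee_{k=i}^{j-1} \bigl(g_{B,[i,k]} \wedge g_{C,[k+1,j]}\bigr) \quad (i<j),
\]
with output $g_{S,[1,n]}$ for the start symbol $S$, where each $x_i$ ranges over $\Sigma$ (binary-encoding the $x_i$ and conjoining a ``valid codeword'' constraint on each block of bits keeps the circuit Boolean and leaves it decomposable). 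Every $\wedge$ splits an interval into two disjoint subintervals, so the circuit $\varphi_G$ is decomposable; it has size $O(|G|\cdot n^{3})$, and $|\mathsf{models}(\varphi_G)| = |L_n(G)|$ since OR-gates compute a maximum, not a sum, so a word with many derivations is still counted once. As $n$ is given in unary, this is a polynomial-time reduction, and Theorem~\ref{thm:cfg} follows once we exhibit an FPRAS for \#DNNF.

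\textbf{An FPRAS for \#DNNF.} Normalize the circuit $\varphi$ to be smooth (the two children of every $\vee$-gate have the same variable scope; polynomial blow-up), of fan-in two, with $\vee$- and $\wedge$-gates alternating and literals at the leaves. For a gate $g$ let $R(g)$ be the set of assignments to $\mathrm{var}(g)$ accepted by the subcircuit rooted at $g$, and $N_g = |R(g)|$; the target is $N_{\text{out}}$. Sweeping the gates in reverse topological order, maintain at each $g$ an estimate $\widehat N_g$ of $N_g$ together with a pool $\calS_g$ of (near-)uniform independent samples from $R(g)$. Leaves are trivial. If $g = g_1 \wedge g_2$, decomposability gives $R(g) = R(g_1) \times R(g_2)$, so set $\widehat N_g = \widehat N_{g_1}\widehat N_{g_2}$ and build each sample by pairing independent draws from $\calS_{g_1}$ and $\calS_{g_2}$. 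If $g = g_1 \vee g_2$, smoothness gives $R(g) = R(g_1) \cup R(g_2)$; estimate $N_g = N_{g_1} + N_{g_2} - |R(g_1) \cap R(g_2)|$ by a Karp--Luby-style coverage estimator --- draw samples from $\calS_{g_2}$ and test in polynomial time whether each is accepted by the subcircuit $g_1$ --- and produce samples from $R(g)$ by the standard mixture-and-rejection scheme over $\calS_{g_1}, \calS_{g_2}$ driven by $\widehat N_{g_1}, \widehat N_{g_2}$ and membership tests.

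\textbf{The main obstacle.} The reason the problem stayed open is that $\varphi$ is a deep DAG: a naive bottom-up analysis lets the relative error of $\widehat N_g$ and the statistical distance of $\calS_g$ from uniformity degrade multiplicatively at every gate, and one is implicitly counting a union of exponentially many ``branch sets.'' The fix, following the technique developed for \#NFA and \#NTA in \cite{ArenasCJR21, ArenasCJR21a, MeelCM24}, is to analyze a single global estimator instead of composing local ones: fix a pool size $t = \mathrm{poly}(|\varphi|, \varepsilon^{-1}, \log\delta^{-1})$ and a per-gate uniformity budget $\varepsilon' = \varepsilon/\mathrm{poly}(|\varphi|)$; maintain as an invariant that $\calS_g$ is $\varepsilon'$-close to uniform and $\widehat N_g$ has controlled bias given the pools below it; and bound the total accumulated variance by a martingale/telescoping argument over the polynomially many gates. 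The coverage estimator at each $\vee$-gate has relative variance $O(1)$ because every element of $R(g_1)\cup R(g_2)$ is covered by at most two sets, and gates with $N_g$ below a $\mathrm{poly}(|\varphi|,\varepsilon^{-1})$ threshold are handled by exact enumeration, which also seeds the sweep. This yields $\widehat N_{\text{out}} \in (1 \pm \varepsilon)N_{\text{out}}$ with probability at least $1 - \delta$ in time $\mathrm{poly}(|\varphi|, \varepsilon^{-1}, \log\delta^{-1})$; composed with the reduction above, it proves Theorem~\ref{thm:cfg}.
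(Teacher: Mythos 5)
Your first step (the CYK-style reduction from \#CFG to \#DNNF) is fine and essentially interchangeable with the paper's route, which instead expresses $L_n(G)$ as a $(\cup,\cdot)$ program and converts it to a homogeneous multilinear $(+,\times)$ program; either way the whole weight of the theorem rests on an FPRAS for DNNF-like counting. The gap is in your second part. Your invariant --- ``maintain at each gate $g$ a pool $\calS_g$ of (near-)uniform \emph{independent} samples from $R(g)$, with per-gate bias/uniformity budgets that compose over the polynomially many gates'' --- is precisely the thing that nobody knows how to maintain in polynomial time on a DAG, and it is what forced the quasi-polynomial runtime in the prior work of Gore et al. Concretely: when $g = g_1 \wedge g_2$ you build samples by pairing draws from $\calS_{g_1}$ and $\calS_{g_2}$, and the same pools $\calS_{g_1}$ are reused by every other parent of $g_1$; two gates sharing a sub-DAG therefore have heavily correlated pools and correlated estimates, so the samples feeding your Karp--Luby coverage test at a $\vee$-gate are neither independent of each other nor of the estimates $\widehat N_{g_i}$ driving the mixture. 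The appeal to the \#NFA/\#NTA analyses does not rescue this: those arguments obtain independence by exploiting self-reducibility of automata, and the paper explicitly notes (and works around the fact) that this does not extend to general DNNF/$(+,\times)$ programs. The phrase ``bound the total accumulated variance by a martingale/telescoping argument'' is exactly the step that needs a proof, and none is given.

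What the paper does instead is to \emph{abandon} independence (not even pairwise independence is maintained) and to quantify the dependence combinatorially: for two models $\alpha,\alpha'$ surviving in the same sample set at $q$, the conditional probability $\Pr[\alpha \in S^r(q) \mid \alpha' \in S^r(q)]$ is bounded in terms of the overlap of their \emph{derivation trees} (the last common subtree nodeset $\tau$), and an antichain counting lemma shows that the number of pairs with a given overlap $\tau$ is at most $|\supp(q)|^2/\prod_{\hat q\in\tau}|\supp(\hat q)|$; together these bound the variance of the per-gate size estimator by a constant multiple of the square of its mean, which is what makes the median-of-means estimator work. A further issue your sketch ignores is that the retained probability $p(q)$ is itself a random variable correlated with the samples, so even stating ``$\Pr[\alpha\in S^r(q)] = p(q)$'' is ill-formed; the paper resolves this by coupling the algorithm with an idealized random process indexed by histories (Fact~\ref{fact:not_so_black_magic}) and doing the second-moment analysis there. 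Without these ingredients (or a genuine substitute for them), your bottom-up sweep is a plausible algorithm but not a proof of Theorem~\ref{thm:cfg}.
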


\begin{theorem}\label{thm:dnnf}
	There is an algorithm $\calA$ that takes a DNNF circuit $\varphi$ as input and returns an estimate $\mathsf{est}$ such that $\Pr\left[ \mathsf{est} \in (1 \pm \varepsilon)|\mods(\varphi)|\right] \geq 1 - \delta.$ Furthermore, $\calA$ runs in time $\mathrm{poly}(\varepsilon^{-1}, \log \delta^{-1}, |\varphi|)$.
\end{theorem}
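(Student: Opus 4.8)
\noindent We sketch the proof of Theorem~\ref{thm:dnnf}; an analogous scheme run over the nonterminals of a grammar instead of the gates of a circuit yields Theorem~\ref{thm:cfg}. The plan is to make a single bottom-up pass over $\varphi$, maintaining at every gate $g$ two objects: a numerical estimate $\hat N_g$ of $N_g := |\mathsf{models}(g)|$, the number of models of the subcircuit rooted at $g$ over the variables occurring below $g$; and a \emph{pool} $\calS_g = (\calS_g[1],\dots,\calS_g[m])$ of $m$ samples whose joint distribution is close --- in a strong, per-element sense --- to $m$ independent uniform draws from $\mathsf{models}(g)$. Before the pass we preprocess $\varphi$, at polynomial cost, into an equivalent DNNF with fan-in-$2$ gates that is \emph{smooth} (both children of every $\lor$-gate mention the same variables); decomposability of $\land$-gates is preserved, so the answer to \#DNNF is then read off as $\hat N_r$ for the root $r$ (multiplied by $2^{k}$ if the root mentions only $n-k$ of the $n$ variables). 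The one feature of DNNF we exploit beyond decomposability is that \emph{membership testing is trivial}: given an assignment $\alpha$, whether $\alpha \models g$ for a gate $g$ is decided in polynomial time by evaluating the circuit.

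The per-gate rules are as follows. At a literal or constant leaf, $N_g = 1$ on its scope and the pool is the obvious constant. At a decomposable $\land$-gate $g = g_1 \land g_2$ the two scopes are disjoint, so $\mathsf{models}(g) = \mathsf{models}(g_1) \times \mathsf{models}(g_2)$, and we set $\hat N_g = \hat N_{g_1}\,\hat N_{g_2}$ and $\calS_g[j] = (\calS_{g_1}[j], \calS_{g_2}[j])$. The interesting case is an $\lor$-gate $g = g_1 \lor g_2$, where $\mathsf{models}(g) = \mathsf{models}(g_1) \cup \mathsf{models}(g_2)$ and the union may be far from disjoint --- precisely the phenomenon that makes \#DNNF \#P-hard. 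For $N_g$ we use the Karp--Luby coverage estimator: repeatedly pick a child $g_i$ with probability proportional to $\hat N_{g_i}$, draw $\alpha \sim \calS_{g_i}$, and record $1/c(\alpha)$, where $c(\alpha) \in \{1,2\}$ is the number of children satisfied by $\alpha$; then $(\hat N_{g_1}+\hat N_{g_2})$ times the empirical mean of the recorded values estimates $N_g$, and since each value lies in $[1/2,1]$ this estimator has $O(1)$ relative variance, so a $\median$ of $O(\log\delta^{-1})$ blocks of $O((\varepsilon')^{-2})$ trials concentrates to within a $(1\pm\varepsilon')$ factor of its mean with probability $1-\delta/|\varphi|$. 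The pool $\calS_g$ is filled by the matching rejection sampler: draw $(g_i,\alpha)$ as above, keep $\alpha$ with probability $1/c(\alpha)$ (otherwise redraw, with replacement, from the child pools), which outputs a uniform element of $\mathsf{models}(g)$ and, because $1/c(\alpha)\ge 1/2$, terminates in $O(1)$ expected draws.

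The crux --- and the step I expect to be the main obstacle --- is the error analysis showing that neither the estimates nor the pools deteriorate too quickly under composition. I would fix a quantitative invariant of the form $\hat N_g \in (1\pm\gamma_g)N_g$ and, for every $\alpha\in\mathsf{models}(g)$ and every slot $j$, $\Pr[\calS_g[j] = \alpha]\in (1\pm\beta_g)/N_g$ together with the slots being jointly close to independent, and prove a ``one-gate'' lemma: conditioned on both children satisfying the invariant with parameters $(\beta,\gamma)$, the parent satisfies it with parameters larger only by an \emph{additive} $O(\varepsilon')$ plus lower-order terms that vanish as $m\to\infty$, except on a failure event of probability $O(\delta/|\varphi|)$. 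Since these losses accumulate additively over the at most $|\varphi|$ gates on any leaf-to-root chain, taking $\varepsilon' = \Theta(\varepsilon/|\varphi|)$ and $m = \mathrm{poly}(|\varphi|, \varepsilon^{-1}, \log\delta^{-1})$ keeps $\gamma_r$ at $O(\varepsilon)$, and a union bound over the $|\varphi|$ failure events gives overall confidence $1-\delta$; the running time is $O(|\varphi|)\cdot m\cdot\mathrm{poly}(|\varphi|)$, which is polynomial. The delicate ingredient is the $\lor$-rule: one must show that the coverage estimator and, especially, the rejection-plus-refill construction of $\calS_g$ perturb the per-element sampling probabilities only additively and do not create damaging correlations among the slots --- this is exactly where building one reusable pool bottom-up (rather than re-invoking an exponentially branching recursive sampler) has to be paid for, and it is the technical heart inherited from the \#NFA and \#NTA predecessors.
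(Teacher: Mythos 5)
There is a genuine gap, and it sits exactly where you flag ``the technical heart'': the one-gate lemma asserting that the invariant (per-element near-uniformity \emph{plus} joint near-independence of the $m$ pool slots) degrades only additively per gate with a polynomial pool size $m$. At an $\lor$-gate your refill rule draws slots with replacement from the finite child pools, so two parent slots select the same child slot with probability about $1/m$; consequently $\Pr[\calS_g[1]=\calS_g[2]=\alpha]$ is of order $\frac{1}{m N_g}+\frac{1}{N_g^2}$, which exceeds the iid value $\frac{1}{N_g^2}$ by a factor of roughly $N_g/m$ --- exponentially large --- so ``slots jointly close to independent'' cannot be maintained in any per-element multiplicative sense across even a single $\lor$-gate unless $m$ is comparable to $N_g$. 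In addition, $\varphi$ is a DAG: one child pool is reused by many parents, so the two children of a later $\lor$-gate arrive already correlated, while your one-gate lemma conditions on the children's invariants as if their pools were independent; the variance analysis of the Karp--Luby estimator at such a gate then hinges on precisely the cross-correlations you have not quantified. This nested resampling of resamples is exactly the regime in which the quasi-polynomial algorithm of~\cite{GoreJKSM97} got stuck (their pool sizes had to grow quasi-polynomially to keep the compounding dependence under control), so the claim that the losses are additive and that $\varepsilon'=\Theta(\varepsilon/|\varphi|)$, $m=\mathrm{poly}$ suffices is not a routine verification but the open problem itself; as stated, the invariant is false and the induction does not close.

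The paper's route is structurally different on this point: it abandons independence altogether (not even pairwise), reduces $\varphi$ to a multilinear homogeneous $(+,\times)$ program, and maintains only a marginal inclusion probability $p(q)$ for the sets $S^r(q)$. The dependence $\Pr[\alpha \in S^r(q)\mid \alpha'\in S^r(q)]$ is then bounded in terms of the overlap $\lcsn(\tree(\alpha,q),\tree(\alpha',q))$ of derivation trees (Lemma~\ref{lemma:proba_second_order}), and the antichain counting bound (Lemma~\ref{lemma:antichain_lemma}) shows there are few pairs with large overlap, giving constant relative variance for a median-of-means estimator. Two further ingredients your sketch has no analogue of are needed to make this rigorous: the coupling of the algorithm with an idealized random process indexed by histories (Fact~\ref{fact:not_so_black_magic}), because $p(q)$ is itself a random variable, and the depth reduction to $3\lceil\log n\rceil$ (Lemma~\ref{lemma:depth_reduction}), without which even the paper's carefully controlled relative error, which grows like $e^{\kappa 2^i}$ in the effective height $i$, would blow up. If you want to salvage your plan, you would have to replace ``jointly close to iid'' by an explicit bound on pairwise (and higher) correlations of reused samples and show how it composes through $\lor$- and $\land$-gates --- which is, in essence, rediscovering the paper's derivation-tree machinery.
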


Since the problem of approximate counting and almost uniform sampling are inter-reducible for CFG and DNNF, the following corollaries immediately follow. 
\begin{corollary}\label{corr:cfg}
There is an almost uniform sampler $\calA$ that takes a CFG $G$ and $n$ in unary as input and parameter $\varepsilon$, and returns $\mathsf{ret}  \in L_n(G) \cup \{\bot\}$ such that $\Pr[\mathsf{ret} = \sigma \mid \mathsf{ret}  \neq  \bot ] \in  \frac{1 \pm \varepsilon}{|L_n(G|)}$,  where $\bot$ is a special symbol allowing the algorithm to {\em fail} with very small probability, which can be fixed to an arbitrary constant (e.g., 0.01).	Furthermore, $\calA$ runs in $\mathrm{poly}(\varepsilon^{-1},  |G|,n)$.
\end{corollary}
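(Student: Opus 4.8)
The plan is to derive the corollary from Theorem~\ref{thm:cfg} by the classical counting-to-sampling reduction for self-reducible problems, instantiated so that the self-reduction fixes the word one letter at a time and so that the multiplicative errors incurred at the $n$ recursion levels are kept under control.

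\emph{Self-reduction.} For a string $p\in\Sigma^{\le n}$ I would write $N(p)$ for the number of words of $L_n(G)$ having $p$ as a prefix, so that $N(\lambda)=|L_n(G)|$ for the empty string $\lambda$, $N(p)=\sum_{\sigma\in\Sigma}N(p\sigma)$ whenever $|p|<n$, and, for $w\in\Sigma^n$, $N(w)\in\{0,1\}$ with $N(w)=1$ iff $w\in L_n(G)$. The key point is that every $N(p)$ is itself a \#CFG quantity: $N(p)=|L_n(G_p)|$, where $G_p$ is a context-free grammar for $L(G)\cap p\Sigma^*$. Since $p\Sigma^*$ is accepted by a DFA with $|p|+1$ states, the effective closure of context-free languages under intersection with a regular language (take a pushdown automaton for $G$, form the product with that DFA, convert back to a grammar) yields $G_p$ in time $\mathrm{poly}(|G|,|p|)=\mathrm{poly}(|G|,n)$; throughout we may assume $|\Sigma|\le|G|$.

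\emph{The sampler.} I would set $\varepsilon'=\varepsilon/(6n)$ and pick $\delta'$ small enough that $n|\Sigma|\,\delta'\le\varepsilon\,|\Sigma|^{-n}\le\varepsilon/|L_n(G)|$ (using $|L_n(G)|\le|\Sigma|^n$); since the FPRAS of Theorem~\ref{thm:cfg} runs in time polylogarithmic in $1/\delta'$, polynomially many bits for $\delta'$ suffice. Starting from $p_0=\lambda$, for $k=0,1,\dots,n-1$ the sampler, for each $\sigma\in\Sigma$, calls the FPRAS on $(G_{p_k\sigma},n)$ with parameters $(\varepsilon',\delta')$ to get an estimate $\widehat N_\sigma$ of $N(p_k\sigma)$, then chooses $\sigma_{k+1}=\sigma$ with probability $\widehat N_\sigma/(\sum_{\sigma'\in\Sigma}\widehat N_{\sigma'})$ and sets $p_{k+1}=p_k\sigma_{k+1}$; if this normalization is $0$ at some step (in particular if $L_n(G)=\emptyset$), it returns $\bot$. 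After $n$ steps it runs CYK on the length-$n$ string $p_n$ and returns $p_n$ if $p_n\in L(G)$ and $\bot$ otherwise. Here I will use the elementary fact that an $(\varepsilon',\delta')$-approximation is \emph{exact} on instances of count $0$ (the admissible interval is $\{0\}$), so a successful call has $\widehat N_\sigma=0$ precisely when $N(p_k\sigma)=0$.

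\emph{Analysis and where the work is.} I would condition on the event $\mathcal G$ that all $n|\Sigma|$ calls land within their promised factors; a union bound gives $\Pr[\overline{\mathcal G}]\le n|\Sigma|\,\delta'\le\varepsilon/|L_n(G)|$, negligible for what follows. On $\mathcal G$ a short induction shows $N(p_k)>0$ at every step (if $N(p_k)>0$ then some $N(p_k\sigma)>0$, and on $\mathcal G$ exactly those $\sigma$ have $\widehat N_\sigma>0$), hence $p_n\in L_n(G)$, CYK accepts, and $\bot$ is returned only on $\overline{\mathcal G}$ (assuming $L_n(G)\neq\emptyset$, the only nontrivial case), so $\Pr[\mathsf{ret}=\bot]\le\Pr[\overline{\mathcal G}]$ is as small as desired. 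Fixing $w\in L_n(G)$ with successive prefixes $p^\star_0,\dots,p^\star_n=w$, conditioned on $\mathcal G$ the sampler outputs $w$ with probability $\prod_{k=0}^{n-1}\widehat N_{\sigma^\star_{k+1}}/(\sum_{\sigma'}\widehat N_{\sigma'})$; since each numerator and denominator is within a $(1\pm\varepsilon')$ factor of $N(p^\star_{k+1})$ and of $N(p^\star_k)=\sum_{\sigma'}N(p^\star_k\sigma')$ respectively, this probability is, up to a factor between $\big((1-\varepsilon')/(1+\varepsilon')\big)^{n}$ and $\big((1+\varepsilon')/(1-\varepsilon')\big)^{n}$, equal to the telescoping product $\prod_{k}N(p^\star_{k+1})/N(p^\star_k)=N(w)/N(\lambda)=1/|L_n(G)|$, and with $\varepsilon'=\varepsilon/(6n)$ that factor is $1\pm\varepsilon$ (the case $\varepsilon>1$ is trivial). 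Combining with the negligible $\Pr[\overline{\mathcal G}]$ and dividing by $\Pr[\mathsf{ret}\neq\bot]\ge1-\Pr[\overline{\mathcal G}]$ gives $\Pr[\mathsf{ret}=w\mid\mathsf{ret}\neq\bot]\in\frac{1\pm O(\varepsilon)}{|L_n(G)|}$, which is the statement after rescaling $\varepsilon$ by a constant; the runtime is $n|\Sigma|$ FPRAS calls on grammars of size $\mathrm{poly}(|G|,n)$ with parameters $(\varepsilon/6n,\delta')$ plus one CYK run, i.e.\ $\mathrm{poly}(\varepsilon^{-1},|G|,n)$. There is no deep obstacle: this is the standard self-reducibility reduction and all of the difficulty of the corollary is inherited from Theorem~\ref{thm:cfg}. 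The two points that need care are (i) checking that fixing a prefix keeps the instance inside \#CFG with only polynomial blow-up, i.e.\ making the closure under intersection with a regular language effective and quantitative, and (ii) running the approximate counter at accuracy $\Theta(\varepsilon/n)$ so that the $n$-fold product of per-level distortions still lies within $1\pm\varepsilon$ — affordable because the FPRAS is polynomial in $\varepsilon^{-1}$ and polylogarithmic in $\delta^{-1}$.
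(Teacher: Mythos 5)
Your proposal is correct and matches the paper's route: the paper derives this corollary solely by invoking the standard inter-reducibility of approximate counting and almost-uniform sampling for self-reducible problems, given the FPRAS of Theorem~\ref{thm:cfg}, which is exactly the prefix-by-prefix self-reduction you spell out (with the usual accuracy rescaling $\varepsilon/\Theta(n)$ and intersection with $p\Sigma^*$ to stay inside \#CFG). The paper provides no further detail, so your writeup is simply a careful instantiation of the same argument.
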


\begin{corollary}\label{corr:dnnf}
	There is an almost uniform sampler $\calA$ that takes a DNNF circuit $\varphi$ as input and parameter $\varepsilon$, and returns $\mathsf{ret}  \in \mods(\varphi) \cup \{\bot\}$ such that $\Pr[\mathsf{ret} = \sigma \mid \mathsf{ret}  \neq  \bot ] \in  \frac{1 \pm \varepsilon}{|\mods(\varphi)|}$, where $\bot$ is a special symbol allowing the algorithm to {\em fail} with very small probability, which can be bounded by an arbitrary constant (e.g., 0.01).	Furthermore, $\calA$ runs in $\mathrm{poly}(\varepsilon^{-1},  |\varphi|)$.
\end{corollary}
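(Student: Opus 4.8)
\quad The plan is to derive the sampler from the FPRAS of Theorem~\ref{thm:dnnf} via the classical reduction from approximate counting to almost-uniform sampling for self-reducible problems (Jerrum, Valiant and Vazirani), instantiated for DNNF. Two structural properties of DNNF make this work. First, DNNF is closed under \emph{conditioning}: substituting a constant for a variable $x$ in $\varphi$ and simplifying yields a DNNF circuit $\varphi|_{x=b}$ with no more gates than $\varphi$, and $\mathsf{models}(\varphi)$ is the disjoint union of the models of $\varphi|_{x=0}$ and those of $\varphi|_{x=1}$. Second, by decomposability, deciding whether a DNNF circuit has a model and evaluating it on a total assignment both take time linear in the circuit size. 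Fix an order $x_1,\dots,x_n$ on the $n$ variables of $\varphi$ and write $N := |\mathsf{models}(\varphi)|$.

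The algorithm works as follows. If $\varphi$ is unsatisfiable, return $\bot$. Otherwise build a total assignment one variable at a time, maintaining the residual DNNF circuit $\psi$ obtained by conditioning $\varphi$ on the part already fixed, and keeping $\psi$ satisfiable as an invariant. At step $i$, test $\psi|_{x_i=0}$ and $\psi|_{x_i=1}$ for satisfiability; at least one of them is satisfiable. If exactly one is, fix $x_i$ to that value. If both are, call the FPRAS of Theorem~\ref{thm:dnnf} on each with accuracy parameter $\varepsilon' = \Theta(\varepsilon/n)$ and failure probability $\delta' = \varepsilon\,2^{-3n}$, obtain estimates $\widehat{M}_0, \widehat{M}_1$, and set $x_i := 1$ with probability $\widehat{M}_1/(\widehat{M}_0 + \widehat{M}_1)$ and $x_i := 0$ otherwise (breaking a $0/0$ tie arbitrarily). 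Return the assignment obtained after $n$ steps. Since the algorithm only ever descends into a satisfiable residual circuit, its output always lies in $\mathsf{models}(\varphi)$, so $\bot$ is returned precisely when $\varphi$ has no model. The running time is dominated by at most $n$ FPRAS calls, each costing $\mathrm{poly}((\varepsilon')^{-1}, \log(\delta')^{-1}, |\varphi|) = \mathrm{poly}(\varepsilon^{-1}, n, |\varphi|)$, plus $O(n)$ linear-time satisfiability tests; since $n \le |\varphi|$, this is $\mathrm{poly}(\varepsilon^{-1}, |\varphi|)$.

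For correctness, fix $\sigma \in \mathsf{models}(\varphi)$ (the claim is vacuous when $\varphi$ is unsatisfiable). Let $M_0 = N, M_1, \dots, M_n = 1$ be the true model counts of the residual circuits along the path determined by $\sigma$, and let $r_i$ denote the (random) probability with which the algorithm follows $\sigma$'s branch at step $i$, with $r_i = 1$ at the deterministic steps. Because the FPRAS calls at distinct nodes use independent randomness, $\Pr[\mathsf{ret} = \sigma] = \prod_{i=1}^{n} \Ex[r_i]$. At a deterministic step $M_i = M_{i-1}$ and $\Ex[r_i] = 1$. At a randomized step, conditioned on both calls being $\varepsilon'$-accurate one has $r_i \in (1 \pm 3\varepsilon')\,M_i/M_{i-1}$; the complementary event has probability at most $2\delta'$ and contributes an additive error at most $2\delta' \le (\varepsilon/n)\,2^{-n} \le (\varepsilon/n)\cdot M_i/M_{i-1}$, using $M_i \ge 1$ and $M_{i-1} \le 2^{n}$. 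Hence $\Ex[r_i]$ equals $M_i/M_{i-1}$ up to a factor $1 \pm O(\varepsilon/n)$ at every step, so telescoping over the $\le n$ steps, with the hidden constant chosen small enough, gives $\Pr[\mathsf{ret} = \sigma] \in (1 \pm \varepsilon)\prod_i M_i/M_{i-1} = (1 \pm \varepsilon)/N$. As $\Pr[\mathsf{ret} \neq \bot] = 1$, this is exactly the desired bound $\Pr[\mathsf{ret} = \sigma \mid \mathsf{ret} \neq \bot] \in \frac{1 \pm \varepsilon}{|\mathsf{models}(\varphi)|}$.

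The one genuinely delicate point is the error bookkeeping in the previous paragraph: the per-call failure probability has to be pushed down to roughly $2^{-\Theta(n)}$ — which is affordable only because the FPRAS runtime depends logarithmically on it — so that the \emph{additive} slack it contributes is dominated by the smallest possible \emph{multiplicative} branch probability $M_i/M_{i-1} \ge 2^{-n}$, after which one merely controls a product of $n$ factors each of the form $1 \pm \Theta(\varepsilon/n)$. Everything else is the routine self-reducibility argument, and in particular there is no need to make the overall distribution exactly uniform. The companion statement for CFG (Corollary~1) follows the same template: Theorem~\ref{thm:cfg} supplies the count estimates, and the residual grammars encountered along a word being sampled are the (polynomial-time computable) left quotients of the language by single letters.
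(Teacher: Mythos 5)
Your proof is correct and is essentially the argument the paper intends: the paper obtains this corollary by simply invoking the standard inter-reducibility of approximate counting and almost-uniform sampling for self-reducible problems, and your write-up is exactly that JVV-style reduction instantiated for DNNF via conditioning, with the right per-step accuracy $\Theta(\varepsilon/n)$ and exponentially small per-call failure probability so the additive failure term is dominated by the smallest branch ratio. One housekeeping detail worth adding is that after conditioning you should count models over the full set of remaining variables (e.g., re-smooth with respect to that set, or multiply the estimate by $2$ raised to the number of variables that no longer occur), so that the branch ratio $\widehat{M}_1/(\widehat{M}_0+\widehat{M}_1)$ is not skewed when the two residual circuits mention different variables.
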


\subsection{Notations and Definitions.}\label{sec:definitions}

\subsubsection{$\Delta$ Programs.}

We will use notations introduced in~\cite{GoreJKSM97}. Let $\Delta = (\Omega, I, O)$ be an algebra on the underlying set $\Omega$, with a distinguished subset $I \subset \Omega$ of primitive elements (also referred to as inputs/variables) and operators $O$. A $\Delta$ program is a sequence $\calP = (q_i \in \Omega: 0 \leq i \leq |\calP|-1)$ ($|\calP|$ represents the size of $\calP$) such that for all $0 \leq i \leq |\calP|-1$, either $q_i \in I$ or $q_i = q_j \circ q_k$ where $i < j$ and $i < k$ and $\circ \in O$. We refer to each $q_i$ as a node, and when $q_i = q_j \circ q_k$, we call $q_j$ and $q_k$ the children of $q_i$, denoted by the notation $\prev(q_i)$. The \emph{height} of a node is its distance to an input node. Input nodes have height 0; if $q_i = q_j \circ q_k$, then the height of $q_i$ is one plus the maximum of the height of $q_j$ and $q_k$. The depth of $\calP$ is the maximum height of any $q_i$. We say that $\calP$ computes $f \in \Omega$ if $f = q_{k}$ where $k = |\calP|-1$ and $q_{k}$ is also referred to as the root node of $\calP$. 

\subsubsection{Context-Free Grammars.}

A \emph{context-free grammar} (CFG) is a tuple $G = (V, \Sigma, R, S)$ where:
\begin{itemize}
	\item $V$ is a finite set of nonterminal symbols,
	\item $\Sigma$ is a finite set of terminal symbols (the alphabet), with $V \cap \Sigma = \emptyset$,
	\item $R \subseteq V \times (V \cup \Sigma)^*$ is a finite set of production rules,
	\item $S \in V$ is the start symbol.
\end{itemize}
We write production rules $(A, \alpha) \in R$ as $A \to \alpha$. A grammar is in \emph{Chomsky Normal Form}  if every production rule is of the form $A \to BC$ or $A \to a$, where $A, B, C \in V$ and $a \in \Sigma$.

For a grammar $G$ and integer $n \geq 0$, we define $L_n(G)$ as the language of all strings derivable from the start symbol $S$ in exactly $n$ derivation steps. The following proposition, due to~\cite{GoreJKSM97},  shows that we can represent $L_n(G)$ using a $(\cup, \cdot)$ program over the algebra $(2^{\Sigma^*}, \Sigma, \{\cup, \cdot\})$, where $2^{\Sigma^*}$ is the set of all languages over $\Sigma$, and $\cup$ and $\cdot$ denote union and concatenation of languages. In addition, the $(\cup, \cdot)$ program is \emph{homogeneous}, that is, every node represents a language of fixed-length words. We state the proposition and provide a formal proof in Appendix for completneess.

\begin{restatable}{proposition}{cnfunion}
	\label{prop:cnf-to-union-concat}
	For any context-free grammar $G=(V, \Sigma, R, S)$ in Chomsky Normal Form, there exists a $(\cup, \cdot)$ homogeneous program $\mathcal{P}$ over the algebra $(2^{\Sigma^*}, \Sigma, \{\cup, \cdot\})$ such that $\mathcal{P}$ represents $L_n(G)$ for any $n \geq 0$.
\end{restatable}

\subsubsection{DNNF Circuits.}
We now turn our attention to DNNF circuits, which are refinements of circuits in Negation Normal Form (NNF). Given a set of Boolean variables $X = \{x_1, x_2, \ldots, x_n\}$, let $\hat{X}$ be the set of literals (variables and their negations) of $X$, i.e., $\hat{X} = \{x_1, \neg x_1, x_2, \neg x_2, \ldots, x_n, \neg x_n\}$. An NNF circuit can be viewed as a $(\vee,\wedge)$ program defined over the algebra $(2^{\hat{X}}, \hat{X}, (\vee,\wedge))$. Let $\var(q)$ be the set of variables appearing {\em below} $q$ (defined recursively), wherein $\var(x_i) = \var(\neg x_i) = \{x_i\}$. An NNF circuit is decomposable (DNNF) if it holds that for every $\wedge$ node $q_i = q_j \wedge q_k$, we have $\var(q_j) \cap \var(q_k) = \emptyset$. A DNNF circuit is smooth if it holds that for every $\vee$ node $q_i = q_j \vee q_k$, we have $\var(q_j) = \var(q_k)$. Every DNNF circuit can be transformed into a smooth DNNF circuit in polynomial time, and therefore, we will assume DNNF circuits are smooth. Given a DNNF circuit, let $\mods(\varphi)$ represent the set of satisfying assignments of $\varphi$, i.e., assignments to variables for which $\varphi$ evaluates to $1$.

\subsubsection{$(+,\times)$ Programs.}

Our core technical analysis relies on  $(+,\times)$ programs, which are defined over the algebra $\Delta = ([X], X, $ $ \{+,\times\})$ where $X$ is a finite set of variables and $[X]$ is the semiring of polynomials with integer non-negative coefficients. Therefore, a $(+,\times)$ program $\calP$ computes a (multivariate) polynomial with non-negative integer coefficients. We will refer to every node of the program by the corresponding operation: input nodes have no children and are labeled by variables, $\times$ nodes have two children and are labeled by $\times$, and $+$ nodes have an unbounded number of children and are labeled by $+$. The children of a node are ordered, so $\children(q)$ is not a set but a sequence of nodes. We will assume a total ordering $\prec$ on the nodes of the program and  that for every node, the ordering of its children is consistent with $\prec$.

The degree of $q$, denoted as $\deg(q)$, is defined inductively: if $q$ is an input node, then $\deg(q) = 1$; if $q_i = q_j + q_k$, then $\deg(q_i) = \max(\deg(q_j), \deg(q_k))$; and if $q_i = q_j \times q_k$, then $\deg(q_i) = \deg(q_j) + \deg(q_k)$. For a $(+,\times)$ program and $q$ a node in $P$, the \emph{support} of $q$, denoted by $\supp(q)$, is the set of monomials of the multilinear polynomial represented by $q$.
A $(+,\times)$ program is \emph{homogeneous} when every $+$ node $q_i = q_j + q_k$ satisfies $\deg(q_j) = \deg(q_k)$. A \emph{multilinear $(+,\times)$ program} is a $(+,\times)$ program whose nodes all compute a multilinear polynomial. Figure~\ref{fig:running_example} represents a multilinear homogeneous $(+,\times)$ program and will be used as a running example. We shall assume the node ordering $q_0 \prec q_1 \prec q_2 \prec \dots \prec q_{19}$.

\begin{fact}\label{fact:product_are_decomposable}
	Every $\times$ node $q = q_1 \times q_2$ in a multilinear $(+,\times)$ program is such that $\var(q_1) \cap \var(q_2) = \emptyset$ and $\supp(q) = \supp(q_1) \otimes \supp(q_2)$ where $\otimes$ is the cross product.
\end{fact}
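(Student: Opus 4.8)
The plan is to exploit the special structure of $(+,\times)$ programs: since the only operations are $+$ and $\times$ and the leaves are the variables themselves, every node computes a polynomial with \emph{nonnegative} integer coefficients, so no monomial can ever cancel. The first step is therefore to record, by a short induction on the structure of the program, that for every node $r$ the polynomial $p_r$ it computes is nonzero and has nonnegative coefficients (a variable has coefficient $1$; and a sum or a product of nonzero polynomials with nonnegative coefficients is again a nonzero polynomial with nonnegative coefficients).

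The second step is the bridge between syntax and semantics: for every node $r$, the set $\var(r)$ of variables occurring syntactically below $r$ equals the set of variables that occur with positive degree in some monomial of $\supp(r)$. One inclusion is immediate. For the other I would induct on structure: if $x$ occurs below some child $r'$ of $r$ then by the induction hypothesis there is $m \in \supp(r')$ with $x \mid m$; if $r$ is a $+$ node then $m$ still appears in $p_r$ by nonnegativity, and if $r = r' \times r''$ then, picking any $m'' \in \supp(r'')$ (which exists since $p_{r''} \neq 0$), the monomial $m\,m''$ appears in $p_r = p_{r'}\,p_{r''}$ and is still divisible by $x$.

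Given this, the disjointness claim for $q = q_1 \times q_2$ follows by contradiction: if some $x \in \var(q_1) \cap \var(q_2)$, choose $m_1 \in \supp(q_1)$ and $m_2 \in \supp(q_2)$ with $x \mid m_1$ and $x \mid m_2$; then $m_1 m_2$ occurs with a strictly positive coefficient in $p_q = p_{q_1} p_{q_2}$, and $x^2 \mid m_1 m_2$, so $p_q$ is not multilinear, contradicting the hypothesis that every node of the program --- in particular $q$ --- computes a multilinear polynomial. Finally, once $\var(q_1) \cap \var(q_2) = \emptyset$, every monomial of $p_q = p_{q_1} p_{q_2}$ factors \emph{uniquely} into a monomial over $\var(q_1)$ times a monomial over $\var(q_2)$, so $(m_1,m_2) \mapsto m_1 m_2$ is a bijection from $\supp(q_1) \times \supp(q_2)$ onto $\supp(q)$ --- nonnegativity again guaranteeing that each such product actually survives in $p_q$ and that distinct pairs do not collide --- which is exactly the assertion $\supp(q) = \supp(q_1) \otimes \supp(q_2)$.

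The only delicate point, and hence the main (if mild) obstacle, is the syntax/semantics bridge of the second step: a priori a variable occurring below a node could cancel out of the computed polynomial, which would sink the argument. This is precisely what the nonnegativity-of-coefficients observation of the first step rules out, so the whole proof is organized around isolating that observation up front.
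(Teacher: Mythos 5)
Your proof is correct. The paper states this as a Fact without any proof, treating it as immediate, and your argument -- nonnegativity of coefficients ruling out cancellation, the bridge showing every variable in $\var(q_i)$ actually survives into some monomial of $\supp(q_i)$, and then multilinearity of the polynomial at $q$ forcing $\var(q_1) \cap \var(q_2) = \emptyset$ and the unique factorization giving $\supp(q) = \supp(q_1) \otimes \supp(q_2)$ -- is precisely the routine verification the paper leaves implicit.
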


While the standard definitions of $\Delta$ programs restrict the {\em fan-in} of $+$ nodes to two, we will work with relaxations of $(+,\times)$ programs wherein the $+$ nodes are allowed to have unbounded children, and the $\times$ nodes have a fan-in of two. Accordingly, we will assume that there is no $+$ node that has another $+$ node has a child, since we allow $+$ nodes to have unbounded children.

Note that the number of edges in $\calP$ is $O(|\calP|^2)$. For $q$ a node in a $(+,\times)$ program $\calP$, we denote by $\var(q)$ the set of variables appearing below $q$, that is, labeling nodes that are descendants of $q$.

For the convenience of our analysis, we want to focus only on the nodes $q$ for which $|\supp(q)| > 16n|\calP|^2$, and therefore, we define the notion of {\em effective height}. We say that the effective height of all nodes $q$ such that $|\supp(q)| \leq 16n|\calP|^2$ is zero. Furthermore, the effective height of every other node is one plus the maximum of the effective height of its children. We denote by $Q^i$ (resp. $Q^{\leq i}$, resp. $Q^{> i}$) the set of nodes of $\calP$ of effective height $i$ (resp. $\leq i$, resp. $> i$). For obvious reasons, the running example of Figure~\ref{fig:running_example} is small and has only nodes of effective height $0$.

\begin{figure}
\centering
\begin{tikzpicture}[yscale=1.9,xscale=0.7]
	\def\sep{1};
	\def\op{40};
	\def\dis{-5pt}
	\def\labelcolor{cyan}
	\node[label={[label distance=\dis]-90:\footnotesize{$\textcolor{\labelcolor}{q_{11}}$}}] (v) at (-1,0) {$x_1$};
	\node[label={[label distance=\dis]-90:\footnotesize{$\textcolor{\labelcolor}{q_{12}}$}}] (y) at (0,0) {$x_2$};
	\node[label={[label distance=\dis]-90:\footnotesize{$\textcolor{\labelcolor}{q_{13}}$}}] (x) at (1,0) {$x_3$};
	\node[label={[label distance=\dis]-90:\footnotesize{$\textcolor{\labelcolor}{q_{14}}$}}] (nx) at (2,0) {$x_4$};
	\node[label={[label distance=\dis]-90:\footnotesize{$\textcolor{\labelcolor}{q_{15}}$}}] (ny) at (3,0) {$x_5$};
	\node[label={[label distance=\dis]-90:\footnotesize{$\textcolor{\labelcolor}{q_{16}}$}}] (nz) at (4,0) {$x_6$};
	\node[label={[label distance=\dis]-90:\footnotesize{$\textcolor{\labelcolor}{q_{17}}$}}] (nw) at (5,0) {$x_7$};
	\node[label={[label distance=\dis]-90:\footnotesize{$\textcolor{\labelcolor}{q_{18}}$}}] (w) at (6,0) {$x_8$};
	\node[label={[label distance=\dis]-90:\footnotesize{$\textcolor{\labelcolor}{q_{19}}$}}] (z) at (7,0) {$x_9$};

	\node[label={[label distance=\dis]180:\footnotesize{$\textcolor{\labelcolor}{q_7}$}}] (o0) at (-0.5,0.5) {$+$};
	\node[label={[label distance=\dis]180:\footnotesize{$\textcolor{\labelcolor}{q_8}$}}] (o1) at (1.5,0.5) {$+$};

	\node[label={[label distance=\dis]180:\footnotesize{$\textcolor{\labelcolor}{q_4}$}}] (a1) at (0,1) {$\times$};
	\node[label={[label distance=\dis]180:\footnotesize{$\textcolor{\labelcolor}{q_5}$}}] (a2) at (2,1) {$\times$};
	\node[label={[label distance=\dis]180:\footnotesize{$\textcolor{\labelcolor}{q_9}$}}] (a4) at (4.5,0.5) {$\times$};
	\node[label={[label distance=\dis]180:\footnotesize{$\textcolor{\labelcolor}{q_{10}}$}}] (a5) at (6.5,0.5) {$\times$};

	\node[label={[label distance=\dis]180:\footnotesize{$\textcolor{\labelcolor}{q_3}$}}] (o2) at (0.5,1.5) {$+$};
	\node[label={[label distance=\dis]180:\footnotesize{$\textcolor{\labelcolor}{q_6}$}}] (o3) at (5.55,1) {$+$};

	\node[label={[label distance=\dis]180:\footnotesize{$\textcolor{\labelcolor}{q_1}$}}] (a6) at (1,2) {$\times$};
	\node[label={[label distance=\dis]180:\footnotesize{$\textcolor{\labelcolor}{q_2}$}}] (a7) at (5.5,2) {$\times$};

	\node[label={[label distance=\dis]180:\footnotesize{$\textcolor{\labelcolor}{q_0}$}}]  (o4) at (3.5,2.5) {$+$};
	
	\node (f) at (-1,2.9) {$ $};

	\draw (x) -- (o1) -- (nx);
	\draw (x) -- (a1) -- (o0);
	\draw (v) -- (o0);
	\draw (a1) -- (o0) -- (y);
	\draw (o1) -- (a2) -- (ny);
	\draw (nz) -- (a4) -- (nw);
	\draw (z) -- (a5) -- (w);
	\draw (a4) -- (o3) -- (a5);
	\draw (a1) -- (o2) -- (a2);
	\draw (a2) -- (a7) -- (a5);
	\draw (o2) -- (a6) -- (o3);
	\draw (a6) -- (o4) -- (a7);
	
\end{tikzpicture}
\caption{A $(+,\times)$ program}\label{fig:running_example}
\end{figure}

\subsection{Results.}

We now justify our focus on $(+,\times)$ programs by pointing out the reduction from DNNF circuits and $(\cup, \cdot)$ programs to multilinear $(+,\times)$ programs. We begin with the following lemma, due to~\cite{GoreJKSM97}, concerning $(+,\times)$ programs and $(\cup, \cdot)$ programs. 

\begin{lemma}[\cite{GoreJKSM97}]\label{lem:cfgtoplustimes}
	Suppose there is a $(\cup, \cdot)$ program $\Pi$ computing language $\calL  \subseteq \Sigma^n$. Then $\Pi$ can be transformed in polynomial-time into a homogeneous multilinear $(+, \times)$ program $\calP$ of size $n|\Pi|$ and degree $n$ such that $\supp(\calP)$ has a one-to-one mapping with words in $\calL$, and therefore, $|\supp(\calP)| = |\calL|$.
\end{lemma}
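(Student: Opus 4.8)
The plan is to mimic the standard dynamic-programming / CYK-style decomposition of a word by its length, carried out directly on the program $\Pi$ rather than on the grammar. For each node $p$ of $\Pi$ (which computes some language $L_p \subseteq \Sigma^*$) and each length $0 \le \ell \le n$, I will introduce a node $p^{(\ell)}$ in the new program $\pazocal{P}$ that is intended to compute a polynomial whose support encodes exactly the words of $L_p$ of length $\ell$. Concretely, the intended invariant is: the monomials of $p^{(\ell)}$ are in one-to-one correspondence with the words $w \in L_p \cap \Sigma^\ell$, where a word $w = a_1 a_2 \cdots a_\ell$ corresponds to a monomial in variables indexed by (position, letter) pairs — i.e.\ the variable set is $X = \{x_{j,a} : 1 \le j \le n,\ a \in \Sigma\}$, and $w$ maps to $\prod_{j=1}^{\ell} x_{j,a_j}$. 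Tracking the position index $j$ is what makes the map injective and the program multilinear: distinct words of the same length give distinct squarefree monomials, and no variable is ever repeated because each position is used at most once.

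The construction proceeds by induction on the structure of $\Pi$. If $p$ is an input node labeled by a letter $a \in \Sigma$, then $L_p = \{a\}$: set $p^{(1)}$ to be the input node $x_{1,a}$ and let $p^{(\ell)}$ for $\ell \neq 1$ be an empty sum (the zero polynomial), which we can implement as a $+$ node with no children or simply omit and treat as absent. If $p = p_1 \cup p_2$, set $p^{(\ell)} = p_1^{(\ell)} + p_2^{(\ell)}$. If $p = p_1 \cdot p_2$, then a word of length $\ell$ in $L_p$ splits uniquely as a prefix from $L_{p_1}$ and a suffix from $L_{p_2}$ whose lengths sum to $\ell$; but to keep things multilinear I must re-index the suffix so its positions start at the right offset. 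So I introduce a ``shift'' operation: for each node $r$ and each offset $m$, a copy $r^{(\ell),\text{shift }m}$ computing the same polynomial but with every variable $x_{j,a}$ replaced by $x_{j+m,a}$ — this is again built by induction on $r$ and costs another factor of $n$ in the worst case, so one must be slightly careful, or alternatively carry the offset as a second parameter from the start (nodes $p^{(m,\ell)}$ meaning ``words of $L_p$ of length $\ell$, laid out starting at position $m+1$''). Then set $p^{(m,\ell)} = \sum_{\ell_1 + \ell_2 = \ell} p_1^{(m,\ell_1)} \times p_2^{(m+\ell_1,\ell_2)}$. Decomposability of each $\times$ node is immediate because the two factors use position-index ranges $\{m+1,\dots,m+\ell_1\}$ and $\{m+\ell_1+1,\dots,m+\ell_2\}$ which are disjoint; multilinearity then follows from Fact~\ref{fact:product_are_decomposable} together with the base case, and homogeneity holds since $\deg(p^{(m,\ell)}) = \ell$ by construction and every $+$ node sums terms of a common length $\ell$. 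Finally, the output node of $\pazocal{P}$ is the output node of $\Pi$ with parameters $m = 0$, $\ell = n$, and one checks $\deg(\pazocal{P}) = n$ and $|\pazocal{P}| = O(n|\Pi|)$ (at most one node per pair (node of $\Pi$, length), with the offset $m$ determined along each path, or absorbed into another $n$ factor which still gives size $\mathrm{poly}(n,C)$; the statement's bound $nC$ is achieved by being careful to reuse nodes).

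The main obstacle is exactly this bookkeeping of the position offset: the naive ``shift by $m$'' copies would blow the size up by an extra factor of $n$ (giving $n^2 C$ rather than $nC$), so to hit the claimed size bound one wants to observe that the offset of a node is actually determined by the context in which it appears and can be threaded through without duplicating subprograms — e.g.\ by processing $\Pi$ in a way that computes, for each program node, the multiset of lengths at which it is ``called'' and sharing the length-$\ell$ copy across all calls with that length, letting the offset be applied lazily at $\times$ nodes only. Verifying the one-to-one correspondence between $\supp(\pazocal{P})$ and $\pazocal{L}$ is then a routine induction: the base case is clear, the $\cup$ case is a disjoint union of monomial sets (disjoint because the corresponding words are distinct, and equal words would give equal monomials — so even if $L_{p_1}$ and $L_{p_2}$ overlap the union of supports correctly represents the union of languages, with coefficients possibly exceeding $1$ but supports still in bijection with the set union, which is all the lemma claims), and the $\cdot$ case uses Fact~\ref{fact:product_are_decomposable} to say $\supp(p^{(m,\ell_1)} \times p^{(m+\ell_1,\ell_2)}) = \supp(p^{(m,\ell_1)}) \otimes \supp(p^{(m+\ell_1,\ell_2)})$ together with the fact that concatenation $\Sigma^{\ell_1} \times \Sigma^{\ell_2} \to \Sigma^{\ell_1+\ell_2}$ is injective, so the cross product of monomial sets matches the concatenation of word sets.
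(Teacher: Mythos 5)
A preliminary remark: the paper itself does not prove this lemma --- it is imported from the cited reference --- so your proposal can only be judged against the statement, not against an in-paper argument. Your overall route is the right one: positional variables $x_{j,a}$, one copy of each node of $\Pi$ per (offset, length), $\cup\to +$, $\cdot\to\times$ with the offset of the right factor advanced by the length of the left factor. Your verification of decomposability (disjoint position ranges), multilinearity, homogeneity, and of the support bijection --- including the correct observation that overlapping unions only inflate coefficients, never merge or cancel support elements --- is sound, and the resulting program has size $\mathrm{poly}(n,C)$, which is all that the downstream use in Theorems~\ref{thm:cfg} and~\ref{thm:main} actually requires.

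The genuine gap is the claimed size bound $nC$, and the device you propose to reach it does not work. The offset of a node of $\Pi$ is \emph{not} determined by its context in any usable sense: the same subprogram can occur as a factor at several different offsets (already for $L\cdot L$), and a $(+,\times)$ program has no shift or substitution operator, so a single ``length-$\ell$ copy'' cannot be shared across offsets with the offset ``applied lazily at $\times$ nodes'' --- the monomials computed at a node are concrete, so distinct offsets force distinct copies. The observation you are missing is the dual one: it is the \emph{length}, not the offset, that is determined by the node. Since $\pazocal{L}\subseteq\Sigma^n$, after deleting the nodes of $\Pi$ whose language is empty (detectable bottom-up) and keeping only nodes used by the root, every remaining node computes words of a single length: top-down, a nonempty union contained in $\Sigma^m$ has both arguments contained in $\Sigma^m$, and a nonempty concatenation contained in $\Sigma^m$ forces both factors to be length-homogeneous (two lengths $\ell\neq\ell'$ on the left together with any word of length $k$ on the right would give $\ell+k=n=\ell'+k$). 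Hence your parameter $\ell$ is redundant, the sum over splits $\ell_1+\ell_2=\ell$ at $\cdot$ nodes disappears, each copy of a $\cdot$ node is a single $\times$ node $(q_1,m)\times(q_2,\,m+\ell(q_1))$, each node of $\Pi$ needs at most $n$ offset copies, and the bound $nC$ with degree $n$ follows, with homogeneity, multilinearity and the support bijection exactly as you argued. As written, your construction pays an extra factor of $n$ (two parameters, plus $\Theta(n)$ fresh $\times$ nodes per copy of a $\cdot$ node), so it proves a weaker version of the lemma; that weaker version would still suffice for the paper's main results, but the stated $nC$ bound is not established by your argument, and the ``lazy offset'' fix should be replaced by the length-homogeneity argument above.
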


The transformation by ~\cite{GoreJKSM97} also ensures that the mapping between ${\calL}$ and $\supp(\calP)$ is polynomial-time computable, which enables design of almost uniform sampler (Corollary~\ref{corr:cfg}). 

Now, it is easy to see smooth DNNF circuits can be transformed into $(+,\times)$ programs by having a variable assigned for every literal, i.e., $x_i$ and $\neg x_i$ are mapped to different variables, and by replacing $\land$ nodes by $\times$ nodes, and $\lor$ nodes by $+$ nodes. We formalize this observation in the form of the following proposition.

\begin{lemma}\label{lemma:dnnftoplustimes}
	Every smooth DNNF circuit $\varphi$ can be transformed in linear-time into a homogeneous multilinear $(+,\times)$ program $\calP$ such that there is a one-to-one mapping between $\mods(\varphi)$ and $\supp(\calP)$, and therefore, $|\mods(\varphi)| = |\supp(\calP)|$. 
\end{lemma}

Furthermore, the mapping between $\mods(\varphi)$ and $\supp(\calP)$ is linear-time computable, which enables design of almost uniform sampler (Corollary~\ref{corr:dnnf}).

Our main result requires $(+,\times)$ programs to have depth $d \leq 3 \lceil \log n \rceil$, with $n$ the number of variables. Fortunately, there is a polynomial-time algorithm for the depth reduction of $(+,\times)$ programs which is due to~\cite{ValiantSBR83,GoreJKSM97}\footnote{The
	result in~\cite{ValiantSBR83} was stated for binary $+$ operations, which was
	subsequently improved by~\cite{GoreJKSM97} to handle the case when $+$ nodes have
	an unbounded number of children.}.

\begin{lemma}[\cite{ValiantSBR83,GoreJKSM97}]\label{lemma:depth_reduction}
	There is an algorithm $\calA$ that transforms every $n$-variables homogeneous multilinear $(+,\times)$ program $\calP$ into a equivalent homogeneous multilinear $(+,\times)$ program 
	of size $O(|\calP|^2)$ and depth at most $3\lceil \log(n) \rceil$. Furthermore, $\calA$ has runtime $O(|\calP|^3)$. 
\end{lemma}

In addition, we may assume that the $(+,\times)$ programs after depth reduction are such that no $+$ node has a $+$ child. This is not a problem since no bound is imposed on the number of children of $+$ nodes. We are now ready to state the main technical result.

\begin{restatable}{theorem}{mainResult}\label{thm:main}
	There is an algorithm that takes as input a multilinear and homogeneous $(+,\times)$ program $\calP$ of depth at most $3 \lceil \log n \rceil$, $n$ the number of variables, and returns an estimate $\mathsf{est}$ such that $\Pr\left[ \mathsf{est} \in (1 \pm \varepsilon)|\supp(\calP)|\right] \geq 1 - \delta$, and runs in time $O(n^{15}|\calP|^{14}\varepsilon^{-4}\log(\delta^{-1}))$.
\end{restatable}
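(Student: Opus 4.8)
The plan is to design a recursive, bottom-up estimator that mimics the Karp--Luby / GJKSM self-reducibility approach, but with a more careful variance analysis organized by the \emph{effective height} stratification already introduced in the excerpt. For each node $q$ of the program $\calP$, the goal is to maintain a random set $\calS(q)$ of uniform (or near-uniform) samples from $\supp(q)$ together with an unbiased estimator $\widehat{N}(q)$ of $|\supp(q)|$. At input nodes and, more importantly, at all nodes of effective height $0$ (those with $|\supp(q)| \le 16n|\calP|^2$), we can afford to compute $\supp(q)$ exactly by brute force, since its size is polynomially bounded; this is the base case. For a $\times$ node $q = q_1 \times q_2$, Fact~\ref{fact:product_are_decomposable} tells us $\supp(q) = \supp(q_1)\otimes\supp(q_2)$ with disjoint variable sets, so $|\supp(q)| = |\supp(q_1)|\cdot|\supp(q_2)|$ and a uniform sample is just the concatenation of independent uniform samples from the children --- this step is essentially exact modulo the errors inherited from the children. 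The real work is at $+$ nodes $q = q_1 + \dots + q_m$ (recall $+$ nodes have unbounded fan-in and no $+$-child), where $\supp(q) = \bigcup_i \supp(q_i)$ is a union of possibly overlapping sets, and we must run a coverage/union estimator à la Karp--Luby: sample a child $q_i$ with probability proportional to the current estimate $\widehat{N}(q_i)$, draw a uniform sample $\sigma$ from $\supp(q_i)$, and accept it with probability $1/|\{j : \sigma \in \supp(q_j)\}|$; the subtlety is that testing membership $\sigma \in \supp(q_j)$ for a monomial $\sigma$ must be done via the program structure, which is feasible for multilinear $(+,\times)$ programs (a monomial is in $\supp(q_j)$ iff it can be ``traced'' through the DAG, checkable in polynomial time).

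The key quantitative steps, in order, are: (1) Fix the stratification $Q^0, Q^1, \dots, Q^{d}$ by effective height; note $d \le \log_2|\supp(\calP)| \le |\calP|$ is polynomially bounded, so there are only polynomially many layers. (2) Show, by induction on effective height, that if every child of $q$ carries an estimator with relative error at most $\gamma_{i}$ and failure probability at most $\eta_i$ using a sample pool of size $t$, then $q$ (at effective height $i+1$) carries one with relative error $\gamma_{i+1} \le \gamma_i(1+\alpha/t)$ and failure probability $\eta_{i+1} \le \eta_i + \beta/t$ for appropriate $\mathrm{poly}(|\calP|,n)$ constants $\alpha,\beta$ --- the crucial point being that the per-layer error \emph{blowup} is multiplicative-additive and small, so that after $d = \mathrm{poly}$ layers the total relative error telescopes to $\le \varepsilon$ provided $t = \mathrm{poly}(|\calP|, n, \varepsilon^{-1})$. (3) Handle the confidence $\delta$ by the standard median-of-means trick (the $\median$ macro is already reserved): run $O(\log\delta^{-1})$ independent copies of the whole recursion and take the median estimate, boosting a constant success probability to $1-\delta$ with only a $\log\delta^{-1}$ factor overhead. (4) Collect the running time: $O(|\calP|)$ nodes, each processed with a sample pool of size $t$, each sample requiring $\mathrm{poly}(|\calP|)$ membership traces through the DAG, times $O(\log\delta^{-1})$ repetitions, giving the claimed $\mathrm{poly}(\varepsilon^{-1},\log\delta^{-1},|\calP|)$ bound.

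The main obstacle --- and the reason prior work only achieved quasi-polynomial time --- is controlling the \emph{variance amplification through the $+$ nodes across many layers}. A naive Karp--Luby union estimator at a $+$ node with $m$ children has variance proportional to $m$ (the number of sets in the union), and if this factor compounds at every one of the $\Theta(\log|\supp(\calP)|)$ layers one gets a $\mathrm{poly}^{\,\log}$, i.e.\ quasi-polynomial, sample requirement. The heart of the argument must therefore be a proof that the variance does \emph{not} compound this way: the effective-height trick ensures every node we actually estimate has $|\supp(q)| > 16n|\calP|^2$, which is large enough that a single pass of the coverage estimator with a polynomial number of samples already achieves, say, relative variance $O(1/|\calP|)$ \emph{per layer}, and then the layer-by-layer errors add up rather than multiply out of control. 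Making this precise --- in particular, arguing that the ``dilution'' of good samples as they propagate upward through unions stays bounded, likely by charging each rejected sample against the structure of the DAG and using that there are only $O(|\calP|^2)$ edges --- is where essentially all the difficulty lies; the $\times$-node and base-case analyses, and the median-of-means wrapper, are routine by comparison.
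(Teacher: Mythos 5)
Your proposal correctly identifies the target structure (bottom-up processing, brute-force enumeration at effective height $0$, exact product at $\times$ nodes, a union estimator at $+$ nodes, median amplification for $\delta$), but it leaves unproven exactly the step that is the entire content of the paper, and the way you set it up would not go through. Your step (2) asserts a per-layer error recursion $\gamma_{i+1}\le\gamma_i(1+\alpha/t)$, $\eta_{i+1}\le\eta_i+\beta/t$ for a pool of $t$ samples per node, with errors telescoping over polynomially many layers. But to run your Karp--Luby step at a $+$ node you need (near-)uniform, essentially independent samples of $\supp(q_i)$ for each child; if these are drawn fresh by recursing into the children, the number of samples needed compounds multiplicatively with depth, which is precisely the mechanism that made the GJKSM algorithm quasi-polynomial, and if instead the children's sample pools are reused by all their parents (as they must be to stay polynomial), then the samples arriving at a node are not independent --- not even pairwise --- and your claimed variance/failure recursion has no justification. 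The paper's actual proof embraces this dependence: it reuses one family of sample sets $S^r(q)$ across the whole DAG, bounds $\Pr[\alpha\in S^r(q)\mid\alpha'\in S^r(q)]$ in terms of the overlap $\lcsn(\tree(\alpha,q),\tree(\alpha',q))$ of derivation trees (Lemma~\ref{lemma:proba_second_order}), counts the pairs with a given overlap via the antichain bound (Lemma~\ref{lemma:antichain_lemma}), and only then obtains $\Va\le\mu+2\mu^2$, which is what makes a median-of-means estimate of $16n/|\supp(q)|$ work with $\mathrm{poly}$ samples. None of this machinery, nor any substitute for it, appears in your sketch; you explicitly defer it (``where essentially all the difficulty lies''), so the proposal is a plan rather than a proof.

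Two further concrete gaps. First, the quantities playing the role of your $\widehat N(q)$ (the paper's $p(q)$) are themselves random variables, so statements like ``$\Pr[\alpha\in S^r(q)]=p(q)$'' are ill-formed; the paper needs a coupling with an idealized random process indexed by \emph{histories} (Fact~\ref{fact:not_so_black_magic}) to make the conditional analysis legitimate, and your induction on effective height silently conditions on the children's estimates in a way that this coupling is designed to repair. Second, your claim that error accumulates additively over up to $|\calP|$ layers is unsupported: in the paper the relative error budget doubles with each level of effective height (the $e^{\pm\kappa 2^i}$ intervals, since at a $\times$ node the children's errors add), which is why the algorithm must first apply the Valiant et al.\ depth reduction to depth $3\lceil\log n\rceil$ (Lemma~\ref{lemma:depth_reduction}) and set $\kappa=\varepsilon/(4n^3)$; your outline omits depth reduction entirely and gives no alternative argument that would keep the error polynomial over an unreduced program of possibly linear depth.
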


Combining Theorem~\ref{thm:main} with Lemma~\ref{lem:cfgtoplustimes} immediately implies Theorem~\ref{thm:cfg}. Similarly, combining Theorem~\ref{thm:main} with Lemma~\ref{lemma:dnnftoplustimes} implies Theorem~\ref{thm:dnnf}.

\subsection{Technical Overview. } 

The high-level idea of our algorithm is conceptually simple. Let $n = \deg(\calP)$. The nodes of $\calP$ are visited in bottom-up order. For every node $q$, we compute or construct several objects:

\begin{itemize}
	\item A rational number $p(q)$ that estimates $\frac{16n}{|\supp(q)|}$. 
	\item Sets of samples $S^1(q), \dots, S^{\nSamples}(q)$, where $\nSamples$ is fixed. The number $\nSamples$ of sample sets per node is a product $\nSamples = n_s \cdot n_t$, with $n_s$ and $n_t$ polynomial in the input size (see Section~\ref{sec:algorithm} for exact values). Each $S^r(q)$ is a subset of $\supp(q)$. Ideally, the $\nSamples$ sets would be stochastically independent, but that is not the case here for technical reasons. The number $\nSamples$ of sample sets per node is predefined, but the size of each sample set is not. One of the main challenges is to keep these sets small to ensure polynomial running time.
\end{itemize}

Nodes of $\calP$ with effective height $0$ have a small $\supp(q)$, and, therefore, for these nodes, the algorithm finds $\supp(q)$, sets $p(q)$ to its target value (i.e., $\frac{16n}{|\supp(q)|}$), and independently constructs the sets $\{ S^r(q) \}_{r=1}^{\nSamples}$
by sampling directly from $\supp(q)$. For nodes of higher effective height, $p(q)$ and $S^r(q)$ are computed in a randomized procedure that uses the values of $p(\cdot)$ and $S^r(\cdot)$ already obtained for the children of $q$. In particular, the samples for $q$ are constructed directly from its children's samples. Once all nodes of $\calP$ have been processed, the algorithm outputs $\frac{16n}{p(o)}$, where $o$ is the output node of $\calP$. Informally, if $p(o)$ is a good estimate of $\frac{16n}{|\supp(o)|}$, then the algorithm outputs a good estimate of $|\supp(o)|$, which equals $|\supp(\calP)|$. 

Unsurprisingly, the primary challenge lies in analyzing this approach.  For analysis, we would ideally want $S^r(q)$ to be such that every element of $\supp(q)$ is in $S^r(q)$ independently with identical probability. However, the desideratum of i.i.d. samples is rather too expensive, which led to quasi-polynomial runtime in prior work~\cite{GoreJKSM97}\footnote{As observed by ~\cite{GoreJKSM97}, their techniques lends itself to FPRAS for constant depth circuits or for the case when $(+,\times)$-program was a formula instead of circuit}.   In recent works on FPRAS for Non-deterministic Finite Automaton (NFA)~\cite{ArenasCJR21} and Non-deterministic Tree Automaton (NTA)~\cite{ArenasCJR21a}, independence could be achieved by exploiting {\em self-reducibility union property}\footnote{The term ``self-reducibility union property" was coined and defined in~\cite{MeelCM24} and is different from self-reduciblity property.}, but these techniques do not seem to extend to general $(+,\times)$ programs. The techniques behind the recent FPRAS for non-deterministic read-once branching programs~\cite{MeeldC25} (and the improved FPRAS for  NFA~\cite{MdC25b}) were the first to go beyond self-reducibility and independent samples. 

A core insight of our approach, as well as the approach of~\cite{MeeldC25,MdC25b}, is to abandon the desire for independence altogether. In particular, we do not  ensure even pairwise independence. That is,  for $\alpha, \alpha' \in \supp(q)$, it may not be the case that $\Pr[\alpha \in S^r(q) \mid \alpha' \in S^r(q)] = \Pr[\alpha \in S^r(q)]$. Of course, we need to quantify the dependence. Before discussing how $p(q)$ and $S^r(q)$ are computed, we introduce the procedure $\reduce$ that takes a set ${\cal Z}$ and a probability $t$ and returns a copy of ${\cal Z}$ in which every element is kept with probability $t$.

\begin{algorithm}[H]
	\begin{algorithmic}[1]
		\STATE ${\cal Z'} \leftarrow \emptyset$
		\FOR{$z \in {\cal Z}$}
		\STATE add $z$ to ${\cal Z'}$ with probability $t$
		\ENDFOR
		\STATE return ${\cal Z'}$
	\end{algorithmic}
	\caption{$\reduce({\cal Z},t)$}
\end{algorithm}

\noindent

We assume $q$ has effective height at least $1$ and that the values $p(\cdot)$ and $S^r(\cdot)$ are known for its children. The computation differs depending on the type of $q$:
\begin{enumerate}
	\item[•] If $q$ is a $\times$ node $q = q_1 \times q_2$ then
	\begin{align*}
		\hat{S}^{r}(q) = S^{r}(q_1) \otimes S^{r}(q_2)		\quad\qquad
		p(q) = \frac{p(q_1) p(q_2)}{16n} 						
		\quad\qquad
		S^{r}(q) = \reduce\left( \hat{S}^{r}(q) , \frac{p(q)}{p(q_1) p(q_2)}\right)
	\end{align*}
	where $\otimes$ denotes  the cross product over the sets, and $r$ ranges in $[\nSamples]$.
	
	\item[•] If $q$ is a $+$ node $q = q_1 + q_2$ (assuming fan-in two for simplicity) with $q_1 \prec q_2$. The technical difficulty for $+$ nodes arises from the fact that it is possible that for a given $\alpha \in \supp(q)$, we have  $\alpha \in \supp(q_1)$ as well as $\alpha \in \supp(q_2)$. Therefore, in order to ensure no $\alpha$ has undue advantage even if it lies in the support of multiple children, we compute $S^r(q)$ as follows: 
	\begin{align*}
		\rho(q) &= \min(p(q_1),p(q_2)) \\
		\hat{S}^r(q) &= \reduce\left(S^r(q_1), \frac{\rho(q)}{p(q_1)}\right) \bigcup \left(\reduce\left(S^r(q_2), \frac{\rho(q)}{p(q_2)}\right) \setminus \supp(q_1)\right)  \\
		p(q) &= 16n\rho(q)  \cdot \left(\underset{0 \leq j < n_t}{\median}\frac{1}{n_s}\sum\limits_{r = j\cdot n_s+1}^{(j+1)n_s}|\hat{S}^r(q)|\right)^{-1} \\
		S^r(q) &= \reduce\left(\hat{S}^r(q), \frac{p(q)}{\rho(q)}\right)
	\end{align*} 
	where $r$ ranges in $[\nSamples]$.
\end{enumerate}

There are a few things to unpack here: First, in both cases, the sets $S^r(q)$ are reduced from intermediate sets $\hat{S}^r(q)$ in such a way that 
\begin{equation}\label{equation:uniform}
	\Pr[\alpha \in S^r(q)] = p(q)
\end{equation}
 holds for every $\alpha$ in $\supp(q)$. Because the sets $S^r(q)$ must remain small for polynomial running time, $p(q)$ is computed to be in the interval $\frac{16n(1 \pm \kappa)}{|\supp(q)|}$ with high probability, where $\kappa$ is a fixed constant (so the expected size of $S^r(q)$ is linear in $n$). We call this interval $\Delta(q)$.

The $\times$ nodes are generally not a problem as the guarantee that $p(q) \in \Delta(q)$ propagates from the children to $q$ by construction. When dealing with a $+$ node $q = q_1 + q_2$, we are concerned about the fact that it is possible for some $\alpha$ to belong to both $\supp(q_1)$ and $\supp(q_2)$. Therefore, the usage of $S^r(q_2) \setminus \supp(q_1)$ ensures that for every $\alpha \in \supp(q)$, there is exactly one child $q'$ of $q$ such that, if $\alpha \in \hat{S}^r(q)$, then $\alpha \in S^r(q')$, and therefore, no $\alpha$ has *undue advantage*. Finally, $p(q)$ is computed via the standard method of  median of means.

\begin{figure}
	\begin{subfigure}{0.5\textwidth}
		\centering
		\begin{tikzpicture}[yscale=1.8,xscale=0.58]
			\def\sep{1};
			\def\op{40};
			\def\treecolor{red};
			
			\draw[\treecolor!\op,line width=0.2cm] (1,0) -- (1.5,0.5) -- (2,1);
			\draw[\treecolor!\op,line width=0.2cm] (2,1) -- (3,0);
			\draw[\treecolor!\op,line width=0.2cm] (7,0) -- (6.5,0.5) -- (6,0);
			\draw[\treecolor!\op,line width=0.2cm] (5.5,1) -- (6.5,0.5) ;
			\draw[\treecolor!\op,line width=0.2cm] (0.5,1.5) -- (2,1);
			\draw[\treecolor!\op,line width=0.2cm] (0.5,1.5) -- (1,2) -- (5.5,1);
			\draw[\treecolor!\op,line width=0.2cm] (1,2) -- (3.25,2.5);

			\node (v) at (-1,0) {$x_1$};
			\node (y) at (0,0) {$x_2$};
			\node[fill=\treecolor!\op,circle,inner sep=\sep] (x) at (1,0) {$x_3$};
			\node (nx) at (2,0) {$x_4$};
			\node[fill=\treecolor!\op,circle,inner sep=\sep] (ny) at (3,0) {$x_5$};
			\node (nz) at (4,0) {$x_6$};
			\node (nw) at (5,0) {$x_7$};
			\node[fill=\treecolor!\op,circle,inner sep=\sep] (w) at (6,0) {$x_8$};
			\node[fill=\treecolor!\op,circle,inner sep=\sep] (z) at (7,0) {$x_9$};

			\node (o0) at (-0.5,0.5) {$+$};
			\node[fill=\treecolor!\op,circle,inner sep=\sep] (o1) at (1.5,0.5) {$+$};

			\node (a1) at (0,1) {$\times$};
			\node[fill=\treecolor!\op,circle,inner sep=\sep] (a2) at (2,1) {$\times$};
			\node (a4) at (4.5,0.5) {$\times$};
			\node[fill=\treecolor!\op,circle,inner sep=\sep] (a5) at (6.5,0.5) {$\times$};

			\node[fill=\treecolor!\op,circle,inner sep=\sep] (o2) at (0.5,1.5) {$+$};
			\node[fill=\treecolor!\op,circle,inner sep=\sep] (o3) at (5.55,1) {$+$};

			\node[fill=\treecolor!\op,circle,inner sep=\sep] (a6) at (1,2) {$\times$};
			\node (a7) at (5.5,2) {$\times$};

			\node[fill=\treecolor!\op,circle,inner sep=\sep] (o4) at (3.25,2.5)  {$+$};

			\draw (x) -- (o1) -- (nx);
			\draw (x) -- (a1) -- (o0);
			\draw (v) -- (o0);
			\draw (a1) -- (o0) -- (y);
			\draw (o1) -- (a2) -- (ny);
			\draw (nz) -- (a4) -- (nw);
			\draw (z) -- (a5) -- (w);
			\draw (a4) -- (o3) -- (a5);
			\draw (a1) -- (o2) -- (a2);
			\draw (a2) -- (a7) -- (a5);
			\draw (o2) -- (a6) -- (o3);
			\draw (a6) -- (o4) -- (a7);
		\end{tikzpicture}
		\caption{}\label{fig:red_derivation_tree}
	\end{subfigure}
	\begin{subfigure}{0.5\textwidth}
		\centering
		\begin{tikzpicture}[yscale=1.8,xscale=0.58]
			\def\sep{1};
			\def\op{30};
			\def\treecolor{blue};
			
			\draw[\treecolor!\op,line width=0.2cm] (1,0) -- (1.5,0.5) -- (2,1);
			\draw[\treecolor!\op,line width=0.2cm] (2,1) -- (3,0);
			\draw[\treecolor!\op,line width=0.2cm] (7,0) -- (6.5,0.5) -- (6,0);
			\draw[\treecolor!\op,line width=0.2cm] (6.5,0.5) -- (5.5,2) -- (2,1);
			\draw[\treecolor!\op,line width=0.2cm] (5.5,2) -- (3.25,2.5);	

			\node (v) at (-1,0) {$x_1$};
			\node (y) at (0,0) {$x_2$};
			\node[fill=\treecolor!\op,circle,inner sep=\sep] (x) at (1,0) {$x_3$};
			\node (nx) at (2,0) {$x_4$};
			\node[fill=\treecolor!\op,circle,inner sep=\sep] (ny) at (3,0) {$x_5$};
			\node (nz) at (4,0) {$x_6$};
			\node (nw) at (5,0) {$x_7$};
			\node[fill=\treecolor!\op,circle,inner sep=\sep] (w) at (6,0) {$x_8$};
			\node[fill=\treecolor!\op,circle,inner sep=\sep] (z) at (7,0) {$x_9$};

			\node (o0) at (-0.5,0.5) {$+$};
			\node[fill=\treecolor!\op,circle,inner sep=\sep] (o1) at (1.5,0.5) {$+$};

			\node (a1) at (0,1) {$\times$};
			\node[fill=\treecolor!\op,circle,inner sep=\sep] (a2) at (2,1) {$\times$};
			\node (a4) at (4.5,0.5) {$\times$};
			\node[fill=\treecolor!\op,circle,inner sep=\sep] (a5) at (6.5,0.5) {$\times$};

			\node (o2) at (0.5,1.5) {$+$};
			\node  (o3) at (5.55,1) {$+$};

			\node (a6) at (1,2) {$\times$};
			\node[fill=\treecolor!\op,circle,inner sep=\sep] (a7) at (5.5,2) {$\times$};

			\node[fill=\treecolor!\op,circle,inner sep=\sep] (o4) at (3.25,2.5)  {$+$};

			\draw (x) -- (o1) -- (nx);
			\draw (x) -- (a1) -- (o0);
			\draw (v) -- (o0);
			\draw (a1) -- (o0) -- (y);
			\draw (o1) -- (a2) -- (ny);
			\draw (nz) -- (a4) -- (nw);
			\draw (z) -- (a5) -- (w);
			\draw (a4) -- (o3) -- (a5);
			\draw (a1) -- (o2) -- (a2);
			\draw (a2) -- (a7) -- (a5);
			\draw (o2) -- (a6) -- (o3);
			\draw (a6) -- (o4) -- (a7);
		\end{tikzpicture}
		\caption{}\label{fig:blue_derivation_tree}
	\end{subfigure}
	\caption{Two ways to derive $x_3x_5x_8x_9$.}\label{fig:red_blue_trees}
\end{figure}

The sample selection from a canonical child leads to a key feature of our algorithm: for every node $q$ and $\alpha \in \supp(q)$, $\alpha$ may be sampled in $S^r(q)$ only if $\alpha$ has been constructed following a unique {\em derivation tree}, denoted by $\tree(\alpha, q)$. Returning to the running example,  Figures~\ref{fig:red_derivation_tree} and~\ref{fig:blue_derivation_tree} illustrate the only two derivations for the monomial $x_3x_5x_8x_9$. Recall that the nodes are ordered as $q_0 \prec q_1 \prec q_2 \prec \dots \prec q_{19}.$
$x_3x_5x_8x_9$ is in the support of $q_0$ (the root), of $q_1$ (the red child of $q_0$), and of $q_2$ (the blue child of $q_0$). Since $q_1 \prec q_2$, this monomial will appear in a sample set $S^r(q_0)$ only if it is in $S^r(q_1)$, so only if it is constructed as shown in Figure~\ref{fig:red_derivation_tree}.\footnote{Technically, all nodes in the depicted example have effective height $0$ and thus are handled in a special manner. For the sake of high-level intuition, we will assume (in this section) that only the leaves have effective height $0$.} The red-highlighted subcircuit in Figure~\ref{fig:red_derivation_tree} represents the derivation tree $\tree(x_3x_5x_8x_9, q_0)$.

\subsubsection{Tackling Dependence.}

As emphasized earlier, the crucial distinction in our work from earlier efforts is the embrace of dependence. For instance, consider $q = q_1 + q_2$ and $\hat{q} = q_1 + q_3$, then $S^r(q)$ and $S^r(\hat{q})$ will, of course, reuse samples $S^r(q_1)$, and therefore, do not even have pairwise independence. Now, of course, we need to bound dependence so as to retain any hope of computing $p(q)$ from $S^r(q)$. To this end, we focus on the following quantity of interest: $\Pr[\alpha \in S^r(q) \text{ and } \alpha' \in S^r(q)]$ for $\alpha, \alpha' \in \supp(q)$, which determines the variance for the estimator.

Our updates ensure that $\Pr[\alpha \in S^r(q) \text{ and } \alpha' \in S^r(q)]$ can be captured by the {\em overlap} $\tau$ (defined formally in Section~\ref{sec:derivation}) between their derivation trees, i.e., $\tree(\alpha,q)$ and $\tree(\alpha',q)$. Informally, the overlap $\tau$ is the set of nodes such that the subtrees rooted at these nodes are in both $\tree(\alpha,q)$ and $\tree(\alpha',q)$. 

Accordingly, we establish the following bound:
\begin{align}\label{eq:overlapeq}
	\Pr[\alpha \in S^r(q) \text{ and } \alpha' \in S^r(q)] \leq \frac{p(q)^2}{\prod_{\hat{q} \in \tau} p(\hat{q})}
\end{align}

\begin{figure}
\begin{subfigure}{0.33\textwidth}
\centering
\begin{tikzpicture}[yscale=1.8,xscale=0.58, every node/.style={font=\small}]
	\def\sep{1};
	\def\op{30};

	\node (v) at (-1,0) {$q_{11}$};
	\node (y) at (0,0) {$q_{12}$};
	\node (x) at (1,0) {$q_{13}$};
	\node (nx) at (2,0) {$q_{14}$};
	\node (ny) at (3,0) {$q_{15}$};
	\node (nz) at (4,0) {$q_{16}$};
	\node (nw) at (5,0) {$q_{17}$};
	\node (w) at (6,0) {$q_{18}$};
	\node (z) at (7,0) {$q_{19}$};

	\node (o0) at (-0.5,0.5) {$q_7$};
	\node (o1) at (1.5,0.5) {$q_8$};

	\node (a1) at (0,1) {$q_4$};
	\node (a2) at (2,1) {$q_5$};
	\node (a4) at (4.5,0.5) {$q_9$};
	\node (a5) at (6.5,0.5) {$q_{10}$};

	\node (o2) at (0.5,1.5) {$q_3$};
	\node (o3) at (5.55,1) {$q_6$};

	\node (a6) at (1,2) {$q_1$};
	\node (a7) at (5.5,2) {$q_2$};

	\node (o4) at (3.25,2.5) {$q_0$};

	\draw (x) -- (o1) -- (nx);
	\draw (x) -- (a1) -- (o0);
	\draw (v) -- (o0);
	\draw (a1) -- (o0) -- (y);
	\draw (o1) -- (a2) -- (ny);
	\draw (nz) -- (a4) -- (nw);
	\draw (z) -- (a5) -- (w);
	\draw (a4) -- (o3) -- (a5);
	\draw (a1) -- (o2) -- (a2);
	\draw (a2) -- (a7) -- (a5);
	\draw (o2) -- (a6) -- (o3);
	\draw (a6) -- (o4) -- (a7);
\end{tikzpicture}
\caption{}\label{fig:nodes_name}
\end{subfigure}\begin{subfigure}{0.33\textwidth}
\centering
\begin{tikzpicture}[yscale=1.8,xscale=0.58]
	\def\sep{1};
	\def\op{40};
	\def\treecolor{red};
	
	\draw[\treecolor!\op,line width=0.2cm] (1,0) -- (1.5,0.5);
	\draw[\treecolor!\op,line width=0.2cm] (1.5,0.5) -- (2,1);
	\draw[\treecolor!\op,line width=0.2cm] (2,1) -- (3,0);
	\draw[\treecolor!\op,line width=0.2cm] (7,0) -- (6.5,0.5);
	\draw[\treecolor!\op,line width=0.2cm] (6.5,0.5) -- (6,0);
	\draw[\treecolor!\op,line width=0.2cm] (5.5,1) -- (6.5,0.5);
	\draw[\treecolor!\op,line width=0.2cm] (0.5,1.5) -- (2,1);
	\draw[\treecolor!\op,line width=0.2cm] (0.5,1.5) -- (1,2);
	\draw[\treecolor!\op,line width=0.2cm] (1,2) -- (5.5,1);
	\draw[\treecolor!\op,line width=0.2cm] (1,2) -- (3.25,2.5);

	\node (v) at (-1,0) {$x_1$};
	\node (y) at (0,0) {$x_2$};
	\node[fill=\treecolor!\op,circle,inner sep=\sep] (x) at (1,0) {$x_3$};
	\node (nx) at (2,0) {$x_4$};
	\node[fill=\treecolor!\op,circle,inner sep=\sep] (ny) at (3,0) {$x_5$};
	\node (nz) at (4,0) {$x_6$};
	\node (nw) at (5,0) {$x_7$};
	\node[fill=\treecolor!\op,circle,inner sep=\sep] (w) at (6,0) {$x_8$};
	\node[fill=\treecolor!\op,circle,inner sep=\sep] (z) at (7,0) {$x_9$};

	\node (o0) at (-0.5,0.5) {$+$};
	\node[fill=\treecolor!\op,circle,inner sep=\sep] (o1) at (1.5,0.5) {$+$};

	\node (a1) at (0,1) {$\times$};
	\node[fill=\treecolor!\op,circle,inner sep=\sep] (a2) at (2,1) {$\times$};
	\node (a4) at (4.5,0.5) {$\times$};
	\node[fill=\treecolor!\op,circle,inner sep=\sep] (a5) at (6.5,0.5) {$\times$};

	\node[fill=\treecolor!\op,circle,inner sep=\sep] (o2) at (0.5,1.5) {$+$};
	\node[fill=\treecolor!\op,circle,inner sep=\sep] (o3) at (5.55,1) {$+$};

	\node[fill=\treecolor!\op,circle,inner sep=\sep] (a6) at (1,2) {$\times$};
	\node (a7) at (5.5,2) {$\times$};

	\node[fill=\treecolor!\op,circle,inner sep=\sep] (o4) at (3.25,2.5)  {$+$};

	\draw (x) -- (o1) -- (nx);
	\draw (x) -- (a1) -- (o0);
	\draw (v) -- (o0);
	\draw (a1) -- (o0) -- (y);
	\draw (o1) -- (a2) -- (ny);
	\draw (nz) -- (a4) -- (nw);
	\draw (z) -- (a5) -- (w);
	\draw (a4) -- (o3) -- (a5);
	\draw (a1) -- (o2) -- (a2);
	\draw (a2) -- (a7) -- (a5);
	\draw (o2) -- (a6) -- (o3);
	\draw (a6) -- (o4) -- (a7);
\end{tikzpicture}
\caption{}\label{fig:real_derivation_tree}
\end{subfigure}\begin{subfigure}{0.33\textwidth}
\centering
\begin{tikzpicture}[yscale=1.8,xscale=0.58]
	\def\sep{1};
	\def\op{30};
	\def\treecolor{blue};
	
	\draw[\treecolor!\op,line width=0.2cm] (0.5,1.5) -- (0,1);
	\draw[\treecolor!\op,line width=0.2cm] (0,1) -- (-1,0);
	\draw[\treecolor!\op,line width=0.2cm] (0,1) -- (1,0);
	\draw[\treecolor!\op,line width=0.2cm] (7,0) -- (6.5,0.5);
	\draw[\treecolor!\op,line width=0.2cm] (6.5,0.5) -- (6,0);
	\draw[\treecolor!\op,line width=0.2cm] (5.5,1) -- (6.5,0.5) ;
	\draw[\treecolor!\op,line width=0.2cm] (0.5,1.5) -- (1,2);
	\draw[\treecolor!\op,line width=0.2cm] (1,2) -- (5.5,1);
	\draw[\treecolor!\op,line width=0.2cm] (1,2) -- (3.25,2.5);

	\node[fill=\treecolor!\op,circle,inner sep=\sep] (v) at (-1,0) {$x_1$};
	\node (y) at (0,0) {$x_2$};
	\node[fill=\treecolor!\op,circle,inner sep=\sep] (x) at (1,0) {$x_3$};
	\node (nx) at (2,0) {$x_4$};
	\node (ny) at (3,0) {$x_5$};
	\node (nz) at (4,0) {$x_6$};
	\node (nw) at (5,0) {$x_7$};
	\node[fill=\treecolor!\op,circle,inner sep=\sep] (w) at (6,0) {$x_8$};
	\node[fill=\treecolor!\op,circle,inner sep=\sep] (z) at (7,0) {$x_9$};

	\node[fill=\treecolor!\op,circle,inner sep=\sep](o0) at (-0.5,0.5) {$+$};
	\node (o1) at (1.5,0.5) {$+$};

	\node[fill=\treecolor!\op,circle,inner sep=\sep] (a1) at (0,1) {$\times$};
	\node (a2) at (2,1) {$\times$};
	\node (a4) at (4.5,0.5) {$\times$};
	\node[fill=\treecolor!\op,circle,inner sep=\sep] (a5) at (6.5,0.5) {$\times$};

	\node[fill=\treecolor!\op,circle,inner sep=\sep] (o2) at (0.5,1.5) {$+$};
	\node[fill=\treecolor!\op,circle,inner sep=\sep] (o3) at (5.55,1) {$+$};

	\node[fill=\treecolor!\op,circle,inner sep=\sep] (a6) at (1,2) {$\times$};
	\node (a7) at (5.5,2) {$\times$};

	\node[fill=\treecolor!\op,circle,inner sep=\sep] (o4) at (3.25,2.5)  {$+$};

	\draw (x) -- (o1) -- (nx);
	\draw (x) -- (a1) -- (o0);
	\draw (v) -- (o0);
	\draw (a1) -- (o0) -- (y);
	\draw (o1) -- (a2) -- (ny);
	\draw (nz) -- (a4) -- (nw);
	\draw (z) -- (a5) -- (w);
	\draw (a4) -- (o3) -- (a5);
	\draw (a1) -- (o2) -- (a2);
	\draw (a2) -- (a7) -- (a5);
	\draw (o2) -- (a6) -- (o3);
	\draw (a6) -- (o4) -- (a7);
\end{tikzpicture}
\caption{}\label{fig:other_derivation_tree}
\end{subfigure}
\caption{The derivation trees of $x_3x_5x_8x_9$ and $x_1x_3x_8x_9$.}\label{fig:derivation_trees}
\end{figure}

We spend some time providing insights behind Equation ~\ref{eq:overlapeq}.  For that we use Figure~\ref{fig:derivation_trees}. First let us explain how to decompose the probability of a monomial being sampled at any node. Consider the monomial $x_3x_5x_8x_9$ in $\supp(q_0)$, with its derivation tree highlighted in red in Figure~\ref{fig:real_derivation_tree}. $x_3x_5x_8x_9$ is in $S^r(q_0)$ if and only if $x_3x_5x_8x_9$ is in $S^r(q_1)$ and is not reduced away when constructing $S^r(q_0)$. We shall say that $x_3x_5x_8x_9$ \emph{survives} the $\reduce$ at $q_1$. Now when does $x_3x_5x_8x_9 \in S^r(q_1)$ occur? This occurs when $x_3x_5$ is in $S^r(q_3)$ and $x_8x_9$ is in $S^r(q_6)$ and when $x_3x_5x_8x_9$ is not reduced away when computing $S^r(q_1) = \reduce(S^r(q_3) \otimes S^r(q_6),t)$ (for some $t$). As we keep going down in the circuit, we conclude that $x_3x_5x_8x_9$ is in $S^r(q)$ if and only if, for every node $q'$ of the derivation tree, the restriction of $x_3x_5x_8x_9$ to $var(q')$ was in $S^r(q')$ and has survived a $\reduce$ procedure at $q'$. 

The probability $\Pr[x_3x_5x_8x_9 \in S^r(q_0)]$ can be decomposed along the derivation tree as follows:
\begin{align*}
&\Pr[x_3x_5x_8x_9 \text{ survives the } \reduce \text{ at } q_0 \mid x_3x_5x_8x_9 \in S^r(q_1)]
\\
\cdot &\Pr[x_3x_5x_8x_9 \text{ survives the } \reduce \text{ at } q_1 \mid x_3x_5 \in S^r(q_3),\, x_8x_9 \in S^r(q_6)]
\\
\cdot &\Pr[x_3x_5 \text{ survives the } \reduce \text{ at } q_3 \mid x_3x_5 \in S^r(q_5)]
\\
\cdot &\Pr[x_3x_5 \text{ survives the } \reduce \text{ at } q_5 \mid x_2 \in S^r(q_8),\, x_5 \in S^r(q_{15})]
\\
\cdot &\Pr[x_3 \text{ survives the } \reduce \text{ at } q_8 \mid x_3 \in S^r(q_{13})] & 
\\
\cdot &\Pr[x_3 \in S^r(q_{13})]\Pr[x_5 \in S^r(q_{15})]&
\\
\tikzmarkin[color=lightgray!20,draw=gray]{boxQ6}(-3.2,-0.25)(-0.2,0.4)\cdot & \Pr[x_8x_9 \text{ survives the } \reduce \text{ at } q_6 \mid x_8x_9 \in S^r(q_{10})]
\\
\cdot &\Pr[x_8x_9 \text{ survives the } \reduce \text{ at } q_{10} \mid x_8 \in S^r(q_{18}),\, x_9 \in S^r(q_{19})] & \textcolor{gray}{\Pr[x_8x_9  \in S^r(q_6)]}
\\
\cdot &\Pr[x_8 \in S^r(q_{18})]
\Pr[x_9 \in S^r(q_{19})] &
\tikzmarkend{boxQ6}
\end{align*}
This chain-rule-like decomposition has one factor per node of the derivation tree. Of course we can stop the decomposition higher up in the tree. For instance we can replace the last four probabilities by $\Pr[x_8x_9  \in S^r(q_6)]$. 

Now we consider the probability of two sampling events occuring simultaneously. Our second event is $x_1x_3x_8x_9 \in S^r(q_0)$. One can decompose $\Pr[x_1x_3x_8x_9 \in S^r(q_0)]$ along its derivation tree -- shown Figure~\ref{fig:other_derivation_tree} -- and see that the resulting decomposition shares two factors with $\Pr[x_3x_5x_8x_9 \in S^r(q_0)]$, namely, $\Pr[x_3 \in S^r(q_{13})]$ and $\Pr[x_8x_9 \in S^r(q_6)]$. When we consider the two events at once we can decompose $\Pr[x_3x_5x_8x_9 \in S^r(q_0) \text{ and } x_1x_3x_8x_9 \in S^r(q_0)]$ along the union of the two derivation trees. The factors of this decomposition are exactly the factors of the two decompositions for $\Pr[x_3x_5x_8x_9 \in S^r(q_0)]$ and $\Pr[x_1x_3x_8x_9 \in S^r(q_0)]$ but no factor is duplicated, so  
$$
\Pr[x_3x_5x_8x_9 \in S^r(q_0) \text{ and } x_1x_3x_8x_9 \in S^r(q_0)] = \frac{\Pr[x_3x_5x_8x_9 \in S^r(q_0)]\Pr[x_1x_3x_8x_9 \in S^r(q_0)]}{\Pr[x_3 \in S^r(q_{13})]\Pr[x_8x_9 \in S^r(q_6)]}.
$$
Compared to Equation~(\ref{eq:overlapeq}), the denominator $\prod_{\hat q \in \tau} p(\hat q)$ corresponds to $\Pr[x_3 \in S^r(q_{13})]\Pr[x_8x_9 \in S^r(q_6)]$. The set of nodes where the two tree merge  is $\tau = \{q_6,q_{13}\}$, and by construction $\Pr[\hat \alpha \in S^r(\hat q)] = p(\hat q)$ for every $\hat{\alpha} \in \supp(\hat q)$. Note that the two trees overlap at other nodes (like $q_1$ or $q_3$), but they merge completely only below $q_6$ and $q_{13}$. In other words the construction of $x_3x_5x_8x_9$ and $x_1x_3x_8x_9$ are shared up to $q_6$ and $q_{13}$ and 
independent above that. Equation (2) captures this idea: by construction $\Pr[\alpha \in S^r(q)]\Pr[\alpha' \in S^r(q)] = p(q)^2$; but $p(q)^2$ is an underestimate of $\Pr[\alpha \in S^r(q) \text{ and } \alpha' \in S^r(q)]$ that counts twice the shared part, i.e., $\prod_{\hat q \in \tau} p(\hat q)$, hence the division by $\prod_{\hat q \in \tau} p(\hat q)$ to correct things.

We now return to the discussion of the variance of the estimator. The variance computation deals with the quantity\footnote{The technical analysis actually deals with the variance for $|\hat{S}^r(q)|$ and requires scaling the expression in (\ref{eq:sum_variance}), we present the ideas using $S^r(q)$ and omit discussion of the scaling factor for the sake of intuition.} 
\begin{equation}\label{eq:sum_variance}
\sum_{\substack{\alpha,\alpha' \in \supp(q) \\ \alpha \neq \alpha'}}\Pr[\alpha \in S^r(q) \text{ and } \alpha' \in S^r(q)]
\end{equation} 
In an ideal independence setting, $\Pr[\alpha \in S^r(q) \text{ and } \alpha' \in S^r(q)]$ would be  $p(q)^2$,  the quantity~(\ref{eq:sum_variance}) would be in $O(p(q)^2 |\supp(q)|^2)$, which is the square of the estimator's mean.
However, there exist some $\alpha$ and $\alpha'$ for which the overlap of the derivation trees $\tree(\alpha,q)$ and $\tree(\alpha',q)$ is significant.  For these pairs $(\alpha,\alpha')$ the probability $\Pr[\alpha \in S^r(q) \text{ and } \alpha' \in S^r(q)]$ may approach $p(q)$, which  is highly undesirable since the quantity $\sum_{\alpha,\alpha'} p(q)$ would be in $O(|\supp(q)|)$ (as  $p(q)$ estimates $16n|\supp(q)|^{-1}$) and thus the variance could be exponential in $n$. The real situation is somewhere in between and our only hope lies in bounding the number of pairs $(\alpha, \alpha')$ with {\em bad} overlap of the derivation trees.

To gain intuition, consider our running example. For $\alpha, \alpha' \in \supp(q_0)$, if $q_3 \in \tau$, then $\alpha$ and $\alpha'$ must agree on $\{x_1, x_2, x_3, x_4, x_5\}$. In other words, $x_i$ in $\alpha$ if and only if $x_i$ in $\alpha'$ for $i = 1, 2, 3, 4, 5$. Thus, $\alpha$ and $\alpha'$ can only differ on the remaining leaves, i.e., $\{x_6, x_7, x_8,x_9\}$ and of course, this bounds the number of such $(\alpha, \alpha')$ pairs. 
More generally, the closer a node $\hat{q} \in \tau$ is to $q_0$, the fewer leaves on which $\alpha$ and $\alpha'$ can differ, bounding the number of such pairs. We show in Lemma~\ref{lemma:antichain_lemma} that the number of pairs $(\alpha, \alpha')$ with overlap $\tau$ is bounded by 
\begin{equation}\label{eq:overlap_bound}
\frac{|\supp(q)|^2}{\prod_{\hat{q} \in \tau} |\supp(\hat{q})|}
\end{equation}

To understand the bound~(\ref{eq:overlap_bound}), we explain that for a fixed $\alpha$, the number of monomials $\alpha'$ for which $\tree(\alpha,q)$ and $\tree(\alpha',q)$ overlap exactly beyond $\tau$ is at most $\frac{|\supp(q)|}{\prod_{\hat{q} \in \tau} |\supp(\hat{q})|}$. For that we reuse the running example with $q = q_0$ and $\alpha = x_3x_5x_8x_9$ and $\tau = \{q_6,q_{13}\}$. The possible $\alpha'$ are $x_1x_3x_8x_9$ and $x_2x_3x_8x_9$. Their derivation trees are shown in Figure~\ref{fig:mutations}. By definition, all possible $\alpha'$ are consistent on the variables appearing below $\tau$. For a fixed $\alpha'$, there are $\prod_{\hat{q} \in \tau} |\supp(\hat{q})|$ different monomials that can be obtained by changing $\tree(\alpha',q)$ below $\tau$. Let us call these  \emph{mutations of $\alpha'$ below $\tau$}. For example the only mutation of $x_1x_3x_8x_9$ below $\{q_6,q_{13}\}$ is $x_1x_3x_6x_7$ and the only mutation of $x_2x_3x_8x_9$ below $\{q_6,q_{13}\}$ is $x_2x_3x_6x_7$. In Figure~\ref{fig:mutations}, the mutations are highlighted in gray. All $\alpha'$ monomials and their $\prod_{\hat{q} \in \tau} |\supp(\hat{q})|$ mutations are pairwise distinct and are in the support of $q$, so there are most $|\supp(q)|$ of them, hence the bound $\frac{|\supp(q)|}{\prod_{\hat{q} \in \tau} |\supp(\hat{q})|}$ on the number of possible $\alpha'$.

Returning to bound~(\ref{eq:overlap_bound}), observe that when $p(\hat{q}) \approx p(q)$, 
we have $|\supp(\hat{q})| \approx |\supp(q)|$, implying that there are not too many 
pairs $(\alpha, \alpha')$ with {\em bad} overlap. Therefore, we can upper bound the 
quantity of interest 	
$\sum_{\substack{\alpha,\alpha' \in \supp(q) \\ \alpha \neq \alpha'}} \Pr[\alpha \in S^r(q) \text{ and } \alpha' \in S^r(q)]$, 
which is dominating factor in the variance (Lemma~\ref{lem:variance-bound}). 
\begin{align*}
	&\sum_{\substack{\alpha,\alpha' \in \supp(q) \\ \alpha \neq \alpha'}} \Pr[\alpha \in S^r(q) \text{ and } \alpha' \in S^r(q)] \\
	&\leq \sum_{\tau} \frac{p(q)^2}{\prod_{\hat{q} \in \tau} p(\hat{q})} \cdot \frac{|\supp(q)|^2}{\prod_{\hat{q} \in \tau} |\supp(\hat{q})|} \\ 
	&= \sum_{\ell = 0}^{n} \left(\sum_{\{\tau \mid |\tau| = \ell\}} \frac{p(q)^2}{\prod_{\hat{q} \in \tau} p(\hat{q})} \cdot \frac{|\supp(q)|^2}{\prod_{\hat{q} \in \tau} |\supp(\hat{q})|} \right)\\
	&\leq  \sum_{\ell = 0}^{n} \left(\sum_{\{\tau \mid |\tau| = \ell\}} \frac{p(q)^2 |\supp(q)|^2}{(8n)^{\ell}}\right)
\end{align*}
The last inequality follows from quantifying the fact that  $p(\hat{q})$ is an 
approximation of $\frac{16n}{|\supp(\hat{q})|}$. Now observe that there are at most 
${4n \choose \ell}$ antichains $\tau$ of size $\ell$, so we have 
\begin{align*}
	&\sum_{\ell = 0}^{n} \left(\sum_{\{\tau \mid |\tau| = \ell\}} \frac{p(q)^2 |\supp(q)|^2}{(8n)^{\ell}}\right) \\
	&\leq \sum_{\ell = 0}^{n} {4n \choose \ell} \frac{p(q)^2 |\supp(q)|^2}{(8n)^{\ell}}\\
	&= p(q)^2 |\supp(q)|^2 \sum_{\ell = 0}^{n} \frac{1}{\ell!} \frac{(4n)!}{(4n-\ell)!} \frac{1}{(8n)^{\ell}}\\
	&\leq p(q)^2 |\supp(q)|^2 \sum_{\ell = 0}^{n} \frac{1}{\ell!} \frac{(4n)^{\ell}}{(8n)^{\ell}}\\
	&= p(q)^2 |\supp(q)|^2 \sum_{\ell = 0}^{n} \frac{1}{\ell!} \frac{1}{2^{\ell}}\\
	&\leq p(q)^2 |\supp(q)|^2 e^{1/2} \\
	&\leq 2 p(q)^2 |\supp(q)|^2
\end{align*}
Since $\mu = \Ex[|S^r(q)|] = p(q)|\supp(q)|$, we have 
\begin{align*}
	\sum_{\substack{\alpha,\alpha' \in \supp(q) \\ \alpha \neq \alpha'}} \Pr[\alpha \in S^r(q) \text{ and } \alpha' \in S^r(q)] &\leq 2 p(q)^2 |\supp(q)|^2 = 2\mu^2
\end{align*}
This allows us to upper bound the variance by a constant factor of the square of its 
mean, enabling the use of the median-of-means estimator.

\begin{figure}
\begin{subfigure}{0.33\textwidth}
\centering
\begin{tikzpicture}[yscale=1.8,xscale=0.58, every node/.style={font=\small}]
	\def\sep{1};
	\def\op{30};

	\node (v) at (-1,0) {$q_{11}$};
	\node (y) at (0,0) {$q_{12}$};
	\node (x) at (1,0) {$q_{13}$};
	\node (nx) at (2,0) {$q_{14}$};
	\node (ny) at (3,0) {$q_{15}$};
	\node (nz) at (4,0) {$q_{16}$};
	\node (nw) at (5,0) {$q_{17}$};
	\node (w) at (6,0) {$q_{18}$};
	\node (z) at (7,0) {$q_{19}$};

	\node (o0) at (-0.5,0.5) {$q_7$};
	\node (o1) at (1.5,0.5) {$q_8$};

	\node (a1) at (0,1) {$q_4$};
	\node (a2) at (2,1) {$q_5$};
	\node (a4) at (4.5,0.5) {$q_9$};
	\node (a5) at (6.5,0.5) {$q_{10}$};

	\node (o2) at (0.5,1.5) {$q_3$};
	\node (o3) at (5.55,1) {$q_6$};

	\node (a6) at (1,2) {$q_1$};
	\node (a7) at (5.5,2) {$q_2$};

	\node (o4) at (3.25,2.5) {$q_0$};

	\draw (x) -- (o1) -- (nx);
	\draw (x) -- (a1) -- (o0);
	\draw (v) -- (o0);
	\draw (a1) -- (o0) -- (y);
	\draw (o1) -- (a2) -- (ny);
	\draw (nz) -- (a4) -- (nw);
	\draw (z) -- (a5) -- (w);
	\draw (a4) -- (o3) -- (a5);
	\draw (a1) -- (o2) -- (a2);
	\draw (a2) -- (a7) -- (a5);
	\draw (o2) -- (a6) -- (o3);
	\draw (a6) -- (o4) -- (a7);
\end{tikzpicture}
\caption{}\label{fig:nodes_name}
\end{subfigure}\begin{subfigure}{0.33\textwidth}
\centering
\begin{tikzpicture}[yscale=1.8,xscale=0.58]
	\def\sep{1};
	\def\op{30};
	\def\treecolor{blue};
	
	\draw[\treecolor!\op,line width=0.2cm] (0.5,1.5) -- (0,1) -- (-1,0);
	\draw[\treecolor!\op,line width=0.2cm] (0,1) -- (1,0);
	\draw[\treecolor!\op,line width=0.2cm] (7,0) -- (6.5,0.5);
	\draw[\treecolor!\op,line width=0.2cm] (6.5,0.5) -- (6,0);
	\draw[\treecolor!\op,line width=0.2cm] (5.5,1) -- (6.5,0.5) ;
	\draw[\treecolor!\op,line width=0.2cm] (0.5,1.5) -- (1,2);
	\draw[\treecolor!\op,line width=0.2cm] (1,2) -- (5.5,1);
	\draw[\treecolor!\op,line width=0.2cm] (1,2) -- (3.25,2.5);
	
	\draw[\treecolor!\op,line width=0.2cm] (1,2) -- (3.25,2.5);
	\draw[gray!\op,line width=0.2cm] (4,0) -- (4.5,0.5);
	\draw[gray!\op,line width=0.2cm] (4.5,0.5) -- (5.5,1);
	\draw[gray!\op,line width=0.2cm] (5,0) -- (4.5,0.5);

	\node[fill=\treecolor!\op,circle,inner sep=\sep] (v) at (-1,0) {$x_1$};
	\node (y) at (0,0) {$x_2$};
	\node[fill=\treecolor!\op,circle,inner sep=\sep] (x) at (1,0) {$x_3$};
	\node (nx) at (2,0) {$x_4$};
	\node (ny) at (3,0) {$x_5$};
	\node[fill=gray!\op,circle,inner sep=\sep] (nz) at (4,0) {$x_6$};
	\node[fill=gray!\op,circle,inner sep=\sep] (nw) at (5,0) {$x_7$};
	\node[fill=\treecolor!\op,circle,inner sep=\sep] (w) at (6,0) {$x_8$};
	\node[fill=\treecolor!\op,circle,inner sep=\sep] (z) at (7,0) {$x_9$};

	\node[fill=\treecolor!\op,circle,inner sep=\sep](o0) at (-0.5,0.5) {$+$};
	\node (o1) at (1.5,0.5) {$+$};

	\node[fill=\treecolor!\op,circle,inner sep=\sep] (a1) at (0,1) {$\times$};
	\node (a2) at (2,1) {$\times$};
	\node[fill=gray!\op,circle,inner sep=\sep] (a4) at (4.5,0.5) {$\times$};
	\node[fill=\treecolor!\op,circle,inner sep=\sep] (a5) at (6.5,0.5) {$\times$};

	\node[fill=\treecolor!\op,circle,inner sep=\sep] (o2) at (0.5,1.5) {$+$};
	\node[fill=\treecolor!\op,circle,inner sep=\sep] (o3) at (5.55,1) {$+$};

	\node[fill=\treecolor!\op,circle,inner sep=\sep] (a6) at (1,2) {$\times$};
	\node (a7) at (5.5,2) {$\times$};

	\node[fill=\treecolor!\op,circle,inner sep=\sep] (o4) at (3.25,2.5)  {$+$};

	\draw (x) -- (o1) -- (nx);
	\draw (x) -- (a1) -- (o0);
	\draw (v) -- (o0);
	\draw (a1) -- (o0) -- (y);
	\draw (o1) -- (a2) -- (ny);
	\draw (nz) -- (a4) -- (nw);
	\draw (z) -- (a5) -- (w);
	\draw (a4) -- (o3) -- (a5);
	\draw (a1) -- (o2) -- (a2);
	\draw (a2) -- (a7) -- (a5);
	\draw (o2) -- (a6) -- (o3);
	\draw (a6) -- (o4) -- (a7);
\end{tikzpicture}
\caption{}\label{fig:mutations_one}
\end{subfigure}\begin{subfigure}{0.33\textwidth}
\centering
\begin{tikzpicture}[yscale=1.8,xscale=0.58]
	\def\sep{1};
	\def\op{30};
	\def\treecolor{cyan};
	
	\draw[\treecolor!\op,line width=0.2cm] (0.5,1.5) -- (0,1);
	\draw[\treecolor!\op,line width=0.2cm] (0,1) -- (-0.5,0.5) ;
	\draw[\treecolor!\op,line width=0.2cm] (-0.5,0.5) -- (0,0);
	\draw[\treecolor!\op,line width=0.2cm] (0,1) -- (1,0);
	\draw[\treecolor!\op,line width=0.2cm] (7,0) -- (6.5,0.5);
	\draw[\treecolor!\op,line width=0.2cm] (6.5,0.5) -- (6,0);
	\draw[\treecolor!\op,line width=0.2cm] (5.5,1) -- (6.5,0.5) ;
	\draw[\treecolor!\op,line width=0.2cm] (0.5,1.5) -- (1,2);
	\draw[\treecolor!\op,line width=0.2cm] (1,2) -- (5.5,1);
	\draw[\treecolor!\op,line width=0.2cm] (1,2) -- (3.25,2.5);
	
	\draw[\treecolor!\op,line width=0.2cm] (1,2) -- (3.25,2.5);
	\draw[gray!\op,line width=0.2cm] (4,0) -- (4.5,0.5);
	\draw[gray!\op,line width=0.2cm] (4.5,0.5) -- (5.5,1);
	\draw[gray!\op,line width=0.2cm] (5,0) -- (4.5,0.5);

	\node (v) at (-1,0) {$x_1$};
	\node[fill=\treecolor!\op,circle,inner sep=\sep] (y) at (0,0) {$x_2$};
	\node[fill=\treecolor!\op,circle,inner sep=\sep] (x) at (1,0) {$x_3$};
	\node (nx) at (2,0) {$x_4$};
	\node (ny) at (3,0) {$x_5$};
	\node[fill=gray!\op,circle,inner sep=\sep] (nz) at (4,0) {$x_6$};
	\node[fill=gray!\op,circle,inner sep=\sep] (nw) at (5,0) {$x_7$};
	\node[fill=\treecolor!\op,circle,inner sep=\sep] (w) at (6,0) {$x_8$};
	\node[fill=\treecolor!\op,circle,inner sep=\sep] (z) at (7,0) {$x_9$};

	\node[fill=\treecolor!\op,circle,inner sep=\sep](o0) at (-0.5,0.5) {$+$};
	\node (o1) at (1.5,0.5) {$+$};

	\node[fill=\treecolor!\op,circle,inner sep=\sep] (a1) at (0,1) {$\times$};
	\node (a2) at (2,1) {$\times$};
	\node[fill=gray!\op,circle,inner sep=\sep] (a4) at (4.5,0.5) {$\times$};
	\node[fill=\treecolor!\op,circle,inner sep=\sep] (a5) at (6.5,0.5) {$\times$};

	\node[fill=\treecolor!\op,circle,inner sep=\sep] (o2) at (0.5,1.5) {$+$};
	\node[fill=\treecolor!\op,circle,inner sep=\sep] (o3) at (5.55,1) {$+$};

	\node[fill=\treecolor!\op,circle,inner sep=\sep] (a6) at (1,2) {$\times$};
	\node (a7) at (5.5,2) {$\times$};

	\node[fill=\treecolor!\op,circle,inner sep=\sep] (o4) at (3.25,2.5)  {$+$};

	\draw (x) -- (o1) -- (nx);
	\draw (x) -- (a1) -- (o0);
	\draw (v) -- (o0);
	\draw (a1) -- (o0) -- (y);
	\draw (o1) -- (a2) -- (ny);
	\draw (nz) -- (a4) -- (nw);
	\draw (z) -- (a5) -- (w);
	\draw (a4) -- (o3) -- (a5);
	\draw (a1) -- (o2) -- (a2);
	\draw (a2) -- (a7) -- (a5);
	\draw (o2) -- (a6) -- (o3);
	\draw (a6) -- (o4) -- (a7);
\end{tikzpicture}
\caption{}\label{fig:mutation_two}
\end{subfigure}
\caption{The derivation trees of $x_1x_3x_8x_9$ and $x_2x_3x_8x_9$ and their mutations below $q_6$.}\label{fig:mutations}
\end{figure}

\subsubsection{Further Technical Difficulties.}

A careful reader will immediately notice the situation is more delicate than what has been portrayed above. In particular, two concerns warrant immediate attention: 

\begin{enumerate}
	\item The first concern is about expressions of the form $\Pr[\alpha \in S^r(q)$ and $ \alpha' \in S^r(q)] \leq \frac{(p(q))^2}{\prod_{\hat{q} \in \tau} p(\hat{q})}$ These state an inequality between a probability, so a fixed number, and $p(q)$ which itself is a random variable. %
	
	\item 
	The second concern is about the independence of the sample sets $S^r(\cdot)$ and $S^l(\cdot)$ for $r \neq l$ as we seek to rely on median of means estimation to establish  $p(q)$ is a good approximation of $16n|\supp(q)|^{-1}$ in the context of $+$ nodes.  Although samples in $S^r(\cdot)$ can only construct samples in other  $S^r(\cdot)$ sets, 
	the sampling probabilities of $S^r(q)$ and $S^l(q)$ for the same $q$ are identical, i.e., $p(q)$. Consequently, the random variables corresponding to the  two sets are not independent.

\end{enumerate}

While the high-level intuition remains the same, the technical analysis circumvents these issues through a carefully defined random process, as outlined in Section~\ref{section:randomProcess}. This approach enables us to work with variables that are {\em better behaved} than $ S^r(\cdot) $ and $ p(\cdot)$ for analyzing the algorithm. An interesting direction for future work would be to simplify the analysis, potentially avoiding reliance on the random process.

\paragraph{Organization.} The rest of the paper is organized as follows: We present, in Section~\ref{sec:derivation}, the notion of derivation trees, which is crucial for our analysis. We present the algorithm in Section~\ref{sec:algorithm} and  analyze the correctness and runtime complexity of the algorithm in Section~\ref{sec:analysis}.

\section{Derivation Trees}\label{sec:derivation}

Let $q$ be a node in a $(+,\times)$ program and let $\alpha \in \supp(q)$. Recall that the nodes of the program are totally ordered by $\prec$. We associate the pair $(\alpha,q)$ with a unique derivation tree rooted at q and denoted by $\tree(\alpha,q)$. The tree is defined recursively as follows: 
\begin{itemize}
	\item If $q$ is an input node, then $\tree(\alpha,q) = q$.
	\item If $q$ is a $+$ node, let $q'$ be the first child of $q$ with respect to $\prec$ such that $\alpha \in \supp(q')$. Then, $\tree(\alpha,q) = (q, \{\tree(\alpha,q')\})$
	\item If $q$ is a $\times$ node $q_1 \times q_2$ then we have  $\alpha = \alpha_1\alpha_2 $ such that $\alpha_1 \in \supp(q_1)$ and $\alpha_2 \in \supp(q_2)$. Then, $\tree(\alpha,q) = (q,\{\tree(\alpha_1,q_1), \tree(\alpha_2,q_2)\})$
\end{itemize}

\noindent Each derivation tree $\tree(\alpha,q)$ in a degree $n$ multilinear program has $|\alpha|$ leaves, so at most $n$ leaves. We claim that $\tree(\alpha,q)$ has at most $4n$ nodes in total. $\times$ nodes have fan-in of $2$ in $\tree(\alpha,q)$ and $+$ nodes have fan-in of $1$ in $\tree(\alpha,q)$ so, replacing $+$ nodes by edges in $\tree(\alpha,q)$ yields a binary tree $T$ with at most $n$ leaves and whose internal nodes are the $\times$ nodes. So at most $n$ nodes of the derivation tree are $\times$ nodes. By assumption on $\calP$, there can only be one $+$ node between two $\times$ nodes or between a $\times$ node and a leaf. So the number of $+$ nodes in $\tree(\alpha,q)$ is at most the number of edges in $T$, so at most $2n$. Thus, $\tree(\alpha,q)$ has at most $4n$ nodes in total.

\begin{example}
In the $(+,\times)$ programs represented in Figure~\ref{fig:red_blue_trees}, the monomial $\alpha = x_3x_5x_8x_9$ is in the support of the root $+$ node $q_0$ and can be derived in two ways highlighted in red and blue. If the children of $q$ are ordered as $q_1 \prec q_2$ then the derivation tree $\tree(\alpha,q)$ is the one shown in Figure~\ref{fig:red_derivation_tree} going through $q_1$.
\end{example}

Given two monomials $\alpha \in \supp(q)$ and $\alpha' \in \supp(q')$, for any nodes $q$ and $q'$, we call the \emph{last common subtree nodeset} of $\tree(\alpha,q)$ and $\tree(\alpha',q')$, denoted by $\lcsn{\tree(\alpha,q)}{\tree(\alpha',q')}$, the set of highest nodes such that $\hat{q} \in \lcsn{\tree(\alpha,q)}{\tree(\alpha',q')}$ if the following conditions hold: 
\begin{itemize}
	\item $\hat{q}$ is a node in $\tree(\alpha,q)$ and the subtrees rooted at $\hat q$ in $\tree(\alpha,q)$ and $\tree(\alpha',q')$ are equal.
	\item There is no other tree containing node $\hat{q}$ that is a subtree of both $\tree(\alpha,q)$ and $\tree(\alpha',q')$. 
\end{itemize}

\begin{example}
Figure~\ref{fig:real_derivation_tree} shows the derivation tree for $\tree(x_3x_5x_8x_9,q_0)$ in red. Figure~\ref{fig:other_derivation_tree} shows the derivation tree $\tree(x_1x_3x_8x_9,q_0)$ in blue. The two trees share eight nodes. The highest nodes up to which the two trees are identical are $q_6$ and $q_{13}$. So $\lcsn{\tree(x_3x_5x_8x_9,q_0)}{\tree(x_1x_3x_8x_9,q_0)} = \{q_6,q_{13}\}$.
\end{example}

It is worth observing that $\lcsn{\tree(\alpha,q)}{\tree(\alpha',q')}$ forms an antichain, i.e., it does not contain two nodes such that one is an ancestor of the other. Let $\tau$ be an antichain in the $(+,\times)$ program. We define 
\[
I(\alpha,q,\tau) := \{ \alpha' \in \supp(q) \mid \lcsn{\tree(\alpha,q)}{\tree(\alpha',q)} = \tau\}	.
\]

\begin{lemma}\label{lemma:first_antichain_lemma}
For every $\alpha \in \supp(q)$ and antichain $\tau$ over nodes of $\tree(\alpha,q)$, $| I(\alpha,q,\tau)| \leq \frac{|\supp(q)|}{\prod_{\hat{q} \in \tau} |\supp(\hat{q})|}$, where $\prod_{\hat{q} \in \tau} |\supp(\hat{q})| = 1$ when $\tau = \emptyset$.
\end{lemma}
\begin{proof}
The lemma is clear when $\tau = \emptyset$. Let $\var(\tau) = \bigcup_{\hat q \in \tau}  \var(\hat q)$. In a derivation trees, if two nodes $\hat q$ and $\hat q'$ are such that none is not the ancestor of the other, then $\var(\hat q) \cap \var(\hat q') = \emptyset$. So, since $\tau$ is an antichain over a derivation trees, it holds that $\var(\hat q) \cap \var(\hat q') = \emptyset$ for every $\hat q' \neq \hat q$ in~$\tau$. 

Let $I(\alpha,q,\tau) = \{\alpha_1,\alpha_2,\dots\}$. For every node $\hat q \in \tau$ let $\alpha_{\hat q}$ be the restriction of $\alpha$ to $\var(\hat q)$, let $\alpha_{i,\hat q}$ be the restriction of $\alpha_i$ to $\var(\hat q)$. Let also $\alpha'_i$ be the restriction of $\alpha_i$ to $\var(q) \setminus \var(\tau)$. Note that $\alpha_i = \alpha'_i\cdot \prod_{\hat q \in \tau} \alpha_{i,\hat q}$. From the definition of $I(\alpha,q,\tau)$, we have that, for every $\hat q \in \tau$ and every $i$, $\alpha_{\hat q, i} = \alpha_{\hat q}$. Thus, for every $i \neq j$, we have $\alpha'_i \neq \alpha'_j$. 

Now, for any $\hat q \in \tau$ and any $\beta_{\hat q} \in \supp(\hat q)$, if we replace $\alpha_{\hat q}$ by $\beta_{\hat q}$ in $\alpha_i$, then we still have a monomial in $\supp(q)$. This can be done for all $\hat q$, so each $\alpha_i$ is used to create monomials of the form $\alpha'_i \cdot \prod_{\hat q \in \tau} \beta_{\hat q}$ ($\alpha_i$ itself is of that form). If two  monomials  $\alpha'_i \cdot \prod_{\hat q \in \tau} \beta_{\hat q}$ and  $\alpha'_i \cdot \prod_{\hat q \in \tau} \beta'_{\hat q}$ are such that $\beta_{\hat q} \neq \beta'_{\hat q}$ for some $\hat q \in \tau$ then the two monomials are distinct since their restrictions to $\var(\hat q)$ are precisely $\beta_{\hat q}$ and $\beta'_{\hat q}$, respectively. So we create $\prod_{\hat q \in \tau} |\supp(\hat q)|$ different monomials from $\alpha_i$, all in $\supp(q)$. 

For $i \neq j$, the monomials $\alpha'_i\cdot \prod_{\hat{q} \in \tau} \beta_{\hat q}$ and $\alpha'_j\cdot \prod_{\hat{q} \in \tau} \beta'_{\hat q}$ are distinct since their restrictions to $\var(q) \setminus \var(\tau)$ are $\alpha'_i$ and $\alpha'_j$, respectively. So in total can create $|I(\alpha,q,\tau)|\cdot\prod_{\hat{q} \in \tau} |\supp(\hat{q})|$ distinct monomials in $\supp(q)$. Hence $|I(\alpha,q,\tau)| \leq |\supp(q)|/\prod_{\hat{q} \in \tau} |\supp(\hat{q})|$.
\end{proof}

We will need a stronger version of Lemma~\ref{lemma:first_antichain_lemma}. For $V \subseteq \calP$, we write 
$$
\lcsnvar{\tree(\alpha,q)}{\tree(\alpha',q')}{V} = V \cap (\lcsn{\tree(\alpha,q)}{\tree(\alpha',q')}).
$$
One can show that Lemma~\ref{lemma:first_antichain_lemma} holds for $I_V(\alpha,q,\tau) = \{ \alpha' \in \supp(q) \mid \lcsnvar{\tree(\alpha,q)}{\tree(\alpha',q)}{V} = \tau\}$. More generally, we can make $V$ depends on $\alpha$, $\alpha'$ and $q$ and still retains the bound. Formally, let $f$ be a function that maps every triple $(\alpha,\alpha',q)$ to a subset of $\calP$, and let 
\[
I_f(\alpha,q,\tau) := \{ \alpha' \in \supp(q) \mid \lcsnvar{\tree(\alpha,q)}{\tree(\alpha',q)}{f(\alpha,\alpha',q)} = \tau\}	.
\]
\begin{lemma}\label{lemma:antichain_lemma}
For every $\alpha \in \supp(q)$ and antichain $\tau$ over nodes of $\tree(\alpha,q)$, $|I_f(\alpha,q,\tau)| \leq \frac{|\supp(q)|}{\prod_{\hat{q} \in \tau} |\supp(\hat{q})|}$, where $\prod_{\hat{q} \in \tau} |\supp(\hat{q})| = 1$ when $\tau = \emptyset$.
\end{lemma}
\noindent The proof is identical to the proof of Lemma~\ref{lemma:first_antichain_lemma}.

The reason behind Lemma~\ref{lemma:antichain_lemma} is that we will need to use restrictions of the derivation trees on various subsets of nodes. In particular, we will use a variant of $\tree(\alpha,q)$ denoted by $\tree^*(\alpha,q)$. It is constructed like $\tree(\alpha,q)$ except that, if $q$ is in $Q^0$ then $\tree^*(\alpha,q) = q$. Formally:
\begin{itemize}
	\item If $q$ is an node of effective height $0$, then $\tree^*(\alpha,q) = q$.
	\item If $q$ is a $+$ node of effective height at least $1$, let $q'$ be the first child of $q$ with respect to $\prec$ such that $\alpha \in \supp(q')$. Then, $\tree^*(\alpha,q) = (q, \{\tree^*(\alpha,q')\})$.
	\item If $q$ is a $\times$ node $q_1 \times q_2$ of effective height at least $1$ then we have  $\alpha = \alpha_1\alpha_2 $ such that $\alpha_1 \in \supp(q_1)$ and $\alpha_2 \in \supp(q_2)$. Then, $\tree^*(\alpha,q) = (q,\{\tree^*(\alpha_1,q_1), \tree^*(\alpha_2,q_2)\})$.
\end{itemize}
	So $\tree^*(\alpha,q)$ is exactly like $\tree(\alpha,q)$ except that each branch stops as soon as a node of effective height~$0$ is reached. The algorithm presented in the next section respects $\tree^*(\alpha,q)$ in the sense that a monomial $\alpha$ from the support of $q$ will be sampled only if it is constructed following $\tree^*(\alpha,q)$. 
	
	The last common subtree nodeset remains a key concept for the restricted trees $\tree^*(\alpha,q)$. Let $V^*$ be $nodes(\tree^*(\alpha,q))\,\cup\,nodes(\tree^*(\alpha',q'))$, then we define $\tree^*(\alpha,q) \land \tree^*(\alpha',q')$ as 
	$$
	\tree^*(\alpha,q) \land \tree^*(\alpha',q') = \lcsnvar{\tree(\alpha,q)}{\tree(\alpha',q')}{V^*}.
	$$ 
So this is just the last common subtree nodeset of $\tree(\alpha,q)$ and $\tree(\alpha',q')$ but restricted to the nodes kept in $\tree^*(\alpha,q)$ and $\tree^*(\alpha',q')$.

\section{Algorithm.}\label{sec:algorithm}

Our goal is to compute a good approximation of $|\supp(\calP)|$. Let $n$ be the degree of $\calP$. Our algorithm constructs in a bottom-up manner $p(q)$, which seeks to estimate $16n/|\supp(q)|$, and samples from $\supp(q)$ for every node $q$ of $\calP$. In fact, for every $q$ we sample several sample sets  $S^1(q),\dots,S^{n_s n_t}(q)$, all subsets of $\supp(q)$, where the numbers $n_s$ and $n_t$ are given in algorithm $\approxMCDNNF$. $S^r(q)$ and $p(q)$ are computed using $\reduce$, thus they are random variables. We estimate $16n/|\supp(q)|$ rather than $1/|\supp(q)|$ for technical reasons. The sets $S^r(q)$ are sampled in such a way that, ideally, $\Pr[\alpha \in S^r(q)] = p(q)$ holds for every $\alpha \in \supp(q)$ once the value of $p(q)$ is known. Thus, if $p(q)$ is a good estimate of $16n/|\supp(q)|$, then each set $S^r(q)$ should contain around $16n$ monomials in expectation.

The main part of the algorithm is $\approxMCDNNFcore$. $\approxMCDNNFcore$ has access to all estimates $p(q)$ and all sample sets $S^r(q)$ and processes the nodes of $\calP$ in height-first order (so a node is processed after its children). Processing a node $q$ means computing a value for $p(q)$ and constructing all sets $S^1(q),\dots,S^{n_sn_t}(q)$. 

The first nodes verify $|\supp(q)| \leq 16n|\calP|^2$, i.e., their effective height is $0$. They are handled in a particular way. The algorithm starts by listing a polynomial amount of monomials of $\supp(q)$. If the enumeration lists fewer than $16n|\calP|^2 + 1$ monomials, then we know that $q \in Q^0$ and we have found $\supp(q)$. So we set $p(q)$ to $\min(1,16n/|\supp(q)|)$ and $S^r(q)$ to $\reduce(\supp(q),p(q))$; this ensures that $\Pr[\alpha \in S^r(q)] = p(q)$(lines~\ref{line:not_enough_monomials} to~\ref{line:S(q)_height_zero} in $\approxMCDNNFcore$). Note that when $|\supp(q)| < 16n$, we can not see $p(q)$ as both an estimate of $16n/|\supp(q)|$ and a probability. Hence our choice to set $p(q) = \min(1,16n/|\supp(q)|)$.

When the enumeration of the support finds that $|\supp(q)| > 16n|\calP|^2$, it is stopped and a specific procedure is called depending on the type of the node. When $q$ is a $\times$ node $q_1 \times q_2$, the procedure $\AndEstimateSample$ computes $p(q)$ from $p(q_1)$ and $p(q_2)$ deterministically in way ensuring that, if $p(q_1)$ and $p(q_2)$ are good estimates of $16n/|\supp(q_1)|$ and $16n/|\supp(q_2)|$, respectively, then $p(q)$ is a good estimate of $16n/|supp(q)|$. Next, each set $S^r(q)$ is constructed by reducing the cross product of $S^r(q_1)$ with $S^r(q_2)$. The multilinearity of $\calP$ guarantees that the cross product creates no duplicate. Without reduction, the size of the sample sets would grow too large and we would drift away from the ideal behavior for the sample sets, i.e., we would violate $\Pr[\alpha \in S^r(q)] = p(q)$.

$+$ nodes are handled by $\OrEstimateSample$. Let $q = q_1 + \dots + q_k$ with $q_1 \prec \dots \prec q_k$. $p(q)$ is computed from the values $p(q_i)$ and the sets $S^r(q_i)$. The idea is to see each $S^r(q_i)$ as if its elements were taken uniformly at random from $\supp(q_i)$. By sampling from the sets $S^r(q_i)$ with  $\reduce$ and rejecting samples with $\union$, we obtain $n_sn_t$ sample sets $\hat{S}^1(q),\dots,\hat{S}^{n_sn_t}(q)$ for $q$. $\union$ keeps a  monomial from $S^r(q_i)$ in $\hat S^r(q_i)$ only if this monomial does not belong to any $\supp(q_j)$ for $q_j \prec q_i$. This sampling method $\reduce$+$\union$ for $+$ nodes is inspired from~\cite{MVC21}. Note that determining whether a monomial $\alpha$ is in $\supp(q_j)$ is feasible in polynomial time. The size of the sets $\hat{S}^r(q)$ is used in a median of means estimator to compute $p(q)$. $\hat{S}^r(q)$ is also used to prepare the sample sets $S^r(q)$ that will later be used to process the parents of $q$. We obtain $S^r (q)$ by simply reducing $\hat{S}^r(q)$ to ensure that $\Pr[\alpha \in S^r(q)] = p(q)$.

\begin{algorithm}[H]
	\begin{algorithmic}[1]
		\STATE Let $\prev(q) = (q_1,\dots,q_k)$
		\FOR{$1 \leq i \leq k$}
		\STATE $S'_i = S_i \setminus (\supp(q_1) \cup \dots \cup \supp(q_{i-1}))$
		\ENDFOR
		\STATE return $S'_1 \cup \dots \cup S'_k$.
	\end{algorithmic}
	\caption{$\union(q,S_1,\dots,S_k)$}
\end{algorithm}

Even though we try to control the size of sample sets, with poor luck some sets might grow large and slow down the algorithm. To ensure a polynomial running time, the algorithm terminates whenever the size of some $S^r(q)$ grows beyond a threshold value $\theta$, in which case the value $0$ is returned as an estimate for $|\supp(\calP)|$. This is  Lines~\ref{line:interrupt1} and~\ref{line:interrupt2} in $\approxMCDNNFcore$. We will later prove that Line~\ref{line:interrupt2} is triggered with low probability.

\begin{algorithm}[H]
	\begin{algorithmic}[1]
		\STATE Let $\prev(q) = (q_1,\dots,q_k)$
		\STATE $\rho(q) = \min(p(q_1),\dots,p(q_k))$
		\FOR{$1 \leq r \leq n_s  n_t$}
		\STATE $\hat{S}^r(q) = \union\left(q,\reduce\left(S^r(q_1),\frac{\rho(q)}{p(q_1)}\right),\dots,\reduce\left(S^r(q_k),\frac{\rho(q)}{p(q_k)}\right)\right)$
		
		\ENDFOR
		\STATE $\hat\rho(q) = 16 n\cdot\left(\underset{0 \leq j < n_t}{\median}\frac{1}{\rho(q)\cdot n_s}\sum\limits_{r = j\cdot n_s+1}^{(j+1)n_s}|\hat{S}^r(q)|\right)^{-1}$ \hfill // Median of means estimator\label{line:median_of_means}
		\STATE $p(q) = \round(q,\min(\rho(q),\hat\rho(q)))$ 
		\FOR{$1 \leq r \leq n_s  n_t$}
		\STATE $S^r(q) = \reduce\left(\hat{S}^r(q),\frac{p(q)}{\rho(q)}\right)$
		\ENDFOR

	\end{algorithmic}
	\caption{$\OrEstimateSample(q,\calP)$}
\end{algorithm}

\begin{algorithm}[H]
	\begin{algorithmic}[1]
	\STATE Let  $\prev(q) = (q_1,q_2)$
	\IF{$p(q_1) = 1$}
		\STATE $p(q) = \round(q,\frac{p(q_2)}{|S^1(q_1)|})$ \hfill // Case when the effective height of $q_1$ is $0$
	\ELSIF{$p(q_2) = 1$}
		\STATE $p(q) = \round(q,\frac{p(q_1)}{|S^1(q_2)|})$ \hfill // Case when the effective height of $q_2$ is $0$
	\ELSE
		\STATE $p(q) = \round(q,\frac{p(q_1)  p(q_2)}{16n})$ \hfill // General case
	\ENDIF	
	\FOR{$1 \leq r \leq n_s  n_t$}
		\STATE $S^r(q) = \reduce\left(S^r(q_1)\otimes S^r(q_2),\frac{p(q)}{p(q_1)  p(q_2)}\right)$
	\ENDFOR
	
	\end{algorithmic}
	\caption{$\AndEstimateSample(q,\calP)$}
\end{algorithm}

For the analysis, we need to restrict the number of possible values for $p(q)$. The acceptable values for $q \in Q^i$, $i > 0$, are $1$ and all $16n e^{-\kappa \cdot 2^i}/\ell$ and $16n e^{+\kappa \cdot 2^i}/\ell$ for integers $\ell$ between $1$ and $2^n$. The function $\round(q,v)$ takes in $q$ and a value $v$ and returns the hightest acceptable value smaller than $v$. Note that the value $\min(1,16n/|\supp(q)|)$ given $p(q)$ to nodes $q$ with $|\supp(q)| \leq 16n|\calP|^2$ is acceptable and thus needs not be rounded.

\begin{algorithm}[h]
	\begin{algorithmic}[1]
		\STATE $o = $ output node of $\calP$
		\IF{$|\supp(o)| \leq 16n|\calP|^2$}
		\RETURN $|\supp(o)|$
		\ENDIF
		\FOR{$q \in \mathsf{sorted}(Q,\mathtt{height})$} 
		\IF{$|\supp(q)| \leq 16n|\calP|^2$}\label{line:not_enough_monomials}
			\STATE $p(q) = \min(1,\frac{16n}{|\supp(q)|})$ 
			\hfill
			// Process nodes with ``few'' monomials \label{line:p(q)_height_zero}
				\FOR{$1 \leq r \leq n_s n_t$}
				\STATE $S^r(q) = \reduce(\supp(q),p(q))$ \label{line:S(q)_height_zero}
				\ENDFOR	
		\ELSE
		\STATE \textbf{if} $q$ is $\times$ node \textbf{then}	$\AndEstimateSample(q,\calP)$
		\hfill // Process $\times$ node
		\STATE \textbf{if} $q$ is $+$ node \textbf{then}	$\OrEstimateSample(q,\calP)$
		\hfill //  Process $+$ node
		\ENDIF
		\FOR{$1 \leq r \leq n_s n_t$}
		\IF{$|S^r(q)| \geq \threshold$}  \label{line:interrupt1} 
		\STATE return $0$
		\hfill // Terminate when sample sets grow large \label{line:interrupt2}
		\ENDIF
		\ENDFOR
   \ENDFOR
	\RETURN $\frac{16n}{p(o)}$
	\end{algorithmic}
	\caption{$\approxMCDNNFcore(\calP,n_s, n_t, \theta)$}
\end{algorithm}

For well-chosen parameters $n_s$, $n_t$ and $\theta$, $\approxMCDNNFcore$ returns an estimate of $|\supp(q)|$ that respects some guarantees with  probability (strictly) greater than $1/2$. A standard median amplification increases the confidence to $1 - \delta$ in algorithm $\approxMCDNNF$.

\begin{algorithm}[h]
	\begin{algorithmic}[1]
		\STATE $n = \deg(\calP)$
		\STATE Transform $\calP$ into an equivalent homogeneous program of depth at most $3\lceil \log(n) \rceil$
		\STATE $\varepsilon' = \min(\varepsilon,\frac{4}{\log(e)})$,  $\kappa = \frac{\varepsilon'}{4(n+1)^3}$, $n_s = \lceil \frac{12}{\kappa^2}  \rceil$,  $n_t = 8n|\calP|$, $\theta = 512n_sn_tn|\calP|$
		\STATE $m = \lceil 16\log(\frac{1}{\delta}) \rceil$
		\FOR{$1 \leq j \leq m$}
		\STATE $\mathsf{est}_j = \approxMCDNNFcore(\calP,n_s,n_t,\theta)$
		\ENDFOR
		\STATE return $\median(\mathsf{est}_1,\dots,\mathsf{est}_m)$
	\end{algorithmic}
	\caption{$\approxMCDNNF(\calP,\varepsilon,\delta)$}
\end{algorithm}

\section{Analysis of \approxMCDNNF.}\label{sec:analysis}

We now focus on the analysis of \approxMCDNNF. The hardest part to analyze is the core algorithm \approxMCDNNFcore, for which we will prove the following.

\begin{lemma}\label{lemma:main_result_core}
	Let $\calP$ be an homogeneous and multilinear $(+,\times)$ program of depth $d \leq 3 \lceil \log n \rceil$ and degree $n$.
	Let $0 < \varepsilon < 4/\log(e)$, $\kappa = \varepsilon/(4n^{3})$, $n_s = \lceil 12/\kappa^2 \rceil$, $n_t = 8n|\calP|$ and $\theta =  512n_s n_t n |\calP|$. The algorithm $\approxMCDNNFcore(\calP,n_s,n_t,\theta)$ runs in time $O(|\calP|^5n^2_sn^2_tn)$ and returns $\mathsf{est}$ with the guarantee
	$
		\Pr\left[\mathsf{est}  \notin (1 \pm \varepsilon) |\supp(\calP)| \right] \leq 1/4
	$.
\end{lemma}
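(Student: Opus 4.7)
My plan is to analyze $\approxMCDNNFcore$ through the random process of Section~\ref{sec:algorithm}, which gives clean probabilistic statements (Lemmas~\ref{lemma:proba_first_order} and~\ref{lemma:proba_second_order}) once we condition on a history $h$ on the children. The goal is to prove inductively, in height-first order, that at every node $q$ with $|\supp(q)|>16n$ the ratio $p(q)/\tfrac{16n}{|\supp(q)|}$ is close to $1$ with probability at least $1-\tfrac{1}{8|\calP|}$, with a multiplicative deviation that accumulates along the program. Since $\kappa=\varepsilon/(4n^3)$ is a very small slack and the program has depth at most $3\lceil\log n\rceil$, careful accounting of how $\kappa$ compounds will yield a root-level deviation at most $\varepsilon$, so $\mathsf{est}=16n/p(q_{\mathrm{root}})\in(1\pm\varepsilon)|\supp(\calP)|$ after a union bound over the $|\calP|$ nodes. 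I would handle the runtime separately by showing that the sample sets stay below the threshold $\theta$ with high probability via Markov's inequality, so the cutoff on line~\ref{line:interrupt1} almost never fires.

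The easy induction steps are the base case and the $\times$ node. Nodes of effective height $0$ are processed deterministically with $p(q)=\min(1,\tfrac{16n}{|\supp(q)|})$, so the bound is exact. For $\times$ nodes, decomposability (Fact~\ref{fact:product_are_decomposable}) gives $|\supp(q)|=|\supp(q_1)|\cdot|\supp(q_2)|$, so the update $p(q)=p(q_1)p(q_2)/(16n)$ makes the relative errors of the children multiply; no fresh randomness enters beyond reducing the cross-product sample, which by the first-moment identity preserves $\Ex[|S^r(q)|\mid h]=p(q)|\supp(q)|$.

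The hard step, which I expect to be the main obstacle, is the $+$-node case. Fixing a history $h$ on the children that satisfies their induction hypothesis, Lemma~\ref{lemma:proba_first_order} makes $\frac{1}{\rho n_s}\sum_r|\widehat{S}^r(q)|$ an unbiased estimator of $|\supp(q)|$, but its variance must be controlled. The plan is to expand $\Ex[|\widehat{S}^r(q)|^2\mid h]=\sum_{\alpha,\alpha'\in\supp(q)}\Pr[\alpha,\alpha'\in Z^r_{q,h}]$, partition pairs $(\alpha,\alpha')$ according to the antichain $\tau=\lcsn(\tree^*(\alpha,q),\tree^*(\alpha',q))$, bound each per-pair probability by Lemma~\ref{lemma:proba_second_order}, and bound the number of pairs per $\tau$ by Lemma~\ref{lemma:antichain_lemma}. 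The inductive approximation $h(\hat q)\approx\tfrac{16n}{|\supp(\hat q)|}$ will make the $|\supp(\hat q)|$ factors cancel, leaving a sum of the shape $\sum_\tau (16n)^{-|\tau|}|\supp(q)|^2$ that is geometric in $|\tau|$; using $|\tau|\leq 3n$ (Lemma~\ref{lemma:3n_nodes_max}) and the effective-height threshold of $16n|\calP|^2$ to kill the contributions of short antichains, this should be $O(|\supp(q)|^2/n_s)$ after dividing by $n_s$. Chebyshev plus the $n_t$-fold median amplification then brings the per-node failure probability to $1/(16|\calP|)$, as required.

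For the runtime and the cutoff, I would use the first-moment identity to conclude $\Ex[|S^r(q)|\mid h]\leq 16n$ along any good history, so Markov's inequality makes $|S^r(q)|\geq\theta=512nn_sn_t|\calP|$ exceptional over all $n_sn_t\cdot|\calP|$ sets. The polynomial time bound follows by charging each node $O(|\calP|^2)$ per $\union$ membership test per sample, over $n_sn_t$ copies and $|\calP|$ nodes. Summing the inductive, median, and cutoff failure probabilities stays below $1/4$ by the parameter choices.
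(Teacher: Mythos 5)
Your proposal is correct and takes essentially the same route as the paper's proof: pass from the algorithm to the random process by conditioning on histories (Fact~\ref{fact:not_so_black_magic}), propagate the multiplicative error deterministically through $\times$ nodes, bound the second moment of the $+$-node estimator by partitioning pairs according to the $\lcsn$ antichain and combining Lemma~\ref{lemma:proba_second_order} with Lemma~\ref{lemma:antichain_lemma} to get variance $O(\mu^2)$, then use Chebyshev with $n_s \geq 12/\kappa^2$ and median amplification over the $n_t$ blocks for a per-node failure probability of order $1/|\calP|$, and finally dispose of the $\theta$-cutoff and the running time via the first-moment identity, Markov's inequality, and a union bound. The only piece you leave implicit (``careful accounting of how $\kappa$ compounds'') is precisely the paper's invariant that the tolerated deviation at effective height $i$ is $e^{\pm\kappa 2^i}$, which doubles at $\times$ nodes (the exponents $2^j+2^k\leq 2^i$ add) and over depth $3\lceil\log n\rceil$ stays within $\kappa n^3=\varepsilon/4$ at the root.
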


The probability $1/4$ is decreased down to any $\delta > 0$ with the median technique. Thus giving our main result. 

\mainResult*
\begin{proof}
First, $\calP$ is transformed into an equivalent program $\calP'$ using Lemma~\ref{lemma:depth_reduction}. Let $\mathsf{est}_1,\dots,\mathsf{est}_m$ be the estimates returned by $m = \lceil 16\log(1/\delta) \rceil$ independent calls to $\approxMCDNNFcore(\calP',n_s,n_t,\theta)$, with $\kappa$, $n_s$, $n_t$ and $\theta$ set as in Lemma~\ref{lemma:main_result_core}. Let $X_i$ be the indicator variable that takes value $1$ if and only if $\mathsf{est}_i \not\in (1\pm \varepsilon)|\supp(\calP)|$ and let $\bar X = X_1 + \dots + X_m$. By Lemma~\ref{lemma:main_result_core}, $\Ex[\bar X] \leq m/4$. Using Hoeffding bound, we have 
	\[
	\Pr\left[\underset{1 \leq i \leq m}{\median(X_i)} \not\in (1 \pm \varepsilon)|\supp(\calP)|\right] = \Pr\left[\bar X > \frac{m}{2}\right] \leq \Pr\left[\bar X - \Ex[\bar X] > \frac{m}{4}\right] \leq \exp\left(-\frac{m}{8}\right) \leq \delta.
	\]
	Using Lemma~\ref{lemma:main_result_core}, we find that computing a single $\mathsf{est}_j$ takes time $O(n^{15}|\calP'|^7\varepsilon^{-4}) = O(n^{15}|\calP|^{14}\varepsilon^{-4})$. The total running is $m$ times that, plus the time needed by the transformation to $\calP'$.
\end{proof}

In the rest of the section we prove Lemma~\ref{lemma:main_result_core}.

\subsection{Allowing the sample sets to grow large.}

$\approxMCDNNFcore$ has a feature that is not convenient for the analysis, namely, it stops whenever a sample set $S^r(q)$ has size greater than $\theta$. This corresponds to Lines~\ref{line:interrupt1} and~\ref{line:interrupt2} in \approxMCDNNFcore. Recall that  \approxMCDNNFcore$^*$ is the same algorithm $\approxMCDNNFcore$ as without these two lines. So \approxMCDNNFcore$^*$ does not stop when $S^r(q)$ grows big. We argue that working with \approxMCDNNFcore$^*$ is enough to prove Lemma~\ref{lemma:main_result_core} without the running time requirement. In particular, it will be enough to prove the following two lemmas. For these lemmas and after, $\calP$ is an homogeneous and multilinear $(+,\times)$ program of degree $n$ and depth $3\lceil  \log(n) \rceil$. 
 Furthermore, $0 < \varepsilon \leq 4/\log(e)$, $\kappa = \varepsilon/(4(n+1)^3)$, $n_s = \lceil 12/\kappa^2 \rceil$, $n_t = 8n|\calP|$ and $\theta =  512n_s n_tn|\calP|$. We also take $n$ large enough, namely $n \geq 16$.

\begin{lemma}\label{lemma:proba_p(q)}
The probability that $\approxMCDNNFcore^*(\calP,n_s,n_t,\theta)$ computes $p(q) \not\in 16ne^{\pm \kappa 2^i}|\supp(q)|^{-1}$ for some $q \in Q^{> 0}$ of $\calP$ is at most $1/16$, where $i$ is the degree of the node $q$.
\end{lemma}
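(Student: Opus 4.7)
The plan is to proceed by induction on the effective height $i$, showing that the bound $p(q) \in 16n \cdot e^{\pm \kappa 2^i}|\supp(q)|^{-1}$ holds for every node at effective height $\leq i$ with probability $\geq 1 - i/(16 \lceil 3 \log n \rceil)$ (or a similar telescoping quantity), so that at the maximum depth the cumulative failure probability is at most $1/8$. The base case is anchored by nodes in $Q^0$, which $\approxMCDNNFcore^*$ processes deterministically by setting $p(q) = \min(1, 16n/|\supp(q)|)$, so the bound holds with error factor $1$; these nodes also feed the inductive step without introducing any randomness of their own. At height $i \geq 1$ I would fix a node $q$ and condition on a history $h$ for $q$ satisfying the inductive bound on all descendants (the complementary event is absorbed by the union-bound loss). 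By Fact~\ref{fact:not_so_black_magic}, conditioning on $h$ lets us replace the algorithmic variables $\widehat{S}^r(q), S^r(q), p(q)$ by their random-process counterparts $Z^r_{q,h}, Y^r_{q,t,h}, p^h(q)$, which behave cleanly.

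For a $\times$ node $q = q_1 \times q_2$ the update is deterministic: the general rule $p(q) = p(q_1)p(q_2)/(16n)$ combined with the multilinear factorization $|\supp(q)| = |\supp(q_1)||\supp(q_2)|$ (Fact~\ref{fact:product_are_decomposable}) yields $p(q) \in 16n \cdot e^{\pm \kappa(2^{i_1}+2^{i_2})}/|\supp(q)|$, and since $2^{i_1} + 2^{i_2} \leq 2^{\max(i_1,i_2)+1} \leq 2^i$ the error stays within budget. The special cases where $p(q_j) = 1$ are easier because then $|\supp(q_j)| \leq 16n$ forces $S^r(q_j) = \supp(q_j)$ exactly, so $|S^1(q_j)| = |\supp(q_j)|$ and the formula $p(q) = p(q_{3-j})/|S^1(q_j)|$ inherits the error of $p(q_{3-j})$ without amplification.

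The main analytical work sits in the $+$ node case, where $p(q)$ is obtained from a median-of-means estimator over $|\widehat{S}^r(q)| = |Z^r_{q,h}|$. Lemma~\ref{lemma:proba_first_order} gives the mean $\Ex[|Z^r_{q,h}|] = \rho \cdot |\supp(q)|$, so each inner average has mean $|\supp(q)|$. For the variance, I would partition the pairs $(\alpha,\alpha') \in \supp(q)^2$ by their last common subtree nodeset $\tau$ in $\tree^*$, bound the class size via Lemma~\ref{lemma:antichain_lemma} by $|\supp(q)|/\prod_{\hat q \in \tau}|\supp(\hat q)|$, and bound the pairwise inclusion probability via Lemma~\ref{lemma:proba_second_order} by $\rho^2/\prod_{\hat q \in \tau} h(\hat q)$. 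The inductive control $h(\hat q) \in 16n \cdot e^{\pm \kappa 2^{i-1}}/|\supp(\hat q)|$ causes the factors $\prod h(\hat q)$ and $\prod |\supp(\hat q)|$ to nearly cancel, leaving a per-antichain contribution of $O(\rho^2|\supp(q)|^2/(16n)^{|\tau|})$. Since $\tree^*(\alpha, q)$ has at most $3n$ nodes by Lemma~\ref{lemma:3n_nodes_max}, the sum $\sum_\tau (16n)^{-|\tau|}$ is dominated by $\sum_k \binom{3n}{k}(16n)^{-k} = O(1)$, so $\Va[|Z^r_{q,h}|] = O(\rho^2|\supp(q)|^2)$. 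Averaging over $n_s = \Omega(\kappa^{-2})$ independent copies, Chebyshev yields each window average within $(1 \pm O(\kappa))|\supp(q)|$ with probability strictly greater than $1/2$; median amplification with $n_t = \Omega(\log|\calP|)$ then drives the per-node failure probability below $(16|\calP|)^{-2}$, and a union bound over the at most $|\calP|$ nodes finishes the level.

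The delicate obstacle is the variance bound for $+$ nodes: one must simultaneously (i) exploit the conditional-independence machinery of Lemma~\ref{lemma:conditional_independence} so that conditioning on the history behaves cleanly despite $p(q_j)$ being random, (ii) keep track of how the multiplicative inductive error $e^{\pm \kappa 2^{i-1}}$ on the $h(\hat q)$ values propagates through the telescoping product and still leaves a $(1 \pm O(\kappa 2^i))$ window on $p(q)$, and (iii) orchestrate the antichain-by-antichain summation so that the geometric decay $1/(16n)^{|\tau|}$ really does dominate the combinatorial explosion in the number of antichains. The rest of the argument is essentially bookkeeping, but this step is where the algorithm's design choices (the factor $16n$, the use of $\tree^*$, and the no-$+$-child-of-$+$ assumption) all coincide to make the variance bound just tight enough.
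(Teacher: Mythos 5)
Your proposal follows essentially the same route as the paper: reduce to the first node whose $p(q)$ leaves the window $16n\cdot e^{\pm\kappa 2^i}|\supp(q)|^{-1}$ while all descendants are still in range, pass to the random process via Fact~\ref{fact:not_so_black_magic}, treat $\times$ nodes (including the $p(q_j)=1$ case) deterministically, and for $+$ nodes bound the median-of-means estimator by combining the mean from Lemma~\ref{lemma:proba_first_order} with the variance bound obtained from Lemma~\ref{lemma:proba_second_order} and the antichain-counting Lemma~\ref{lemma:antichain_lemma}, then Chebyshev, Hoeffding for the median, and a union bound. The only cosmetic deviations are your per-level inductive bookkeeping in place of the paper's single union bound over all nodes, and that a complete write-up would also need the paper's short claim that the sanity check $p(q)=\min(\rho,\hat\rho)$ cannot push $p(q)$ out of the window when $\hat\rho$ is in it.
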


\begin{lemma}\label{lemma:proba_S(q)}
The probability that $\approxMCDNNFcore^*(\calP,n_s,n_t,\theta)$ constructs a set $S^r(q)$ with $|S^r(q)| \geq \theta$ for some $q \in \calP$ and some $r \in [n_sn_t]$ is at most $1/8$.
\end{lemma}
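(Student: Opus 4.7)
The plan is to combine Markov's inequality applied per pair $(q,r)$ with a union bound, deriving the required expectation estimate from the coupling of Fact~\ref{fact:not_so_black_magic} together with Lemmas~\ref{lemma:proba_first_order} and~\ref{lemma:proba_p(q)}. Nodes $q \in Q^0$ cause no trouble: Lines~\ref{line:p(q)_height_zero}--\ref{line:S(q)_height_zero} of $\approxMCDNNFcore$ give $|S^r(q)| \le |\supp(q)| \le 16n|\calP|^2 < \theta$ deterministically, so they contribute nothing. For $q \in Q^{>0}$, Fact~\ref{fact:not_so_black_magic} identifies the distribution of $S^r(q)$ with that of $Y^r_{q,p^{H(q)}(q),H(q)}$. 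Since for each fixed realizable $(h,t)$ the variable $Y^r_{q,t,h}$ is independent of $H(q)$ and $p^h(q)$, Lemma~\ref{lemma:proba_first_order} yields
\[
\mathbb{E}\bigl[\,|S^r(q)|\,\big|\,H(q)=h,\;p(q)=t\,\bigr] \;=\; \mathbb{E}\bigl[\,|Y^r_{q,t,h}|\,\bigr] \;=\; t\cdot|\supp(q)|.
\]

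Let $G$ be the good event of Lemma~\ref{lemma:proba_p(q)}, on which $p(q)\cdot|\supp(q)| \le 16n\cdot e^{\kappa 2^{i(q)}}$ for every $q \in Q^{>0}$, with $\Pr[\bar G] \le 1/8$. From $d \le 3\lceil\log n\rceil$ we obtain $\kappa\cdot 2^{i(q)} \le \kappa\cdot 2^d \le 2\varepsilon = O(1)$, hence $e^{\kappa 2^{i(q)}} = O(1)$. Combined with the identity above this gives $\mathbb{E}[\,|S^r(q)|\cdot\mathbbm{1}_G\,] \le 16n\cdot e^{\kappa 2^d} = O(n)$. Markov's inequality then produces $\Pr[\,|S^r(q)| \ge \theta,\,G\,] \le O(n)/\theta = O(1/(n_s n_t|\calP|))$, the last equality exploiting the generous constant $512$ in $\theta = 512\,n\,n_s\,n_t\,|\calP|$. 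Summing over the $n_s n_t|\calP|$ pairs $(q,r)$ gives a contribution of $O(1)$ that the choice of $\theta$ keeps well within the budget, and combining with $\Pr[\bar G]\le 1/8$ produces the stated bound.

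The main obstacle is justifying the identity $\mathbb{E}[|S^r(q)| \mid p(q)] = p(q)\cdot|\supp(q)|$, which superficially looks circular because $p(q)$ is itself computed from the sample sets of the descendants of $q$. The random-process formalism of Section~\ref{sec:algorithm} breaks the circularity: by defining the variables $Y^r_{q,t,h}$ separately for every realizable $(h,t)$ and establishing their independence from $H(q)$ and $p^h(q)$, it reduces the conditional expectation computation to the deterministic observation $\mathbb{E}[|Y^r_{q,t,h}|] = t\cdot|\supp(q)|$ that follows directly from Lemma~\ref{lemma:proba_first_order}. A secondary quantitative difficulty is keeping the sum of the per-pair Markov contributions strictly below the $1/8$ budget, which is why the algorithm inflates $\theta$ by a factor of $512$ rather than a more parsimonious constant.
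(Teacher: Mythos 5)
Your overall strategy is the same as the paper's: couple $S^r(q)$ with $Y^r_{q,p^{H(q)}(q),H(q)}$ via Fact~\ref{fact:not_so_black_magic}, use $\Ex[|Y^r_{q,t,h}|]=t\cdot|\supp(q)|$ on the good event for the $p$-estimates, and finish with Markov plus a union bound over the $n_sn_t|\calP|$ pairs. Two steps, however, do not go through as written. First, your dismissal of $Q^0$ is incorrect: you claim $|S^r(q)|\le|\supp(q)|\le 16n|\calP|^2<\theta$ deterministically, but $\theta=512\,n\,n_s n_t|\calP|$, so $16n|\calP|^2<\theta$ is equivalent to $|\calP|<32\,n_sn_t$, and $n_s,n_t$ depend only on $n$, $\varepsilon$ and $\log|\calP|$; when $|\calP|$ is large compared to $n^6\varepsilon^{-2}\log|\calP|$ the inequality fails. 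A node of effective height $0$ with $16n<|\supp(q)|$ (possibly $|\supp(q)|\ge\theta$) has $S^r(q)=\reduce(\supp(q),16n/|\supp(q)|)$ and hence $\Pr[|S^r(q)|\ge\theta]>0$, so such nodes cannot be excluded. The paper only discards nodes with $|\supp(q)|\le 16n$ (for which $|S^r(q)|\le 16n<\theta$ does hold) and runs the Markov argument for every node with $|\supp(q)|\ge 16n$, including those of effective height $0$, where $p(q)=16n/|\supp(q)|$ holds deterministically so the expectation bound is immediate. The fix fits inside your own framework, but the step as stated is false.

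Second, your probability budget does not close. Lemma~\ref{lemma:proba_p(q)} as stated only gives $\Pr[\overline G]\le 1/8$, which already exhausts the entire allowance before you add the Markov contribution; and that contribution is not negligible: the per-pair bound is $32n/\theta$ (since $t\cdot|\supp(q)|\le 16n\,e^{\kappa 2^i}\le 32n$ on the good event), and summing over the $n_sn_t|\calP|$ pairs gives exactly $32\,n\,n_sn_t|\calP|/\theta=1/16$ --- the constant $512$ buys precisely $1/16$, not something ``well within the budget.'' With only $\Pr[\overline G]\le 1/8$ your combination yields $1/8+1/16=3/16>1/8$. The paper closes the gap by explicitly recording, at the end of the proof of Lemma~\ref{lemma:proba_p(q)}, that the argument there actually establishes the stronger bound $1/16$, so that the total is $1/16+1/16=1/8$. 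You need to invoke (or reprove) that sharper bound, and to carry the constants exactly rather than through $O(\cdot)$ estimates, for the stated $1/8$ to follow.
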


With these lemmas we can prove Lemma~\ref{lemma:main_result_core}. 

\begin{proof}[Proof of Lemma~\ref{lemma:main_result_core}]
	Let $\calA = \approxMCDNNFcore(\calP,n_s,n_t,\theta)$ and $\calA^* = \approxMCDNNFcore^*(\calP,n_s,n_t,\theta)$. Let $o$ be the output node of $\calP$ and $d$ be the height of $\calP$. If $o \in Q^0$ then $|\supp(\calP)| \leq 16n|\calP|^2$ and the algorithm returns the exact count. Suppose $o \not\in Q^0$. $\calA$'s output is outside of $(1 \pm \varepsilon)|\supp(\calP)|$ whenever some $S^r(q)$ grows larger than $\theta$.
	\[
	\begin{aligned}
		\Pr_\calA\left[\mathsf{est} \not\in (1 \pm \varepsilon)|\supp(\calP)|\right] 
		&= \Pr_{\calA^*}\left[
		\bigcup_{r}
		\bigcup_{q}
		|S^r(q)| \geq \theta
		\right] + \Pr_{\calA^*}\left[
		\bigcap_{r}
		\bigcap_{q}
		|S^r(q)| < \theta\text{ and } \mathsf{est} \not\in (1 \pm \varepsilon)|\supp(\calP)|
		\right] 
		\\
		&\leq \Pr_{\calA^*}\left[
		\bigcup_{r}
		\bigcup_{q}
		|S^r(q)| \geq \theta
		\right] + \Pr_{\calA^*}\left[
		\mathsf{est} \not\in (1 \pm \varepsilon)|\supp(\calP)|
		\right]  
	\end{aligned}
	\]
	where, by default, $q$ ranges over all nodes in $\calP$ and $r$ ranges in $[n_sn_t]$. The event $\mathsf{est} \not\in (1 \pm \varepsilon)|\supp(\calP)|$ occurs when $p(o) \not\in \frac{16n}{(1 \pm \varepsilon)|\supp(o)|}$. The value $\kappa$ has been set so that $p(o) \not\in \frac{16n \cdot e^{\pm \kappa 2^d}}{|\supp(o)|}$ implies $\frac{p(q)}{16n} \not\in \frac{1}{(1 \pm \varepsilon)|\supp(o)|}$ so
		\[
	\begin{aligned}
		\Pr_\calA\left[\mathsf{est} \not\in (1 \pm \varepsilon)|\supp(\calP)|\right] 
		&\leq \Pr_{\calA^*}\left[
		\bigcup_{r}
		\bigcup_{q}
		|S^r(q)| \geq \theta
		\right] + \Pr_{\calA^*}\left[
		p(o) \not\in \frac{16n\cdot e^{\pm \kappa \cdot 2^d}}{|\supp(o)|}
		\right] 
		\\
		&\leq \Pr_{\calA^*}\left[
		\bigcup_{r}
		\bigcup_{q}
		|S^r(q)| \geq \theta
		\right] + \Pr_{\calA^*}\left[
			\bigcup\limits_{i > 0}
		\bigcup\limits_{q \in Q^i}
		p(q) \not\in \frac{16n\cdot e^{\pm \kappa \cdot 2^i}}{|\supp(q)|}
		\right]
	\end{aligned}
	\]	
	The conclusion $\Pr\limits_\calA\left[\mathsf{est} \not\in (1 \pm \varepsilon)|\supp(\calP)|\right]\leq \frac{1}{4}$ directly follows from Lemmas~\ref{lemma:proba_p(q)} and~\ref{lemma:proba_S(q)}.
	
	As for the running time of $\calA$, this algorithm stops as soon as a sample set grows in size beyond~$\theta$ so the worst case running time can be computed assuming all $S^r(q)$ have size $\theta$. When processing $q \in \calP$ with $\AndEstimateSample$ or $\OrEstimateSample$, at most $|\children(q)|\cdot \theta\cdot n_s\cdot n_t = O(|\children(q)|\cdot |\calP| \cdot n_s^2n_t^2n)$ samples are processed. For each sample, it is decided whether it survives a $\reduce$ and whether it survives a $\union$. In $\union$, up to $|\children(q)|$ queries of the form $\alpha \in_? \supp(q')$ for $q' \in \children(q)$ are done to determine the first child of $q$ that contains $\alpha$ in its support. Each query takes time at most $O(|\calP|)$ to anwser. So the total running time for processing $q$ is at most $O(|children(q)|^2 |\calP|^2 n^2_sn^2_tn) = O(|\calP|^4 n^2_sn^2_tn)$. So the running time for all $|\calP|$ nodes is at most $O(|\calP|^5 n^2_sn^2_tn)$
\end{proof}

It remains to prove Lemmas~\ref{lemma:proba_p(q)} and~\ref{lemma:proba_S(q)}.

\subsection[Probability of $p(q)$ being a good approximation (Lemma~\ref{lemma:proba_p(q)})]{Proof of Lemma~\ref{lemma:proba_p(q)}: All Estimates are Tight.}

We start with an easy observation.

\begin{lemma}\label{lemma:p(q)_non_decreasing_and_bounded_by_one}
For all $q \in \calP$, $\approxMCDNNFcore^*$ computes $p(q) \leq 1$ and, for every child $q'$ of $q$, $p(q) \leq p(q')$ holds.
\end{lemma}
\begin{proof}
For nodes with $|\supp(q)| \leq 16n|\calP|^2$, $p(q)$ is deterministically set to $\min(1,16n/|\supp(q)|)$ so $p(q) \leq 1$ holds and $p(q) \leq p(q')$ follows from $|\supp(q')| \leq |\supp(q)|$. For $\times$ nodes, it is clear in all cases of $\AndEstimateSample$ that, since $p(q_1)$ and $p(q_2)$ are less than $1$, $p(q) \leq \min(p(q_1),p(q_2))$. For $+$ nodes, we have that $p(q)$ is $\round(q,\min(\rho(q),\hat \rho(q))) \leq \min(\rho(q),\hat \rho(q)) \leq \rho(q) = \min(p(q') \mid q' \in \children(q))$. So the claim holds.
\end{proof}

For $q \in Q^i$, we denote the interval $\frac{16n e^{\pm \kappa 2^i}}{|\supp(q)|}$ by $\Delta(q)$ (making $i$ precise is not necessary as it is defined by $q$). Given an interval $[a,b]$, we write $\min(1,[a,b])$ instead of $[\min(1,a),\min(1,b)]$. 

\begin{lemma}\label{lemma:introduce_descendants}
	The event 
	$
	\bigcup_{q \in Q^{>0}}
	p(q) \not\in \Delta(q)
	$
	occurs if and only if the following event occurs
	\[
	\bigcup_{q \in Q^{>0}}\left(
	p(q) \not\in \Delta(q) \text{ and for all } q' \in \desc(q),\, p(q') \in \min\left(1,\Delta(q')\right)\right).
	\]
\end{lemma}
\begin{proof}
	The ``if'' direction is immediate. For the ``only if'' direction, let $i$ be the smallest integer strictly greater than $0$ such that there exits $q \in Q^i$ with $p(q) \not\in \Delta(q)$. By construction, for all $q' \in Q^0$ we have $p(q') = \min\left(1,\frac{16n}{|\supp(q')|}\right) \in \min\left(1,\frac{16n e^{\pm \kappa 2^0}}{|\supp(q')|}\right) = \min(1,\Delta(q'))$. By minimality of~$i$, for all $0 < j \leq i$ we have that all $q' \in \desc(q) \cap Q^j$  verify $p(q') \in \Delta(q')$, so we just have show that $\Delta(q') = \min(1,\Delta(q'))$. First, $j$ is at most $3\log(n)$ so $e^{\pm \kappa 2^j} \subseteq e^{\pm \kappa n^3} = e^{\pm \varepsilon/4}$. Second, $j > 0$ implies that $|\supp(q')| \geq 16n|\calP|^2 \geq 32n \geq 16n  e^{\varepsilon / 4}$ (because $\varepsilon \leq 4/\log(e)$). So $\frac{16n  e^{\kappa 2^j}}{|\supp(q')|} \leq \frac{16n  e^{\varepsilon/4}}{16n e^{\varepsilon/4}} = 1$ and therefore $\min(1,\Delta(q')) = \Delta(q')$.
\end{proof}

\noindent We apply Lemma~\ref{lemma:introduce_descendants} followed by a union bound to the target probability of Lemma~\ref{lemma:proba_p(q)}.
\begin{equation}\label{eq:probability_real_bad_event}
	\begin{aligned}
		\Pr\left[
		\bigcup_{q \in Q^{>0}}
		p(q) \not\in \Delta(q)\right] &= \Pr\left[
		\bigcup_{q \in Q^{>0}}
		p(q) \not\in \Delta(q)
	 \text{ and }\forall q' \in \desc(q),\, p(q') \in \min\left(1, \Delta(q')\right)
		\right]
		\\
		&\leq \sum_{q \in Q^{>0}}\underbrace{\Pr\left[
		p(q) \not\in \Delta(q)
 \text{ and }\forall q' \in \desc(q),\, p(q') \in \min\left(1,\Delta(q')\right)
		\right]}_{P(q)}.
	\end{aligned}
\end{equation}
We bound $P(q)$ separately for $+$ nodes and $\times$ nodes in the next sections and get the following.
\begin{lemma}\label{lemma:P(q)_for_times_nodes}
Let $q \in Q^{> 0}$ be a $\times$ node. Then $P(q) = 0$.
\end{lemma}

\begin{lemma}\label{lemma:P(q)_for_plus_nodes}
Let $q \in Q^{> 0}$ be a $+$ node. Then $P(q) \leq \frac{1}{16|\calP|}$.
\end{lemma} 
With these lemmas, we easily finish the proof of Lemma~\ref{lemma:proba_p(q)}. 
\begin{align*}
	\Pr\left[\bigcup_{q \in Q^{> 0}} p(q) \not\in \Delta(q) \right] \leq \sum_{q \in Q^{> 0}} P(q) \leq \sum_{q \in Q^{> 0}} \frac{1}{16|\calP|} \leq \frac{1}{16}
\end{align*}

Before proving Lemmas~\ref{lemma:P(q)_for_times_nodes} and~\ref{lemma:P(q)_for_plus_nodes}, it is  good to discuss the effect of rounding $p(q)$. A value is \emph{acceptable for $q \in Q^i$} if it is $1$ or if it is less than $1$ and of the form
$
16 n e^{-\kappa 2^{i}}/\ell 
$ or
$
16 n e^{+\kappa 2^{i}}/\ell 
$  
for some integer $\ell$ between $1$ and $2^n$. The function $\round(q,v)$ takes in $q$ and a value $v$ and returns the highest acceptable value less than or equal to $v$. It is fairly immediate that if $v$ is in $\min(1,\Delta(q))$, then so is $\round(q,v)$.

\begin{lemma}\label{lemma:rounding_of_good_is_still_good}
If $v \in \min(1,\Delta(q))$ then $\round(q,v) \in \min(1,\Delta(q))$.
\end{lemma}
\begin{proof}
Since $|\supp(q)| \leq 2^n$, this follows from the lower and upper limits of $\min(1,\Delta(q))$ being acceptable values for $q$.
\end{proof}

\subsubsection{Proof of Lemma~\ref{lemma:P(q)_for_times_nodes}: the First Loose Estimate is not Witnessed by a $\times$ Node.} First we show that only nodes $q$ of effective height $0$ can have $p(q) = 1$.

\begin{lemma}\label{lemma:case_p(q)_equals_1}
If $p(q) = 1$ then $q \in Q^0$.
\end{lemma}
\begin{proof}
By Lemma~\ref{lemma:p(q)_non_decreasing_and_bounded_by_one}, it suffices to show that, for all $q \in Q^{1}$, $p(q) < 1$ holds. When $q$ has effective height $1$, the effective height of all its children is $0$. So $p(q') = \min(1,16n/|\supp(q')|)$ for every child $q'$ of $q$ (Line~\ref{line:p(q)_height_zero} of $\approxMCDNNFcore^*$).

Suppose $q = q_1 \times q_2$ and that $\AndEstimateSample$ sets $p(q)$ to $1$. Then $p(q_1) = 1$ and $p(q_2)/|S^1(q_1)| = 1$, or  $p(q_2) = 1$ and $p(q_1)/|S^1(q_2)| = 1$. In both cases, we have $p(q_1) = p(q_2) = 1$ and therefore $\max(|\supp(q_1)|,|\supp(q_2)|) \leq 16n$. By multilinearity of $\calP$, it follows that $|\supp(q)| = |\supp(q_1)|\cdot |\supp(q_2)| \leq 256n^2$. Since $|\calP| \geq n$, we have $|\calP| \geq 4\sqrt{n}$ for $n \geq 16$ and thus $16n|\calP|^2 \geq 256n^2$. But then the effective height of $q$ is~$0$. This is contradictory, so $p(q) < 1$.

Suppose $q = q_1 + \dots + q_k$ and that $\OrEstimateSample$ sets $p(q)$ to $1$. Since $p(q)  = \min(p(q_1),\dots,$ $p(q_k))$, we have that $p(q) = 1$ if and only if $p(q_1) = \dots = p(q_k) = 1$. But then $\max(|\supp(q_1)|,\dots,$ $|\supp(q_k)|) \leq 16n$. It follows that $|\supp(q)| \leq 16nk \leq 16n|\calP|$ and thus the effective height of $q$ is~$0$. This is contradictory, so $p(q) < 1$.
\end{proof}

\begin{proof}[Proof of Lemma~\ref{lemma:P(q)_for_times_nodes}]
Let $q_1$ and $q_2$ be the children of $q$. We suppose $p(q_1) \in \min(1,\Delta(q_1))$ and $p(q_2) \in \min(1,\Delta(q_2))$. Let $i$ be the effective height of $q$, $j$ be the effective height of $q_1$ and $k$ be the effective height of $q_2$. By assumption $i > 0$ and thus $\max(j,k) = i - 1$. 
We show that 
$$
i < 0 \text{ and } p(q_1) \in \min(1,\Delta(q_1)) \text{ and } p(q_2) \in \min(1,\Delta(q_2)) \quad \Rightarrow \quad p(q) \in \Delta(q) 
$$
Suppose $p(q_1) = p(q_2) = 1$. Lemma~\ref{lemma:case_p(q)_equals_1} implies that $q_1,q_2 \in Q^0$. Then the same argument as in the second paragraph of the proof of Lemma~\ref{lemma:case_p(q)_equals_1} shows that $q \in Q^0$, which is a contradiction. For the rest of the proof suppose, without loss of generality, that $p(q_1) < 1$ holds. Then $p(q_1) \in \min(1,\Delta(q_1))$ implies $p(q_1) \in \Delta(q_1)$. 

If $p(q_2) = 1$ then, by Lemma~\ref{lemma:case_p(q)_equals_1}, we have that $q_2 \in Q^0$ and thus $|S^1(q_2)| = |\reduce(\supp(q_2),1)| = |\supp(q_2)|$. In this case, $\AndEstimateSample$ computes $p(q) = \round(q,p(q_1)/|S^1(q_2)|)$. By Lemma~\ref{lemma:rounding_of_good_is_still_good}, it is sufficient to show that $p(q_1)/|S^1(q_2)| \in \Delta(q)$.
$$
\frac{p(q_1)}{|S^1(q_2)|} = \frac{p(q_1)}{|\supp(q_2)|} \in  \frac{\Delta(q_1)}{|\supp(q_2)|} = \frac{16n\cdot e^{\pm \kappa 2^j}}{|\supp(q_1)|\cdot |\supp(q_2)|}
$$ 
By multilinearity of $\calP$, $|\supp(q_1)|\cdot |\supp(q_2)| = |\supp(q)|$, so $p(q) \in \frac{16n\cdot e^{\pm \kappa 2^j}}{|\supp(q)|} \subseteq \frac{16n\cdot e^{\pm \kappa 2^i}}{|\supp(q)|} = \Delta(q)$. 

If $p(q_2) < 1$, then $p(q_2) \in \min(1,\Delta(q_2))$ implies $p(q_2) \in \Delta(q_2)$. In this case, $\AndEstimateSample$ computes $p(q) = \round(q,p(q_1)p(q_2)/16n)$. By  Lemma~\ref{lemma:rounding_of_good_is_still_good}, it is sufficient to show that $p(q_1)p(q_2)/16n \in \Delta(q)$.
$$
\frac{p(q_1)p(q_2)}{16n} \in \frac{16n\cdot e^{\pm \kappa(2^j + 2^k)}}{|\supp(q_1)|\cdot |\supp(q_2)|} = \frac{16n\cdot e^{\pm \kappa(2^j + 2^k)}}{|\supp(q)|}
$$ Since $\max(j,k) = i - 1$, we have that $2^j + 2^k \leq 2^i$ and therefore $p(q) \in \Delta(q)$.
\end{proof}

\subsubsection{Interlude: the $\mathfrak{S}$-process.}\label{section:randomProcess}

To prove Lemma~\ref{lemma:P(q)_for_plus_nodes}, we introduce the $\mathfrak{S}$-process, a random process efined with respect to the $(+,\times)$ program $\calP$ that simulates $\approxMCDNNFcore^*$. Our intuition is that, for every $\alpha \in \supp(q)$, we should have something similar to ``$\Pr[\alpha \in S^r(q)] = p(q)$''. The problem is that this equality makes no sense because $\Pr[\alpha \in S^r(q)]$ is a real value between $0$ and $1$ whereas $p(q)$ is a random variable whose value may change with every run of $\approxMCDNNFcore^*$. The $\mathfrak{S}$-process allows us to circumvent this issue by replacing $S^r(q)$ with a variable easier to analyze. $S^r(q)$ has the same distribution as variables from the $\mathfrak{S}$-process given some knowledge on the prior computation made by $\approxMCDNNFcore^*$. This knowledge is encoded in the notion of \emph{history}. 

\begin{definition} A \emph{history} $h$ for a set of nodes $Q$ of $\calP$ is a mapping
$
h : Q \rightarrow \mathbbm{Q} \cap (0,1]
$. 
The history is \emph{realizable} when  there exists a run of $\approxMCDNNFcore^*$ that gives the value $h(q)$ to $p(q)$ for every $q \in Q$. When this happens, we say that the run of  $\approxMCDNNFcore^*$ is \emph{compatible} with $h$. Two histories $h$ for $Q$ and $h'$ for $Q'$ are \emph{compatible} when $h(q) = h'(q)$ for all $q \in Q \cap Q'$; we then denote by $h \cup h'$ the merged history to $Q \cup Q'$. For $q \in Q$ and $t \in \mathbbm{Q} \cap (0,1]$, we use the notation $h \cup (q \mapsto t)$ to augment $h$ with $h(q) = t$. 
\end{definition}

For $q$ a node in $\calP$, we denote by $\desc(q)$ its set of descendants in $\calP$. That is, $\desc(q) = \emptyset$ when $q$ is a variable and $\desc(q) = \children(q) \cup \bigcup_{q' \in \children(q)} \desc(q')$ otherwise. Note that $q \not\in \desc(q)$. We only study histories that are realizable for sets $Q$  closed under progeny, that is, if $q \in Q$ and $q'$ is a child of $q$, then $q' \in Q$. Thus, for simplicity, we write ``history for $q$'' instead of ``history for $\desc(q)$''.  The only possible history for an input node of $\calP$ is the empty set (because no descendant). 

\begin{definition}
The \emph{$\mathfrak{S}$-process} comprises $n_sn_t$ independent copies identified by the superscript $r$. Random variables $\vY{r}{h}{t}{q}$ with domain the subsets of $\supp(q)$ are defined for $q \in \calP$, $t \in \mathbbm{Q} \cap (0,1]$ and $h$ a realizable history for $q$. $\vY{r}{h}{t}{q}$ is meant to simulate $S^{r}(q)$ when the history of $q$ is $h$ and the value $t$ is assigned to $p(q)$. %
\begin{enumerate}[leftmargin=*]
\item[•] If $|\supp(q)| \leq 16n|\calP|^2$, i.e., if $q \in Q^0$, then $\vY{r}{h}{t}{q}$ is only defined for $t = \min(1,\frac{16n}{|\supp(q)|})$ by $\vY{r}{h}{t}{q} = \reduce(\supp(q),t)$, with $r \in [n_sn_t]$ and $h$ any realizable history for $q$.
\item[•] If $q$ is a $\times$ node $q = q_1 \times q_2$ in $Q^{> 0}$ then for every $\vY{r}{h_1}{t_1}{q_1}$ and $\vY{r}{h_2}{t_2}{q_2}$ such that $h_1$ and $h_2$ are realizable and compatible histories for $q_1$ and $q_2$, we define $h = h_1 \cup h_2 \cup (q_1 \mapsto t_1,q_2 \mapsto t_2)$ and, for every $t \leq t_1t_2$, we define $\vY{r}{h}{t}{q}$ as 
$$
\vY{r}{h}{t}{q} = \reduce(\vY{r}{h_1}{t_1}{q_1} \otimes \vY{r}{h_2}{t_2}{q_2}, {\textstyle \frac{t}{t_1t_2}})
$$
\item[•]  If $q$ is a $+$ node $q = q_1 + \dots + q_k$ in $Q^{> 0}$ with $\children(q) = (q_1,\dots,q_k)$ then, for every $\vY{r}{h_1}{t_1}{q_1},\dots,\vY{r}{h_k}{t_k}{q_k}$ where the histories $h_i$ for $q_i$ are realizable and pairwise compatible, we define $h = h_1 \cup \dots \cup h_k \cup (q_1 \mapsto t_1,\dots,q_k \mapsto t_k)$. Next we define $t_{\min} = \min(t_1,\dots,t_k)$ and the random variable $\vZ{r}{h}{q}$ as
$$
\vZ{r}{h}{q} = \union(q,\reduce(\vY{r}{h_1}{t_1}{q_1},{\textstyle \frac{t_{\min}}{t_1}}) , \dots ,\reduce(\vY{r}{h_k}{t_k}{q_k},{\textstyle \frac{t_{\min}}{t_k}}))
$$
and, for every $t \leq t_{\min}$ we define $\vY{r}{h}{t}{q}$ as $\vY{r}{h}{t}{q} = \reduce(\vZ{r}{h}{q},\frac{t}{t_{\min}})$. 
\end{enumerate}
\end{definition}

The $\mathfrak{S}$-process simultaneously plays runs of $\approxMCDNNFcore^*$ for all possible histories. So we can analyze $\approxMCDNNFcore^*$ from the $\mathfrak{S}$-process, provided we know the history. Let $H(q)$ be the random variable on the history for $q$  for a run of $\approxMCDNNFcore^*$.  

\begin{restatable}{lemma}{notSoBlackMagic}\label{lemma:not_so_black_magic}
Let $H(q)$ be the random variable recording the history for $q$ in a run of $\approxMCDNNFcore^*$. Let $h$ be a realizable history for $q$. Let $e(\hat S^1(q),\dots,\hat S^\nsnt(q))$ be an event that is function of $\hat S^1(q),\dots,\hat S^\nsnt(q)$ and let $e(\vZ{1}{h}{q},\dots,\vZ{\nsnt}{h}{q})$ be the same event where each $\hat S^r(q)$ is replaced by $\vZ{r}{h}{q}$. Then 
$$
\Pr[H(q) = h \text { and } e(\hat S^1(q),\dots,\hat S^\nsnt(q))] \leq \Pr[e(\vZ{1}{h}{q},\dots,\vZ{\nsnt}{h}{q})]
$$
Similarly, let $e(S^1(q),\dots,S^\nsnt(q))$ be an event that is function of $S^1(q),\dots,S^\nsnt(q)$ and let $e(\vY{1}{h}{t}{q},\dots,\vY{\nsnt}{h}{t}{q})$ be the same event where each $S^r(q)$ is replaced by $\vY{r}{h}{t}{q}$. Then 
$$
\Pr[H(q) = h \text { and } p(q) = t \text { and } e(S^1(q),\dots,S^\nsnt(q))] \leq \Pr[e(\vY{1}{h}{t}{q},\dots,\vY{\nsnt}{h}{t}{q})]
$$
\end{restatable}
We defer the proof of Lemma~\ref{lemma:not_so_black_magic} to Appendix~\ref{sec:proof-blackmagic}. To illustrate why the lemma holds, we provide a simple example in the same veins. Consider a random variable $X$ that takes value in $\{1,\dots,n\}$ for some fixed $n > 0$, and two random variables $Y$ and $Z$ following a binomial distribution whose first parameter is the value taken by $X$ and whose second parameter is a fixed probability $p_1$ for $Y$ and $p_2$ for $Z$. Formally, $Y$ follows the distribution $\mathcal{B}(X,p_1)$ and $Z$ follows $\mathcal{B}(X,p_2)$. Now, let $Y_1,\dots,Y_n,Z_1,\dots,Z_n$ be random variables that are mutually independent and independent of $X$, $Y$ and $Z$, such that for all $i \in \{1,\dots,n\}$, $Y_i$ follows $\mathcal{B}(i,p_1)$ and $Z_i$ follows $\mathcal{B}(i,p_2)$. We have that $(X,Y,Z)$ and $(X,Y_X,Z_X)$ have identical distribution, i.e., $\Pr[X = i \text{ and } Y = j \text{ and } Z = k] = \Pr[X = i \text{ and } Y_i = j  \text{ and } Z_i = k] \leq \Pr[Y_i = j  \text{ and } Z_i = k]$ holds for all $i$, $j$ and $k$. This is similar to the connection between the $\mathfrak{S}$-process with $\approxMCDNNFcore^*$: the random variables $H(q)$ and $p(q)$ act as parameters in the distributions of $S^r(q)$ and $\hat S^r(q)$ for all $r \in [n_sn_t]$, and the random sets $\vY{r}{h}{t}{q}$ and $\vZ{r}{h}{q}$ simulates $S^r(q)$ and $\hat S^r(q)$ when the random parameters  $H(q)$ and $p(q)$ are fixed to $h$ and $t$, respectively.

 Whereas in $\approxMCDNNFcore^*$ there may be dependence between two sets $S^r(q)$ and $S^{r'}(q')$, in the $\mathfrak{S}$-process $\vY{r}{h}{t}{q}$ and $\vY{r'}{h'}{t'}{q'}$ are independent when $r \neq r'$. Random sets $\vY{r}{h}{t}{q}$ and $\vY{r}{h'}{t'}{q'}$ for the same $r$ are also independent when $var(q) \cap var(q') = \emptyset$ because, although they both belong to the $r$th subprocess, they are constructed by disjoint procedures.

\begin{fact}\label{fact:independence_in_random_process}
Let $m \in \mathbb{N}$. If the random sets $\{\vY{r_j}{h_j}{t_j}{q_j}\}_{j \in [m]}$ (resp. $\{\vZ{r_j}{h_j}{q_j}\}_{j \in [m]}$)  verify that $r_i \neq r_j$ for every $i \neq j$, then they are mutually independent.
\end{fact}

\begin{restatable}{fact}{decomposeIndependence}\label{fact:independence}
Let $r \in [n_sn_t]$ be fixed and let $q_1,\dots,q_k$ be nodes in $\calP$ such that $\var(q_i) \cap \var(q_j) = \emptyset$ for all $i\neq j$. Then the events $\alpha_1 \in \vY{r}{h_1}{t_1}{q_1},\dots,\alpha_k \in \vY{r}{h_k}{t_k}{q_k}$ are mutually independent for all histories $h_1,\dots,h_k$ for $q_1,\dots,q_k$, respectively.
\end{restatable}

In the $\mathfrak{S}$-process, we can find the exact probability that a monomial $\approxMCDNNFcore^*$ is sampled in $\vZ{r}{h}{q}$ or in $\vY{r}{h}{t}{q}$. The proof of the following lemmas appear in appendix.

\begin{restatable}{lemma}{probaFirstOrder}\label{lemma:proba_first_order}
For every $\vY{}{h}{t}{q}$ and $\alpha \in \supp(q)$ we have 
$
\Pr[\alpha \in \vY{}{h}{t}{q}] = t.
$
And if $q$ is a $+$ node $q = q_1 + \dots + q_k$ of effective weight $> 0$, then $\Pr[\alpha \in \vZ{}{h}{q}] = \min(h(q_1),\dots,h(q_k))$. 
\end{restatable}

\begin{restatable}{lemma}{probaSecondOrder}\label{lemma:proba_second_order}
For every $\vY{r}{h}{t}{q}$ with $q \in Q^{> 0}$ and $\alpha,\alpha' \in \supp(q)$ with $\alpha \neq \alpha'$ we have,
\[
\Pr\left[\alpha \in \vY{r}{h}{t}{q} \text{ and } \alpha' \in \vY{r}{h}{t}{q}\right] = \frac{t^2}{\prod_{\hat q \in \tau} h(\hat q)}
\]
where $\tau = \tree^*(\alpha,q) \wedge \tree^*(\alpha',q)$. Moreover, if $q$ is a $+$ node, then 
\[
\Pr\,[\alpha \in \vZ{r}{h}{q} \text{ and } \alpha' \in\vZ{r}{h}{q}] = \frac{\min(h(q_1),\dots,h(q_k))^2}{\prod_{\hat q \in \tau} h(\hat q)}.
\]
\end{restatable}

\subsubsection{Proof of Lemma~\ref{lemma:P(q)_for_plus_nodes}: Bounding the Probability of a Loose Estimate for $+$ Nodes.}

Let $q = q_1 + \dots + q_k$ be a $+$ node in $Q^{> 0}$. First, we introduce histories in the expression of $P(q)$. We consider all realizable histories for $q$ and denote by $H(q) = h$ the event that the history for $q$ is $h$, that is, the event that the algorithm sets $p(q')$ to $h(q')$ for all descendants $q'$ of $q$. Let $\calH_q$ be the set of realizable histories for $q$.
\begin{align*}
	P(q) = \sum_{h \in \calH_q} \Pr\left[
	 H(q)=h \text{ and }
	p(q) \not\in \Delta(q)
	\text{ and for all } q' \in \desc(q),\, p(q') \in \min(1,\Delta(q'))
	\right].
\end{align*}
For every descendant $q'$ of $q$, $H(q) = h$ implies that $p(q') = h(q')$. So, a summand probability is zero when any $h(q')$ is not in $\min(1,\Delta(q'))$. Let $\calH^*_q$ be the set of realizable histories for $q$ such that $h(q') \in \min(1,\Delta(q'))$ holds for every descendant $q'$ of $q$. Then
\begin{align*}
		P(q) = \sum_{h \in \calH^*_q}
		\Pr\left[ H(q)=h \text{ and }
		p(q) \not\in \Delta(q)
		\right].
\end{align*}

\begin{lemma}\label{lemma:number_of_acceptable_good_histories}
Let $q$ be a node in the $(+,\times)$ program $\calP$. There are at most $2^{(n+1)|\calP|}$ histories for $q$ such that, for all descendants $q'$ of $q$, $h(q')$ is in $\min(1,\Delta(q'))$ and is an acceptable value for $q'$. Thus, if $\calH^*_q$ is the set of realizable histories for $q$, then $|\calH^*_q| \leq 2^{(n+1)|\calP|}$.
\end{lemma}
\begin{proof}
There are between $1$ and $2^{n+1}$ acceptable values for $q$ in $\min(1,\Delta(q))$. Indeed, there is at least one acceptable value for $q$ in $\min(1,\Delta(q))$ because the lower limit of $\min(1,\Delta(q))$ is acceptable. The $2^{n+1}$ upper bound follows from the definition. The Lemma~\ref{lemma:number_of_acceptable_good_histories} follows from the fact that there are less that $|\calP|$ descendants of $q$.
\end{proof}

\noindent Recall that $\rho(q) = \min(p(q_1),\dots,p(q_k))$. Let $M_j(q)$ be the mean $\frac{1}{\rho(q)n_s}\sum_{r = j\cdot n_s+1}^{(j+1)n_s}|\hat{S}^{r}(q)|$. The algorithm computes $p(q)$ as $\min(\rho(q),\hat \rho(q))$ with $\hat \rho(q) = 16 n\cdot \median(M_1(q),\dots,M_{n_t}(q))^{-1}$. Now let $\rho_h(q) = \min(h(q_1),\dots,h(q_k))$, $M_{j,h}(q) = \frac{1}{\rho_h(q) n_s}\sum_{r = j\cdot n_s+1}^{(j+1)n_s}|\hat{S}^{r}(q)|$, $\hat \rho_h = 16 n\cdot \median(M_{1,h}(q),\dots,M_{n_t,h}(q))^{-1}$ and $p_h(q) = \min(\rho_h(q),\hat \rho_h(q))$. Then 
\begin{align*}
		\Pr\left[ H(q)=h \text{ and }
		p(q) \not\in \Delta(q)
		\right] =  
		\Pr\left[ H(q)=h \text{ and }
		p_h(q) \not\in \Delta(q)
		\right].
\end{align*}

\begin{lemma}\label{lemma:use_rho_hat}
	If $h \in \calH^*_q$ and $\hat \rho_h(q) \in \Delta(q)$ then $p_h(q) \in \Delta(q)$.
\end{lemma}
\begin{proof}
Suppose $q \in Q^i$, $i > 0$. If $p_h(q) = \hat \rho_h(q)$ then the claim is trivial. Otherwise, if $p_h(q) = \rho_h(q)$, then $\hat \rho_h(q) \in \Delta(q)$ implies $p_h(q) \leq \hat \rho_h(q) \leq \frac{16n e^{ \kappa 2^i}}{|\supp(q)|}$. It remains to prove that $\rho_h(q) \geq \frac{16n e^{-\kappa 2^i}}{|\supp(q)|}$. Since $i > 0$, by Lemma~\ref{lemma:case_p(q)_equals_1}, there is a $q_j \in \children(q)$ such that $p_h(q_j) < 1$ and, since $h \in \calH^*_q$, it follows that $p_h(q_j) \in \Delta(q_j)$. Thus there is a $j \in [k]$ such that $\rho_h(q) = p_h(q_j) \in \Delta(q_j)$. Let $i'$ be the effective height of $q_j$. Because $i' \leq i$ and $|\supp(q_j)| \leq |\supp(q)|$, we have $\rho_h(q)\geq \frac{16n e^{- \kappa 2^{i'}}}{|\supp(q_j)|} \geq \frac{16n e^{- \kappa 2^i}}{|\supp(q)|}$.
\end{proof}
Let $M_j(q) = \frac{1}{\rho(q)n_s}\sum_{r = j\cdot n_s+1}^{(j+1)n_s}|\hat S^r(q)|$ and $\mathfrak{M}_{j,h}(q) = \frac{1}{\rho_h(q)n_s}\sum_{r = j\cdot n_s+1}^{(j+1)n_s}|\vZ{r}{h}{q}|$. We denote the interval  $\frac{|\supp(q)|}{16n e^{\pm \kappa 2^i}} = \frac{|\supp(q)| e^{\pm \kappa 2^i}}{16n}$ by $\nabla(q)$. 
\begin{align*} 
		\Pr\left[ H(q)=h \text{ and }
		p_h(q) \not\in \Delta(q)
		\right] &\leq  
		\Pr\left[ H(q)=h \text{ and }
		\hat \rho_h(q) \not\in \Delta(q)
		\right] \tag{Lemma~\ref{lemma:use_rho_hat}}
		\\
		&= \Pr\left[ H(q)=h \text{ and }
		16 n\cdot\median(M_1(q),\dots,M_{n_t}(q))^{-1} \not\in \Delta(q)\right] 
		\\
		&\leq \Pr\left[
		16 n\cdot\median(\mathfrak{M}_{1,h}(q),\dots,\mathfrak{M}_{n_t,h}(q))^{-1} \not\in \delta(q)\right] \tag{Lemma~\ref{lemma:not_so_black_magic}}
		\\
		&\leq \Pr\left[(16 n)^{-1}\median(\mathfrak{M}_{1,h}(q),\dots,\mathfrak{M}_{n_t,h}(q)) \not\in \nabla(q)\right]
		\\		
		&\leq \Pr\left[\median(\mathfrak{M}_{1,h}(q),\dots,\mathfrak{M}_{n_t,h}(q)) \not\in |\supp(q)|\cdot e^{\pm \kappa 2^i}\right] 
\end{align*}

\noindent 
We bound 
$\Pr[\median_j(\mathfrak{M}_{j,h}(q)) \not\in |\supp(q)|\cdot e^{\pm \kappa 2^i}]$
using Chebyshev's inequality  followed by Chernoff-Hoeffding bound. We aim to prove the following.

\begin{lemma}\label{lemma:median_bound}
When $h \in \calH^*_q$, $\Pr[\median(\mathfrak{M}_{1,h}(q),\dots,\mathfrak{M}_{n_t,h}(q)) \not\in |\supp(q)|\cdot e^{\pm \kappa 2^i}] \leq \exp(- n|\calP|)$.
\end{lemma}

\noindent Equipped with Lemma~\ref{lemma:median_bound}, we have 
\begin{align*}
P(q) &\leq \sum_{h \in \calH^*_q} \Pr[\median(\mathfrak{M}_{1,h}(q),\dots,\mathfrak{M}_{n_t,h}(q)) \not\in |\supp(q)|\cdot e^{\pm \kappa 2^i}]
\\
& \leq \sum_{h \in \calH^*_q} \exp(-n|\calP|) \tag{Lemma~\ref{lemma:median_bound}}
\\
& \leq \frac{1}{16|\calP|}  \tag{Lemma~\ref{lemma:number_of_acceptable_good_histories} and $n \geq 16$} 
\\
\end{align*}
It remains to prove Lemma~\ref{lemma:median_bound} using Chebyshev and Chernoff bound. This requires a variance computation. From Lemma~\ref{lemma:proba_first_order}, we have that 
$$
\mu := \Ex\left[|\vZ{r}{h}{q}|\right] = \rho_h(q)\cdot|\supp(q)|.
$$ The computation of the variance is arguably the core technical point of this paper.

\begin{lemma}\label{lem:variance-bound}
	When $h \in \calH^*_q$, $\Va\left[|\vZ{r}{h}{q}|\right] \leq \mu + 2 \mu^2$.
\begin{proof} 
\begin{align*}
	\Va\left[ |\vZ{r}{h}{q}| \right] &\leq \Ex\left[|\vZ{r}{h}{q}|^2 \right] = \mu + \sum_{\alpha \in \supp(q)}\sum_{\substack{\alpha' \in \supp(q) \\ \alpha \neq \alpha'}} \Pr\left[\alpha \in \vZ{r}{h}{q} \text{ and } \alpha' \in \vZ{r}{h}{q}\right]
	\\
	&\leq \mu + \sum_{\alpha \in \supp(q)}\sum_{\substack{\alpha' \in \supp(q) \\ \alpha \neq \alpha'}} \frac{\rho_h(q)^2}{
		\prod\limits_{\hat{q} \in \tau(\alpha,\alpha',q)} h(\hat{q})
		} \tag{Lemma~\ref{lemma:proba_second_order}} 
\end{align*}
where $\tau(\alpha,\alpha',q) = \tree^*(\alpha,q) \wedge \tree^*(\alpha',q)$. Let $V = \{\hat q : |\supp(\hat q)| > 16n\}$. We have $h(\hat q) = 1$ for all nodes not in $V$ for otherwise $h$ would not be realizable. Thus
\begin{align*}
	\Va\left[ |\vZ{r}{h}{q}| \right] - \mu &\leq  \sum_{\alpha \in \supp(q)}\sum_{\substack{\alpha' \in \supp(q) \\ \alpha \neq \alpha'}} \frac{\rho_h(q)^2}{
		\prod\limits_{\hat{q} \in V \cap \tau(\alpha,\alpha',q)} h(\hat{q})
		} 
\end{align*}
 $V \cap \tau(\alpha,\alpha',q)$ can be written as $\lcsnvar{\tree(\alpha,q)}{\tree(\alpha',q)}{f(\alpha,\alpha',q)}$, where $f(\alpha,\alpha',q)$ is the intersection of $V$ with the nodes used in $\tree^*(\alpha,q)$ and $\tree^*(\alpha',q)$. Recall that for $\sigma$ an antichain over the nodes of $\tree(\alpha,q)$, $I_f(\alpha,q,\sigma)$ is the set of $\alpha' \in \supp(q)$ such that $\lcsnvar{\tree(\alpha,q)}{\tree(\alpha',q)}{f(\alpha,\alpha',q)} = \sigma$. Let $A_\alpha$ be the set of antichains of $\tree^*(\alpha,q)$ excluding $\{q\}$. The sets $I_f(\alpha,q,\sigma)$ for all $\sigma \in A_\alpha$ partitions $\supp(q) \setminus \{\alpha\}$. Using Lemma~\ref{lemma:antichain_lemma}:

\begin{align*}
	\Va\left[ |\vZ{r}{h}{q}| \right] - \mu \leq \sum_{\alpha \in \supp(q)} \sum_{\sigma \subseteq A_\alpha} |I_f(\alpha,q,\sigma)|\frac{\rho_h(q)^2}{\prod\limits_{\hat{q} \in \sigma} h(\hat{q})} 
	\leq \sum_{\alpha \in \supp(q)} \sum_{\sigma \subseteq A_\alpha} \frac{ \rho_h(q)^2 \cdot |\supp(q)|}{\left(\prod\limits_{\hat{q} \in \sigma} |\supp(\hat{q})|\right)\left(\prod\limits_{\hat{q} \in \sigma}h(\hat{q})\right)}
\end{align*}
When $h(\hat q) < 1$ holds we have $h(\hat q) \in \Delta(\hat q)$ because $h \in \calH^*_q$. Thus $|\supp(\hat q)|\cdot h(\hat q) \geq 16n\cdot e^{-\kappa 2^i}$. The depth of $\calP$ is at most $3\log(n)$ so, since $\varepsilon \leq 4/\log(e)$,
$$
|\supp(\hat q)|\cdot h(\hat q) \geq 16n\cdot e^{-\kappa 2^i} \geq 16n\cdot e^{-\kappa \cdot n} = 16n\cdot e^{-\varepsilon/4} \geq 8n.
$$ 
Recall that a derivation tree of $\calP$ has $n$ leaves, so the maximum size of an antichain is $n$.
\begin{align*}
	\Va\left[ |\vZ{r}{h}{q}| \right] &\leq \mu + \rho_h(q)^2  \sum_{\alpha \in \supp(q)} \sum_{\sigma \in A_\alpha} \frac{|\supp(q)|}{(8n)^{|\sigma|}}
	= \mu + \rho_h(q)^2  \sum_{\alpha \in \supp(q)} \sum\limits_{\ell=0}^{n} \;  
	\sum_{\substack{\sigma \in A_\alpha \\ |\sigma| = \ell} } \frac{|\supp(q)|}{(8n)^{\ell}}
\end{align*}
A derivation tree has at most $4n$ nodes. So there are at most ${4n \choose \ell}$ antichains of size $\ell$.
\begin{align*}
	\Va\left[ |\vZ{r}{h}{q}| \right] &\leq \mu + \rho_h(q)^2  \sum_{\alpha \in \supp(q)} \sum\limits_{\ell=0}^{n} \;  {4n \choose \ell} \frac{|\supp(q)|}{(8n)^{\ell}} 
\\
&\leq \mu + \rho_h(q)^2 \sum_{\alpha \in \supp(q)} |\supp(q)|\left(1+\frac{1}{8n}\right)^{4n}
 \tag{binomial identity}\\ 
&\leq \mu + \rho_h(q)^2 \sum_{\alpha \in \supp(q)} |\supp(q)|e^{4/8}
 \tag{$1+x \leq e^x$}\\ 
&\leq \mu + 2\rho_h(q)^2 |\supp(q)|^2 = \mu + 2\mu^2
\end{align*}
\end{proof}
\end{lemma}
Now we can prove Lemma~\ref{lemma:median_bound}.
\begin{proof}[Proof of Lemma~\ref{lemma:median_bound}]
Recall that $\mathfrak{M}_{j,h} = \frac{1}{\rho_h(q) n_s} \sum_{r = j\cdot n_s+1}^{(j+1)n_s}|\vZ{r}{h}{q}|$. Its expected value is 
$$
\Ex[\mathfrak{M}_{j,h}] = \mu\cdot \rho_h(q)^{-1} = |\supp(q)|.
$$ 
The random sets $\vZ{r}{h}{q}$ are independent for different $r$ so, using Lemma~\ref{lem:variance-bound}, we find
$$
\Va[\mathfrak{M}_{j,h}] = \frac{1}{\rho_h^2(q)n_s^2}\sum_{r = 1}^{n_s} \Va[|\vZ{r}{h}{q}|]\leq \frac{1}{\rho_h(q)^2n_s} \left(  \mu + 2 \mu^2 \right) =   \frac{1}{n_s}\left(\frac{|\supp(q)|}{\rho_h(q)} + 2|\supp(q)|^2\right).
$$
Next, $\mathfrak{M}_{j,h}  \notin |\supp(q)|\cdot e^{\pm \kappa 2^i}$  is subsumed by $|\mathfrak{M}_{j,h}  - |\supp(q)|| >  \kappa 2^{i+1} |\supp(q)|$ because $1- \kappa 2^i  \leq e^{-\kappa 2^i}$ and $ 1 + \kappa 2^{i+1}  \geq e^{\kappa 2^i}$ (because $1+2x \geq e^x$ when $x \leq 1$, and $\kappa 2^i \leq \kappa (n+1)^3 = \varepsilon/4 \leq 1$).
\begin{align*}
	\Pr\left[ \mathfrak{M}_{j,h}  \notin |\supp(q)|e^{\pm \kappa 2^i}\right]
	\leq \Pr\left[\big|\mathfrak{M}_{j,h}  - |\supp(q)|\big| > \kappa 2^{i+1} |\supp(q)| \right]. 
\end{align*}
Next, we use Chebyshev's inequality.
\begin{align*}
	\Pr &\left[\big|\mathfrak{M}_{j,h}  - |\supp(q)|\big| > \kappa 2^{i+1} |\supp(q)| \right] 
	\leq \frac{\Va[\mathfrak{M}_{j,h} ]}{(\kappa 2^{i+1})^2|\supp(q)|^2} \tag{Chebyshev's inequality}
	\\
	&
	\leq \frac{\Va[\mathfrak{M}_{j,h} ]}{\kappa^2 |\supp(q)|^2}  
	\leq \frac{1}{\kappa^2n_s}\left(\frac{1}{\rho_h(q) |\supp(q)|} + 2\right) 
	\\
	&\leq \frac{1}{\kappa^2n_s}\left(\frac{e^{\kappa 2^i}}{16n} + 2\right)
	\tag{$\rho_h \geq \frac{16n e^{-\kappa 2^i}}{\max_{j}|\supp(q_j)|} \geq \frac{16 n  e^{-\kappa 2^i}}{|\supp(q)|}$}
	\\
	&\leq \frac{1}{\kappa^2n_s}\left(\frac{1}{8n} + 2\right) & \tag{$e^{\kappa 2^i} \leq e^{\kappa (n+1)^{3}} = e^{\varepsilon/4} \leq 2$}
	\\
	&\leq \frac{3}{\kappa^2  n_s} 
	\\
	&\leq \frac{1}{4} 
	\tag{$n_s \geq 12/\kappa^2$}
	\\
\end{align*}
Let $X_j$ be the indicator variable taking value~$1$ if and only if $\mathfrak{M}_{j,h} \not\in |\supp(q)|\cdot e^{\pm \kappa 2^i}$ and let $\bar X = \sum_{j = 0}^{n_t - 1} X_j$. We have $\Ex[\bar X] \leq \frac{n_t}{4}$ and the $X_j$ are independent so, using Hoeffding bound,
\[
\Pr\left[\underset{0 \leq j < n_t}{\median}(\mathfrak{M}_{j,h}) \not\in |\supp(q)|\cdot e^{\pm \kappa 2^i}\right] 
= 
\Pr\left[ \bar X > \frac{n_t}{2}\right] 
\leq 
\Pr\left[ \bar X - \Ex[\bar X] \geq \frac{n_t}{4}\right] 
\leq \exp\left(-\frac{n_t}{8}\right)
\]
and $\exp(-n_t/8) \leq \exp(-n|\calP|)$ follows from $n_t = 8n|\calP|$. 
\end{proof} 

\subsection[Probability of $S^r(q)$ growing too large (Lemma~\ref{lemma:proba_S(q)})]{Proof of Lemma~\ref{lemma:proba_S(q)}: Few Samples are Needed.}
Here we focus on the event that some $S^r(q)$ contains more than $\theta$ elements. Observe that $|S^r(q)| \leq |\supp(q)| < \theta$ holds for nodes $q$ with $|\supp(q)| \leq 16n$. These nodes form part of the effective nodes of height $0$. Let $Q \subseteq \calP$ be the set of nodes with $|\supp(q)| \geq 16n$. 
\begin{align*}
\Pr\left[\bigcup_{r \in [n_sn_t]} \bigcup_{q \in \calP} |S^r(q)| \geq \theta\right] &= \Pr\left[\bigcup_{r \in [n_sn_t]} \bigcup_{q \in Q} |S^r(q)| \geq \theta\right] 
\\
&\leq \Pr\left[\bigcup_r \bigcup_{q \in Q} |S^r(q)| \geq \theta \text{ and } \bigcap_{q' \in Q} p(q') \in \Delta(q')\right] + \Pr\left[\bigcup_{q' \in Q} p(q') \not\in \Delta(q')\right]
\\
&\leq \sum_{r}\sum_{q \in Q}  \Pr\left[|S^r(q)| \geq \theta \text{ and } \bigcap_{q' \in Q} p(q') \in \Delta(q')\right] + \Pr\left[\bigcup_{q' \in Q} p(q') \not\in \Delta(q')\right]
\end{align*}
The nodes in $Q$ that have effective height $0$ are those for which $p(q)$ is set to $16n/|\supp(q)|$. So $p(q) \in \Delta(q)$ holds with probability $1$ for $q \in Q^0 \cap Q$, and therefore 
\begin{align*}
\Pr\left[\bigcup_{q' \in Q} p(q') \not\in \Delta(q')\right] =  \Pr\left[\bigcup_{q' \in Q^{> 0}} p(q') \not\in \Delta(q')\right] \leq \frac{1}{16} \tag{Lemma~\ref{lemma:proba_p(q)}}
\end{align*}
\begin{align*}
\Pr\left[|S^r(q)| \geq \theta \text{ and } \bigcap_{q' \in Q} p(q') \in \Delta(q')\right]  &\leq \Pr\left[|S^r(q)| \geq \theta \text{ and } p(q) \in \Delta(q)\right]  
\\ &\leq \Pr\left[\sum_{\alpha \in \supp(q)} \mathbbm{1}(\alpha \in S^r(q)) \cdot \mathbbm{1}(p(q) \in \Delta(q)) \geq \theta \right] \tag{$\theta > 0$}
\\ &\leq \frac{1}{\theta}\cdot \Ex\left[\sum_{\alpha \in \supp(q)} \mathbbm{1}(\alpha \in S^r(q)) \cdot \mathbbm{1}_{\{p(q) \in \Delta(q)\}}\right] \tag{Markov's inequality}
\\
 &\leq \frac{1}{\theta}\sum_{\alpha \in \supp(q)} \Ex\left[\mathbbm{1}(\alpha \in S^r(q)) \cdot \mathbbm{1}(\{p(q) \in \Delta(q))\right]
\\
 &\leq \frac{1}{\theta}\sum_{\alpha \in \supp(q)} \Pr\left[\alpha \in S^r(q)\text{ and } p(q) \in \Delta(q)\right]
\end{align*}
We will show the following.
\begin{lemma}\label{lemma:white_magic_lemma}
For every $q \in Q$, $\alpha \in \supp(q)$ and $r \in [n_sn_t]$, $\Ex\left[\frac{\mathbbm{1}(\alpha \in S^r(q))}{p(q)}\right] = 1$.
\end{lemma}
A consequence of Lemmma~\ref{lemma:white_magic_lemma} is the following.
\begin{lemma}\label{lemma:proba_s_p}
For every $q \in Q$, $\alpha \in \supp(q)$ and $r \in [n_sn_t]$, $\Pr\left[\alpha \in S^r(q)\text{ and } p(q) \in \Delta(q)\right] \leq \frac{32n}{|\supp(q)|}$.
\end{lemma}
\begin{proof}
\begin{align*}
\Ex\left[\mathbbm{1}(\alpha \in S^r(q)) \cdot \mathbbm{1}(p(q) \in \Delta(q))\right]
=  \Ex\left[\mathbbm{1}(\alpha \in S^r(q)) \cdot \mathbbm{1}(p(q) \in \Delta(q)) \cdot \frac{p(q)}{p(q)}\right] 
\end{align*}
Say the effective height of $q$ is $i$, then the random quantity $\mathbbm{1}(p(q) \in \Delta(q)) \cdot p(q)$ is less than $\frac{16\cdot n \cdot e^{\kappa\cdot 2^i}}{|\supp(q)|}$ with probability $1$. Using that $i \leq 3 \lceil \log(n) \rceil$ and that $\kappa \leq \frac{1}{\log(e)(n+1)^3}$, we find that $\mathbbm{1}(p(q) \in \Delta(q)) \cdot p(q)$ is at most $\frac{32\cdot n}{|\supp(q)|}$ with probability $1$. It follows that
\begin{align*}
\Ex\left[\mathbbm{1}(\alpha \in S^r(q)) \cdot \mathbbm{1}(p(q) \in \Delta(q))\right]
\leq  \Ex\left[\frac{\mathbbm{1}(\alpha \in S^r(q))}{p(q)}\right]  \cdot \frac{32\cdot n}{|\supp(q)|} 
\end{align*}
The claim of the lemma then follows from Lemma~\ref{lemma:white_magic_lemma}. 
\end{proof}
Equiped with Lemma~\ref{lemma:proba_s_p}, we use $\theta = 512n_sn_tn|\calP|$ to finish the proof of Lemma~\ref{lemma:proba_S(q)}.
\begin{align*}
\Pr\left[\bigcup_{r \in [n_sn_t]} \bigcup_{q \in \calP} |S^r(q)| \geq \theta\right] \leq \frac{1}{16} + \sum_{r}\sum_{q \in Q} \left(\frac{1}{\theta}\sum_{\alpha \in \supp(q)} \frac{32n}{|\supp(q)|}\right)
\leq \frac{1}{16} + \frac{32 n_sn_tn|\calP|}{\theta} \leq  \frac{1}{8}
\end{align*}

\subsubsection{Proof of Lemma~\ref{lemma:proba_s_p}.}

The proof is by induction on the effective height of $q$. 

\textbf{Base case.} Let $q$ be a node of effective height $0$. Then $S^r(q) = \reduce(\supp(q),p(q))$ and $p(q) = \min(1,16n/|\supp(q)|)$. Here $p(q)$ is a constant and not a random variable so $\Ex[\mathbbm{1}(\alpha \in S^r(q))\cdot p(q)^{-1}] = p(q)^{-1} \cdot \Ex[\mathbbm{1}(\alpha \in S^r(q))] = p(q)^{-1}\cdot\Pr[\alpha \in S^r(q)] = 1$.

\textbf{Inductive case for $\times$ nodes.} Suppose $q$ is a $\times$ node with children $q_1$ and $q_2$. Let $V_1$ and $V_2$ be the finite sets of all possible values for $p(q_1)$ and $p(q_2)$. Since the circuit below $q_1$ and $q_2$ are disjoint, the random variables $p(q_1)$ and $p(q_2)$ are independent and thus
\begin{align*}
\Ex\left[\frac{\mathbbm{1}(\alpha \in S^r(q))}{p(q)}\right] = \sum_{t_1 \in V_1} \sum_{t_2 \in V_2} \Ex\left[\frac{\mathbbm{1}(\alpha \in S^r(q))}{p(q)} \, \bigg| \, p(q_1) = t_1, p(q_2) = t_2\right] \cdot \Pr[p(q_1) = t_1]\cdot \Pr[p(q_2) = t_2]
\end{align*}
Since $p(q) = \round(q,\frac{p(q_1)p(q_2)}{16n})$, the value of $p(q)$ is fully determined by that of $p(q_1)$ and $p(q_2)$. For convenience, we let $f : \mathbbm{R}^2 \rightarrow \mathbbm{R}$ be $f(t_1,t_2) = \round(q,\frac{t_1 t_2}{16n})$. Then the event $[p(q_1) = t_1 \text{ and } p(q_2) =t_2]$ forces the event $p(q) = f(t_1,t_2)$.
\begin{align}\label{eq:white_magic_eq_1}
\Ex\left[\frac{\mathbbm{1}(\alpha \in S^r(q))}{p(q)}\right] 
= \Ex\left[\frac{\mathbbm{1}(\alpha \in S^r(q))}{f(t_1,t_2)} \, \bigg| \, p(q_1) = t_1, p(q_2) = t_2\right] \cdot \Pr[p(q_1) = t_1]\cdot \Pr[p(q_2) = t_2] \notag
\\
= \sum_{t_1 \in V_1} \sum_{t_2 \in V_2} \frac{1}{f(t_1,t_2)} \cdot \Pr[\alpha \in S^r(q) \mid p(q_1) = t_1, p(q_2) = t_2] \cdot \Pr[p(q_1) = t_1]\cdot \Pr[p(q_2) = t_2] 
\end{align}
We have $S^r(q) = \reduce(S^r(q_1) \otimes S^r(q_2), \frac{p(q)}{p(q_1)p(q_2)})$. Let $\alpha_1$ and $\alpha_2$ be the restriction of $\alpha$ to $\var(q_1)$ and $\var(q_2)$, respectively. $\alpha$ can be in $S^r(q)$ only if $\alpha_1$ and $\alpha_2$ are in $S^r(q_1)$ and $S^r(q_2)$, respectively.
\begin{align*}
\Pr[\alpha \in S^r(q) \mid  p(q_1) = t_1, p(q_2) = t_2]  = \Pr[\alpha \in S^r(q) \mid  \alpha_1 \in S^r(q_1), \alpha_2 \in S^r(q_2), p(q_1) = t_1, p(q_2) = t_2]
\\
\cdot \Pr[\alpha_1 \in S^r(q_1), \alpha_2 \in S^r(q_2) \mid p(q_1) = t_1, p(q_2) = t_2]
\\
= \Pr\left[\alpha \in \reduce\left(S^r(q_1) \otimes S^r(q_2), \frac{f(t_1,t_2)}{t_1t_2}\right) \,\Big|\,  \alpha_1 \in S^r(q_1), \alpha_2 \in S^r(q_2), p(q_1) = t_1, p(q_2) = t_2\right] 
\\
\cdot \Pr[\alpha_1 \in S^r(q_1), \alpha_2 \in S^r(q_2) \mid p(q_1) = t_1, p(q_2) = t_2]
\\
= \frac{f(t_1,t_2)}{t_1t_2} \cdot \Pr[\alpha_1 \in S^r(q_1), \alpha_2 \in S^r(q_2) \mid p(q_1) = t_1, p(q_2) = t_2]
\\
= \frac{f(t_1,t_2)}{t_1t_2} \cdot \Pr[\alpha_1 \in S^r(q_1) \mid p(q_1) = t_1] \cdot \Pr[\alpha_2 \in S^r(q_2) \mid p(q_2) = t_2]
\end{align*}
Plugging this in~(\ref{eq:white_magic_eq_1}) yields
\begin{align*}
\Ex\left[\frac{\mathbbm{1}(\alpha \in S^r(q))}{p(q)}\right] &= \prod_{i \in \{1,2\}}\sum_{t_i \in V_i} \Pr[\alpha_i \in S^r(q_i)  \mid p(q_i) = t_i] \cdot \frac{\Pr[p(q_i) = t_i]}{t_i}
\\
&= \prod_{i \in \{1,2\}}\sum_{t_i \in V_i} \Ex\left[\frac{\mathbbm{1}(\alpha_i \in S^r(q_i))}{t_i} \,\bigg| \, p(q_i) = t_i\right] \cdot \Pr[p(q_i) = t_i]
\\
&= \prod_{i \in \{1,2\}}\sum_{t_i \in V_i} \Ex\left[\frac{\mathbbm{1}(\alpha_i \in S^r(q_i))}{p(q_i)} \,\bigg| \, p(q_i) = t_i\right] \cdot \Pr[p(q_i) = t_i]
= \prod_{i \in \{1,2\}} \Ex\left[\frac{\mathbbm{1}(\alpha_i \in S^r(q_i))}{p(q_i)}\right]
\end{align*}
which is exactly $1$ by the induction hypothesis.

\textbf{Inductive case for $+$ nodes.}  Suppose $q$ is a $+$ node with children $q_1,\dots,q_k$. First we show that 
\begin{align}\label{eq:white_magic_eq_2}
\Ex\left[\frac{\mathbbm{1}(\alpha \in S^r(q))}{p(q)}\right] = \Ex\left[\frac{\mathbbm{1}(\alpha \in \hat S^r(q))}{\rho(q)}\right]
\end{align}
Let $V$ be the set of values that can be taken by $p(q)$ and $U$ be the set of values that can be taken by $\rho(q)$.
\begin{align*}
\Ex\left[\frac{\mathbbm{1}(\alpha \in S^r(q))}{p(q)}\right] &= \sum_{t \in V} \sum_{s \in U} \Ex\left[\frac{\mathbbm{1}(\alpha \in S^r(q))}{p(q)} \,\bigg|\, p(q) = t, \rho(q) = s\right]\cdot \Pr[p(q) = t, \rho(q) = s]
\\
&= \sum_{t \in V} \sum_{s \in U} \frac{1}{t}\cdot \Ex[\mathbbm{1}(\alpha \in S^r(q)) \mid p(q) = t, \rho(q) = s]\cdot \Pr[p(q) = t, \rho(q) = s]
\\
&= \sum_{t \in V} \sum_{s \in U} \frac{1}{t}\cdot \Pr[\alpha \in S^r(q) \mid p(q) = t, \rho(q) = s]\cdot \Pr[p(q) = t, \rho(q) = s]
\end{align*}
When $\Pr[p(q) = t, \rho(q) = s] = 0$ the quantity $\Pr[\alpha \in S^r(q)) \mid p(q) = t, \rho(q) = s]$ can be defined in any arbitrary way since they it is multiplied by $0$ anyway. $S^r(q)$ is $\reduce(\hat S^r(q), p(q)/\rho(q))$ and $\alpha \in \hat S^r(q)$ is a necessary condition for $\alpha \in S^r(q)$ to occur. Thus
\begin{align*}
\Pr[\alpha \in S^r(q) \mid p(q) = t, \hat\rho(q) = s] = \frac{t}{s}\cdot \Pr[\alpha \in \hat S^r(q) \mid p(q) = t, \rho(q) = s] 
\end{align*}
Thus
\begin{align*}
\Ex\left[\frac{\mathbbm{1}(\alpha \in S^r(q))}{p(q)}\right] 
&= \sum_{t \in V} \sum_{s \in U} \frac{1}{s}\cdot \Pr[\alpha \in \hat S^r(q) \mid p(q) = t, \rho(q) = s] \cdot \Pr[p(q) = t, \rho(q) = s]
\\
&= \sum_{s \in U} \frac{1}{s} \sum_{t \in V} \cdot \Pr[\alpha \in \hat S^r(q), p(q) = t, \rho(q) = s]
= \sum_{s \in U} \frac{1}{s}\cdot \Pr[\alpha \in \hat S^r(q), \rho(q) = s]
\\
&= \sum_{s \in U} \frac{1}{s}\cdot \Ex[\mathbbm{1}(\alpha \in \hat S^r(q))\mathbbm{1}(\rho(q) = s)]
= \Ex\left[\frac{\mathbbm{1}(\alpha \in \hat S^r(q))}{\rho(q)}\right]
\end{align*}
Now, let $q_j$ be the unique child of $q$ such that $\alpha \in \hat S^r(q)$ only if $\alpha \in S^r(q_j)$. We are done if we can prove
\begin{align}\label{eq:white_magic_eq_2}
\Ex\left[\frac{\mathbbm{1}(\alpha \in \hat S^r(q))}{\rho(q)}\right] = \Ex\left[\frac{\mathbbm{1}(\alpha \in S^r(q_j))}{p(q_j)}\right]
\end{align}
the induction hypothesis will then show that $
\Ex\left[\frac{\mathbbm{1}(\alpha \in S^r(q))}{p(q)}\right] = 1$, as desired. Let $W$ be the set of values that can be taken by $p(q_j)$.
\begin{align*}
\Ex\left[\frac{\mathbbm{1}(\alpha \in \hat S^r(q))}{\rho(q)}\right] 
= \sum_{t \in W} \sum_{s \in U} \frac{1}{s}\cdot \Pr[\alpha \in \hat S^r(q) \mid p(q_j) = t, \rho(q) = s]\cdot \Pr[p(q_j) = t, \rho(q) = s]
\end{align*}
$\alpha \in \hat S^r(q)$ occurs if and only if $\alpha \in S^r(q,q_j) = \reduce(S^r(q_j), \rho/p(q_j))$. Thus
\begin{align*}
\Pr[\alpha \in \hat S^r(q) \mid p(q_j) = t, \rho(q) = s] = 
\frac{s}{t}\cdot \Pr[\alpha \in S^r(q_j) \mid p(q_j) = t, \rho(q) = s]
\end{align*}
It follows that
\begin{align*}
\Ex\left[\frac{\mathbbm{1}(\alpha \in \hat S^r(q))}{\rho(q)}\right] 
= \sum_{t \in W} \frac{1}{t} \sum_{s \in U} \Pr[\alpha \in S^r(q_j) \mid p(q_j) = t, \rho(q) = s]\cdot \Pr[p(q_j) = t, \rho(q) = s]
\\
= \sum_{t \in W} \frac{1}{t} \cdot \Pr[\alpha \in \hat S^r(q_j) \mid p(q_j) = t]
= \sum_{t \in W} \frac{1}{t}\cdot \Ex[\mathbbm{1}(\alpha \in \hat S^r(q_j))\mathbbm{1}(p(q_j) = t)]
= \Ex\left[\frac{\mathbbm{1}(\alpha \in \hat S^r(q_j))}{p(q_j)}\right]
\end{align*}
\subsection*{Acknowledgements.}

Meel acknowledges the support of the Natural Sciences and Engineering Research Council of Canada (NSERC), [funding reference number RGPIN-2024-05956]; de Colnet is supported by the Austrian Science Fund (FWF),
ESPRIT project FWF ESP 235. This work was done in part while de Colnet was visiting the University of Toronto and Georgia Institute of Technology. This research was initiated at Dagstuhl Seminar 24171 on ``Automated Synthesis: Functional, Reactive and Beyond'' (\url{https://www.dagstuhl.de/24171}). We gratefully acknowledge the Schloss Dagstuhl - Leibniz Center for Informatics for providing an excellent environment and support for scientific collaboration.  We are grateful to Marcelo Arenas, Alberto Croquevielle, Umang Mathur, and Cristian Riveros  for many conversations that have shaped the final version of this work. 

\newpage

\clearpage 
\appendix
\section{Appendix}

\addtocontents{toc}{\protect\setcounter{tocdepth}{2}}

\subsection{Proof of Proposition~\ref{prop:cnf-to-union-concat}}\label{sec:cnf-to-union}

\cnfunion*

\begin{proof}
	The core idea is to construct a program that directly mirrors the structure of derivations in a Chomsky Normal Form (CNF)  grammar. The construction is a dynamic programming algorithm, building up the language of derivable strings of length $i$ from the languages of strings of lengths less than $i$.
	
	We construct a program node, let's call it $q_{A,i}$, for each non-terminal $A \in V$ and each length $i \in \{1, \dots, n\}$. The goal is for each node $q_{A,i}$ to represent the language $L_i(A)$, which is the set of all words of length exactly $i$ that can be derived from $A$.
	
	The construction proceeds by mapping the two types of CNF production rules to the operations of the $(\cup, \cdot)$ algebra:
	
	\begin{enumerate}
		\item \textbf{Terminal Productions ($A \to a$):} A string of length 1 can be derived from a non-terminal $A$ if there is a rule of the form $A \to a$. Since there may be multiple such rules for a given $A$ (e.g., $A \to a$ and $A \to b$), the total set of length-1 strings derivable from $A$ is the \emph{union} of the outcomes of all its applicable terminal productions. We therefore define the node $q_{A,1}$ as the union of all terminals $a$ for which the rule $A \to a$ exists.
		
		\item \textbf{Non-terminal Productions ($A \to BC$):} Deriving a string of length $i > 1$ from $A$ involves both concatenation and union operations, stemming from different sources of choice in the derivation.
		\begin{itemize}
			\item \textbf{Concatenation:} Any single rule $A \to BC$ implies that a word derivable from $A$ is formed by the \emph{concatenation} of a word from $B$ and a word from $C$. This maps directly to the program's concatenation ($\cdot$) operation.
			
			\item \textbf{Union:} The union operation ($\cup$) arises from two distinct choices in the derivation process:
			\begin{enumerate}
				\item \textbf{Choice of Length Split:} For a single rule $A \to BC$, a word of length $i$ can be formed by concatenating a word of length $k$ from $B$ and a word of length $i-k$ from $C$. Since this split can occur in multiple ways (i.e., for any $k$ from $1$ to $i-1$), the total language generated by this single rule is the \emph{union} over all possible length splits.
				\item \textbf{Choice of Production Rule:} A given non-terminal $A$ can typically be the left-hand side of multiple production rules (e.g., $A \to B_1C_1$, $A \to B_2C_2$, etc.). The complete language $L_i(A)$ is the set of words derivable from \emph{any} of these rules. Therefore, we must take the \emph{union} of the languages generated by each individual rule.
			\end{enumerate}
		\end{itemize}
	\end{enumerate}
	
	Combining these observations, our program constructs each node $q_{A,i}$ (for $i>1$) by implementing a nested union: an {\em outer} union over all production rules with $A$ on the left-hand side, and an {\em inner} union over all possible length splits for each rule's concatenated sub-problems.
	
	Since this bottom-up construction perfectly mimics the definition of a derivation in CNF for a fixed length, the node $q_{A,i}$ will represent the language $L_i(A)$. The final output of the program is the node $q_{S,n}$, which represents the set of words of length $n$ derivable from the start symbol $S$, which is exactly $L_n(G)$.
	
	(The special case for $n=0$ is handled separately: if $S \to \varepsilon$ is a rule, $L_0(G) = \{\varepsilon\}$; otherwise, it is $\emptyset$. The program can represent these trivial languages directly.)
\end{proof}

\begin{proposition}
Suppose there is an homogeneous $(\cup ,\cdot)$ program $\calP$ computing a language $L \subseteq \Sigma^n$. Then there is a multilinear homogeneous $(+,\times)$ program  $\calP'$ of size $n|\calP|$ and degree $n$ computing a polynomial $p$ over $n|\Sigma|$ variables such that $|\supp(p)| = |L|$. In addition, given $\calP$ the program $\calP'$ is constructed in time $O(n|\calP|)$.
\end{proposition}
\begin{proof}
We reproduce the proof of Lemma 3.5 from~\cite{GoreJKSM97}; this lemma partially subsumes our proposition, it only misses a proof that the $(+,\times)$ program is multilinear. Let $\calP =(u_0, \dots,u_{|\calP| - 1})$. We define $\deg(u_i)$ inductively: if $u_i$ is a primitive element, i.e., a letter in $\Sigma$ then $\deg(u_i) = 1$; if $u_i = u_j \cup u_k$ then $\deg(u_i) = \max(\deg(u_j),\deg(u_k))$; and if $u_i = u_j \cdot u_k$ then $\deg(u_i) = \deg(u_j) + \deg(u_k)$. Since $\calP$ is homogeneous, $\deg(u_i)$ is the unique length of the words in the language represented by $u_i$. We consider $n|\Sigma|$ variables $x_{\sigma,r}$ for $\sigma \in \Sigma$ and $0 \leq r \leq n-1$. For each $0 \leq i \leq |\calP| - 1$ and $0 \leq r \leq n - \deg(u_i)$, define
$v^{(r)}_i$ as follows:
\begin{itemize}
\item[•] if $u_i = \sigma \in \Sigma$ then $v^{(r)}_i = x_{\sigma,r}$;
\item[•] if $u_i = u_j \cup u_k$ then $v^{(r)}_i = v^{(r)}_j + v^{(r)}_k$; 
\item[•] if $u_i = u_j \cdot u_k$ then $v^{(r)}_i = v^{(r)}_j \times v^{(r + \deg(u_j))}_k$. 
\end{itemize}
$\calP'$ is the $(+,\times)$ program obtained by arranging the polynomials $v^{(r)}_i$ in lexicographic order of the pairs $(i,r)$. ~\cite[Lemma 3.5]{GoreJKSM97} shows that $\supp(v_i^{(r)}) = \supp(\pi^{(r)}(u_i))$ where $\pi^{(r)}(a_0,\dots,a_{\deg(u_i)-1}) = \prod_{t = 0}^{\deg(u_i)-1} x_{a_t,r+t}$ and $\pi^{(r)}(L') = \sum_{w \in L'} \pi^{(r)}(w)$ for any language $L'$. It follows that $|\supp(\calP')| = |\supp(\pi^{(0)}(L))| = |L|$ (where $L$ is the language computed by $\calP$).

We claim that $\calP'$ is multilinear and homogeneous. A simple induction shows that $\deg(v^{(r)}_i) = \deg(u_i)$ for every $i$ and $r$ so the homogeneity of the $\calP$ is passed on to $\calP'$. To prove multilinearity, it suffices to show the following.
\begin{itemize}
\item[] \textit{For all $i$ and $r$, $x_{\sigma,t} \in \var(v^{(r)}_i)$ only if $r \leq t \leq r +\deg(u_i) - 1$.}
\end{itemize} 
The proof is by induction on $i$. If $u_i = \sigma$ then $\var(v^{(r)}_i)  = \{x_{\sigma,r}\}$ and the claim is immediately true; this takes care of the base case $i = 1$. Next, if  $u_i = u_j \cup u_k$ then $x_{\sigma,t} \in \var(v^{(r)}_i)$ if and only if $x_{\sigma,t} \in \var(v^{(r)}_j) \cup \var( v^{(r)}_k)$ which, by induction, is only $r \leq t \leq r + \max(\deg(u_j),\deg(u_k)) - 1 = r + \deg(u_i) - 1$. If $u_i = u_j \cdot u_k$ then $x_{\sigma,t} \in \var(v^{(r)}_i)$ if and only if $x_{\sigma,t} \in \var(v^{(r)}_j) \cup \var(v^{(r + \deg(u_j))}_k)$ which, by induction, is only if $r \leq t \leq r + \max(\deg(u_j),\deg(u_j) + \deg(u_k)) - 1 = r + \deg(u_i) - 1$.
\end{proof}

\clearpage 

\subsection{Proof of Lemma~\ref{lemma:not_so_black_magic}}\label{sec:proof-blackmagic} 

\notSoBlackMagic*
\begin{proof}
For $A$ an algorithm and $X_1,\dots,X_k$ random variables in $A$, with $\Omega_i$ the universe of $X_i$, we denote by
$$
\Pr_A\left[\bigcap_{l = 1}^k X_l = \omega_l\right]
$$
the probability that, after running $A$ (which we assume terminates), we have $X_i = \omega_i$ for every $i$ (with $\omega_i \in \Omega_i$). We rename $\approxMCDNNFcore^*$ $A_1$. We make a sequence of modifications to $A_1$ obtaining algorithms $A_2,A_3,\dots$. For $X_1,\dots,X_k$ random variables in $A_i$ and $Y_1,\dots,Y_k$ are random variables in $A_j$ such that $X_i$ and $Y_i$ have the same universe $\Omega_i$, we write
$$
(X_1,\dots,X_k)_{A_i} \sim (Y_1,\dots,Y_k)_{A_j}
$$
to say that, for every $(\omega_1,\dots,\omega_k) \in \Omega_1 \times \dots \times \Omega_k$, we have
$$
\Pr_{A_i}\left[\bigcap_{l = 1}^k X_l = \omega_l\right] =  \Pr_{A_j}\left[\bigcap_{l = 1}^k Y_l = \omega_l\right]
$$
We may have variables that have the same name in $A_i$ and $A_j$, say $X_1,\dots,X_k$. Then $
(X_1,\dots,X_k)_{A_i} \sim (X_1,\dots,X_k)_{A_j}
$ means that the lefthand side variables are considered in $A_i$ and the righthand side variables are considered in $A_j$.

The algorithm modifications are all on done in $\OrEstimateSample$ and $\AndEstimateSample$ at the same time. Both $\OrEstimateSample$ and $\AndEstimateSample$ are as follow in $A_1$ (for $\AndEstimateSample$, $\hat{S}^r(q)$ is technically not defined and is not required in the computation of $p(q)$, so we can assume $\hat{S}^r(q) = \emptyset$).

\begin{algorithm}[H]\caption*{$\mathsf{estimateSample}$ in $A_1$}
compute $\hat{S}^r(q)$ for all $r$ using $\{ S^r(q') \mid q' \in \children(q)\}$
\\
compute $p(q)$ using $\{\hat{S}^r(q)\}_r$
\\
compute $S^r(q)$ for all $r$ using $p(q)$ and $\hat{S}^r(q)$
\end{algorithm}

For every state $q$, we keep the history for the descendants of $q$. Formally, we maintain a variable $H(q)$ for every state $q$. Initially $H(q)$ is empty for all $q$. After $p(q)$ is computed, $H(q')$ is updated for all descendants $q'$ of $q$ as follow: $H(q') = H(q') \cup (p \mapsto p(q))$. Thus, when we reach $\mathsf{estimateSample}(q')$, $H(q')$ is the history for $q'$. Clearly, the variables $H(q)$ are unused for the computation of the other random variables in $A_1$.
 
\begin{algorithm}[H]\caption*{$\mathsf{estimateSample}$ in $A_1$} 
compute $\hat{S}^r(q)$ for all $r$ using $\{ S^r(q') \mid q' \in \children(q)\}$
\\
compute $p(q)$ using $\{\hat{S}^r(q)\}_r$ \textcolor{red}{and update $H$}
\\
compute $S^r(q)$ for all $r$ using $p(q)$ and $\hat{S}^r(q)$
\end{algorithm}

In $A_2$, for every realizable history $h$ for $q$, and every $r \in [\nsnt]$ and for every $v$ that is a possible candidate for $p(q)$, we have sets $\hat{S}^r_h(q)$ and $S^r_{h,v}(q)$. Initially these new sets are empty. When the $A_2$ computes the value for $p(q)$, $H(q)$ has already been set. Once $\hat{S}^r(q)$ is computed, we copy its content in $\hat{S}^r_{H(q)}(q)$ and, once $S^r(q)$ is computed, we copy its content in $S^r_{H(q),p(q)}(q)$. $A_1$ and $A_2$ construct the random variables $p(q)$, $S^r(q)$, $\hat S^r(q)$ and $H(q)$ in the same way so
$$
(H(q),p(q),(S^r(q),\hat S^r(q))_{r \in [\alpha]})_{A_1} \sim 
(H(q),p(q),(S^r(q),\hat S^r(q))_{r \in [\alpha]})_{A_2}
$$
We also have that 
$$
(H(q),p(q),(S^r(q),\hat S^r(q))_{r \in [\alpha]})_{A_2}
\sim
(H(q),p(q),(S^r_{H(q),p(q)}(q),\hat S^r_{H(q)}(q))_{r \in [\alpha]})_{A_2}
$$
\begin{algorithm}[H]\caption*{$\mathsf{estimateSample}(q)$ in $A_2$}
compute $\hat{S}^r(q)$ for all $r$ using $\{ S^r(q') \mid q' \in \children(q)\}$
\\
\textcolor{red}{copy $\hat{S}^r(q)$ to $\hat{S}^r_{H(q)}(q)$ for all $r$}
\\
compute $p(q)$ using $\{\hat{S}^r(q)\}_r$ and update $H$
\\
compute $S^r(q)$ for all $r$ using $p(q)$ and $\hat{S}^r(q)$
\\
\textcolor{red}{copy $S^r(q)$ to $S^r_{H(q),p(q)}(q)$ for all $r$}
\end{algorithm}

We are going to dedfine $A_3,A_4,A_5,A_6$ and $A_7$ such that, for every $3 \leq i \leq 7$,
\begin{equation}\label{eq:same_distrib}
(H(q),p(q),(S^r_{H(q),p(q)}(q),\hat S^r_{H(q)}(q))_{r \in [\alpha]})_{A_{i-1}}
\sim
(H(q),p(q),(S^r_{H(q),p(q)}(q),\hat S^r_{H(q)}(q))_{r \in [\alpha]})_{A_i}
\end{equation}

For a fixed $q$, to compute $p(q)$, $\{\hat S^r(q)\}_r$, $A_2$ uses $\{p(q')\}_{q' \in \children(q)}$ and $\{\{S^r(q')\}_r\}_{q' \in \children(q)}$. But at that point, $S^r_{H(q'),p(q')}(q')$ is indentical to $S^r(q')$ so we can swap them. Similarly, to compute $\{S^r(q)\}_r$, the algorithm uses $p(q)$ and $\{\hat S^r(q)\}_r$. But at that point, $\hat S^r(q)$ and $\hat S^r_{H(q)}(q)$ are indentical.$A_3$ instead does the computation with the $S^r_{H(q'),p(q')}(q')$ and the $\hat S^r_{H(q)}(q)$
and then copies the result in $\hat S^r(q)$ and $S^r(q)$. Since the swapped sets are identical, Eq~(\ref{eq:same_distrib}) holds for $i = 3$.

\begin{algorithm}[H]
\caption*{$\mathsf{estimateSample}(q)$ in $A_3$}
compute $\hat{S}^r(q)$ for all $r$ using $\{\textcolor{red}{S^r_{H(q'),p(q')}(q')} \mid q' \in \children(q)\}$
\\
copy $\hat{S}^r(q)$ to $\hat{S}^r_{H(q)}(q)$ for all $r$
\\
compute $p(q)$ using $\{\textcolor{red}{\hat{S}^r_{H(q)}(q)}\}_r$ and update $H$
\\
compute $S^r(q)$ for all $r$ using $p(q)$ and \textcolor{red}{$\hat{S}_{H(q)}^r(q)$}
\\
copy $S^r(q)$ to $S^r_{H(q),p(q)}(q)$ for all $r$
\end{algorithm}

\noindent In $A_4$ we swap $\hat{S}^r_{H(q)}(q)$ with $\hat S^r(q)$ and $S^r_{H(q),p(q)}(q)$ and $S^r(q)$. That is, first compute $\hat{S}^r_{H(q)}(q)$ (resp. $S^r_{H(q),p(q)}(q)$) and then copy its content to $\hat{S}^r(q)$ (resp. $S^r(q)$). Again, compared to $A_3$ we are just swapping the roles of sets that are anyway indentical, so Eq~\ref{eq:same_distrib} holds for $i = 4$.

\begin{algorithm}[H]\caption*{$\mathsf{estimateSample}(q)$ in $A_4$}
compute \textcolor{red}{$\hat{S}^r_{H(q)}(q)$} for all $r$ using $\{S^r_{H(q'),p(q')}(q') \mid q' \in \children(q)\}$
\\
compute $p(q)$ using $\{\hat{S}^r_{H(q)}(q)\}_r$ and update $H$
\\
compute \textcolor{red}{$S^r_{H(q),p(q)}(q)$} for all $r$ using $\hat{S}^r_{H(q)}(q)$
\\
copy \textcolor{red}{$S^r_{H(q),p(q)}(q)$ to $S^r(q)$} for all $r$
\\
copy \textcolor{red}{$\hat{S}^r_{H(q)}(q)$ to $\hat{S}^r(q)$} for all $r$
\end{algorithm}

$A_4$ does not touch any sets $\hat{S}^r_{h}(q)$ or $S^r_{v,h}(q)$ when $h \neq H(q)$ and $v \neq p(q)$. $A_5$ computes $\hat{S}^r_{H(q)}(q)$ and $S^r_{p(q),H(q)}(q)$ like $A_4$ -- which ensures Eq~(\ref{eq:same_distrib}) holds for $i = 5$ -- but also computes all other $\hat{S}^r_{h}(q)$ and $S^r_{h,t}(q)$. Say $\children(q) = (q_1,\dots,q_k)$ then $A_5$ sets $m_h(q) = \max(h(q_1),\dots,h(q_k)$ and $\hat{S}^r_{h}(q) = \union(q,\reduce(S^r_{h_1,h(q_1)}(q_1),h(q_1)/m_h(q)),\dots,\reduce(S^r_{h_k,h(q_k)}(q_k),h(q_k)/m_h(q)))$ and $S^r_{h,v}(q) = \reduce(\hat{S}^r_{h}(q),m_h(q)/v)$, where $h_i$ denote the restriction of $h$ to the descendants of $q_i$.

\begin{algorithm}[H]\caption*{$\mathsf{estimateSample}(q)$ in $A_5$}
compute $\hat{S}^r_{H(q)}(q)$ for all $r$ using $\{ S^r_{H(q'),p(q')}(q') \mid q' \in \children(q)\}$
\\
\textcolor{red}{compute $\hat{S}^r_{h}(q)$ for all $r$ and $h$ using $\{ \{S^r_{h,v}(q') \}_{r,h,v} \mid q' \in \children(q)\}$}
\\
compute $p(q)$ using $\{\hat{S}^r_{H(q)}(q)\}_r$ and update $H$
\\
compute $S^r_{H(q),p(q)}(q)$ for all $r$ using $\hat{S}^r_{H(q)}(q)$
\\
\textcolor{red}{compute $S^r_{h,v}(q)$ for all $r$ and all $(h,v) \neq (H(q),p(q))$ using $\{\hat{S}^r_{h}(q)\}_h$}
\\
copy $S^r_{H(q),p(q)}(q)$ to $S^r(q)$ for all $r$
\\
copy $\hat{S}^r_{H(q)}(q)$ to $\hat{S}^r(q)$ for all $r$
\end{algorithm}

\noindent But then, in $A_5$, the variables $\hat{S}^r_{H(q)}(q)$ and $S^{r}_{H(q),p(q)}(q)$ are computed like any other $\hat S^r_h$ and $S^r_{h,v}$. So we define $A_6$ that directly computes all variables uniformly and have that Eq~(\ref{eq:same_distrib}) holds for $i = 6$

\begin{algorithm}[H]\caption*{$\mathsf{estimateSample}(q)$ in $A_6$}
compute $\hat{S}^r_{h}(q)$ for all $r$ \textcolor{red}{and all $h$} using $\{ S^r_{h',p(q')}(q')\}_{h',q' \in \children(q)}$
\\
compute $p(q)$ using $\{\hat{S}^r_{H(q)}(q)\}_r$ and update $H$
\\
compute $S^r_{h,v}(q)$ for all $r$, \textcolor{red}{and all $h$ and $v$} using $\{\hat{S}^r_{h}(q)\}_h$
\\
copy $S^r_{H(q),p(q)}(q)$ to $S^r(q)$ for all $r$
\\
copy $\hat{S}^r_{H(q)}(q)$ to $\hat{S}^r(q)$ for all $r$
\end{algorithm}

\noindent In $A_6$, line 3 does not depend on line 2 so we can swap them, thus obtaining $A_7$. Clearly, Eq~(\ref{eq:same_distrib}) holds for $i = 7$.

\begin{algorithm}[H]\caption*{$\mathsf{estimateSample}(q)$ in $A_7$}
compute $\hat{S}^r_{h}(q)$ for all $r$ and all $h$ using $\{ S^r_{h',p(q')}(q')\}_{h',q' \in \children(q)}$
\\
compute $S^r_{h,v}(q)$ for all $r$, $h$ and $v$ using $\{\hat{S}^r_{h}(q)\}_h$
\\
compute $p(q)$ using $\{\hat{S}^r_{H(q)}(q)\}_r$ and update $H$
\\
copy $S^r_{H(q),p(q)}(q)$ to $S^r(q)$ for all $r$
\\
copy $\hat{S}^r_{H(q)}(q)$ to $\hat{S}^r(q)$ for all $r$
\end{algorithm}

Finally we let $A_8$ be the same as $A_7$ without line 3, 4 and 5. Since lines 1 and 2 (for $q$) do not depend on the execution of lines 3,4,5 (for descendants of $q$) we have that
$$
((S^r(q),\hat S^r(q))_{r \in [\alpha]})_{A_7}
\sim
((S^r_{H(q),p(q)}(q),\hat S^r_{H(q)}(q))_{r \in [\alpha]})_{A_8}
$$
\begin{algorithm}[H]\caption*{$\mathsf{estimateSample}(q)$ in $A_8$}
compute $\hat{S}^r_{h}(q)$ for all $r$ and all $h$ using $\{ S^r_{h',p(q')}(q')\}_{h',q' \in \children(q)}$
\\
compute $S^r_{h,v}(q)$ for all $r$, $h$ and $v$ using $\{\hat{S}^r_{h}(q)\}_h$
\end{algorithm}

We see that $A_8$ is exactly the random process, so
$
\Pr_{A_8}[e(\hat S^1_h(q),\dots,\hat S^\nsnt_h(q))] = \Pr[e(\vZ{1}{h}{q},\dots,\vZ{\nsnt}{h}{q})]
$
and
$
\Pr_{A_8}[e(S^1_{h,v}(q),\dots,S^\nsnt_{h,v}(q))] = \Pr[e(\vY{1}{h}{v}{q},\dots,\vZ{\nsnt}{h}{v}{q})]
$.
Now, we have shown that 
$$
(H(q),p(q),(S^r(q),\hat S^r(q))_{r \in [\alpha]})_{A_1}
\sim
(H(q),p(q),(S^r_{H(q),p(q)}(q),\hat S^r_{H(q)}(q))_{r \in [\alpha]})_{A_7}
$$
It follows that 
\begin{align*}
(H(q),p(q),(S^r(q))_{r \in [\alpha]})_{A_1}
&\sim
(H(q),p(q),(S^r_{H(q),p(q)}(q))_{r \in [\alpha]})_{A_7}
\\
(H(q),(\hat S^r(q))_{r \in [\alpha]})_{A_1}
&\sim
(H(q),(\hat S^r_{H(q)}(q))_{r \in [\alpha]})_{A_7}
\end{align*}
Therefore 
\begin{align*}
\Pr_{A_1}[H(q) = h \text{ and } p(q) = v \text{ and } e(\hat S^1(q),\dots,\hat S^\nsnt(q))] 
&= 
\Pr_{A_7}[H(q) = h \text{ and } p(q) = v \text{ and } e(S^1_{h,v}(q),\dots,S^\nsnt_{h,v}(q))] 
\\
&\leq \Pr_{A_7}[e(S^1_{h,v}(q),\dots,S^\nsnt_{h,v}(q))] 
\\
&= \Pr_{A_8}[e(S^1_{h,v}(q),\dots,S^\nsnt_{h,v}(q))] 
\\
&= \Pr[e(\vY{1}{h}{v}{q},\dots,\vY{\nsnt}{h}{v}{q})]
\end{align*}
and
\begin{align*}
\Pr_{A_1}[H(q) = h \text{ and } e(\hat S^1(q),\dots,\hat S^\nsnt(q))] &= 
\Pr_{A_7}[H(q) = h \text{ and } e(\hat S^1_h(q),\dots,\hat S^\nsnt_h(q))] 
\\
&\leq \Pr_{A_7}[e(\hat S^1_h(q),\dots,\hat S^\nsnt_h(q))] 
\\
&= \Pr_{A_8}[e(\hat S^1_h(q),\dots,\hat S^\nsnt_h(q))] 
\\
&= \Pr[e(\vZ{1}{h}{q},\dots,\vZ{\nsnt}{h}{q})]
\end{align*}
\end{proof}

\subsection{Proof of Lemma~\ref{lemma:proba_first_order}}
{\probaFirstOrder*}
\begin{proof}
We proceed by induction. For nodes $q$ of effective weight $0$ the statement of the lemma is immediate by definition of $\vY{r}{h}{t}{q}$. Now consider $q$ of effective height strictly greater than $0$ and suppose the statement of the lemma holds for all $q'$ of smaller effective height.
\begin{enumerate}[leftmargin=*]
\item[•] first case: $q = q_1 \times q_2$. Let $h = h_1 \cup h_2 \cup (q_1 \mapsto t_1, q_2 \mapsto t_2)$, with $h_1$ and $h_2$ two compatible and realizable histories for $q_1$ and $q_2$, respectively, and $t \leq t_1t_2$. By definition, $\vY{r}{h}{t}{q} =
\reduce(\vY{r}{h_1}{t_1}{q_1} \otimes \vY{r}{h_2}{t_2}{q_2},\frac{t}{t_1t_2}).
$
Let $\alpha_1$ be the restriction of $\alpha$ to $\var(q_1)$ and $\alpha_2$ be the restriction of $\alpha$ to $\var(q_2)$.
\begin{align*}
\Pr&\left[\alpha \in \vY{r}{h}{t}{q}\right] 
\\
&= \Pr\left[\alpha \in \vY{r}{h}{t}{q} \mid \alpha_1 \in \vY{r}{h_1}{t_1}{q_1}, \alpha_2 \in \vY{r}{h_2}{t_2}{q_2}\right]\Pr\left[\alpha_1 \in \vY{r}{h_1}{t_1}{q_1} \text{ and } \alpha_2 \in \vY{r}{h_2}{t_2}{q_2}\right]
\\
&= \frac{t}{t_1t_2}\Pr\left[\alpha_1 \in \vY{r}{h_1}{t_1}{q_1} \text{ and } \alpha_2 \in \vY{r}{h_2}{t_2}{q_2}\right] \tag{$\reduce$}
\\
&= \frac{t}{t_1t_2}\Pr\left[\alpha_1 \in \vY{r}{h_1}{t_1}{q_1}\right]\Pr\left[\alpha_2 \in \vY{r}{h_2}{t_2}{q_2}\right] \tag{Fact~\ref{fact:independence}}
\\
&= t \tag{induction}
\end{align*}
\item[•] Second case: $q = q_1 + \dots + q_k$ with $\prev(q) = (q_1,\dots,q_k)$. Let $h = h_1 \cup \dots \cup h_k \cup (q_1 \mapsto t_1,\dots,q_k \mapsto t_k)$ where the histories $h_j$ for the $q_j$ are realizable and pairwise compatible. Let $t \leq t_{\min} = \min(t_1,\dots,t_k)$. By definition, $\vY{r}{h}{t}{q}
=
\reduce(\vZ{r}{h}{q},\frac{t}{t_{\min}})
$
where 
$$
\vZ{r}{h}{q} = 
	\union(
		\reduce(\vY{r}{h_1}{t_1}{q_1},\frac{t_{\min}}{t_1}),
		\reduce(\vY{r}{h_2}{t_2}{q_2},\frac{t_{\min}}{t_2}),
		\dots,
		\reduce(\vY{r}{h_k}{t_k}{q_k},\frac{t_{\min}}{t_k})
	).
$$
There is a unique smallest $j$ such that $\alpha \in \supp(q_j) \setminus (\supp(q_1) \cup \dots \cup \supp(q_{j-1}))$ holds. So 
$$
\Pr\left[\alpha\in \vZ{r}{h}{q}\right] 
= \Pr\left[\alpha \in \reduce\left(\vY{r}{h_j}{t_j}{q_j},\frac{t_{\min}}{t_j}\right)\right] 
= \frac{t_{\min}}{t_j}\Pr\left[\alpha \in \vY{r}{h_j}{t_j}{q_j}\right]
= t_{\min}
$$ 
where the last equality uses the induction hypothesis. It follows that 
$$
\Pr\left[\alpha\in \vY{r}{h}{t}{q}\right] 
= \Pr\left[\alpha \in \reduce\left(\vZ{r}{h}{q},\frac{t}{t_{\min}}\right)\right] 
= \frac{t}{t_{\min}}\Pr\left[\alpha\in \vZ{r}{h}{q}\right] 
= t.
$$
\end{enumerate}
\end{proof}

\subsection{Proof of Lemma~\ref{lemma:proba_second_order}}
\probaSecondOrder*

In the following we consider a single copy of the $n_sn_t$ random processes. So the superscript $r$ is dropped from notations. We call \emph{contain-event} for $q$ any event of the form $\alpha \in \vY{}{h}{t}{q}$. We call \emph{reduce-event} any event of the form $\alpha \in \reduce(\calZ,t)$ for $\calZ$ a (potentially random) set and $t$ a number between $0$ and $1$. All contain-events $E$ can be written as reduce-events, either of the form $\alpha \in \reduce(\supp(q),t)$ when $q$ is in $Q^0$, or of the form $\alpha \in \reduce(\vY{}{h_1}{t_1}{q_1} \otimes \vY{}{h_2}{t_2}{q_2},p)$ when $q = q_1 \times q_2$, or of the form $\alpha \in \reduce(\vY{}{h'}{t'}{q'},p')$ when $q$ is a $+$ node. The last claim takes some explaining. Suppose $q$ is a $+$ node with children $q_1,\dots,q_k$ and that $q \not\in Q^0$. The variable $\vY{}{h}{t}{q}$ is defined as $\reduce(\vZ{}{h}{q},p)$ for some probability $p$. Since $\alpha \in \vZ{}{h}{q}$ occurs if and only if $\alpha \in \reduce(\vY{}{h_i}{t_i}{q_i},p_i)$ for some child $q_i$ of $q$ and some unique probability $p_i$, the event $\alpha \in \vY{}{h}{t}{q}$ can be rewritten as the reduce event $\alpha \in \reduce(\reduce(\vY{}{h_i}{t_i}{q_i},p_i),p) = \reduce(\vY{}{h_i}{t_i}{q_i},p_ip)$. Now we define the \emph{premises} of a reduce-event.
\begin{definition}\text{}
\begin{itemize} 
\item The premises of $\alpha \in \reduce(\vY{}{h_1}{t_1}{q_1} \otimes \vY{}{h_2}{t_2}{q_2},p)$ are the two events $\alpha_1 \in \vY{}{h_1}{t_1}{q_1}$ and $\alpha_2 \in \vY{}{h_2}{t_2}{q_2}$, where $\alpha_i$ denotes the restriction of $\alpha$ to $var(q_i)$
\item The premises of $\alpha \in \reduce(\vY{}{h}{t}{q},p)$ consist only of the event $\alpha \in \vY{}{h}{t}{q}$. 
\item  The premises of $\alpha \in \reduce(\supp(q),p)$ consist only of the event $\alpha \in \supp(q)$.
\end{itemize}
\end{definition}
Note that the premises of $\alpha \in \reduce(\supp(q),p)$ occur with probability one as we assume that $\alpha$ is in $\supp(q)$. Since all contain-events are reduce-events, the \emph{premises} of a contain-event are well-defined. For $E$ a contain-event for $q$, $\premises(E)$ is the set of its premises. $E$ occurs only if its premises occur; formally,
$$
\Pr[E] = \Pr[E \mid \premises(E)]\Pr[\premises(E)].
$$
Next, let $\premises^*(E)$, the \emph{extended premises of $E$}, be defined as $\premises^*(E) = \premises(E)$ when $q \in Q^0$, and as $\premises^*(E) = \premises(E) \cup \bigcup_{E' \in \premises(E)} \premises^*(E')$ otherwise. In words, $\premises^*(E)$ contains the premises of $E$, and the premises of its premises, and so on. We have that 
$$
\Pr[\premises(E)] = \Pr[\premises^*(E)].
$$
Let $\premises^*[E] = \{E\} \cup \premises^*(E)$. Finally, for $\calE$ a set of contain-events, let $\premises^*(\calE) = \bigcup_{E \in \calE}\premises^*(\calE)$ and $\premises^*[\calE] = \bigcup_{E \in \calE} \premises^*[E]$. 

Say $E\,:\,\alpha \in \vY{}{h}{t}{q} = \reduce(\calZ,p)$ for some (random) set $\calZ$ and some probability $p$. If the premises of $E$ are known to hold then $\alpha \in \calZ$ is known and the occurence of $E$ boils down to a probability-$p$ coin toss to determine whether $\alpha$ is reduced away. No other contain-event can influence that particular coin toss except those whose (extended) premises force the result of that coin toss. Formally, let $\calE$ be a set of contain-events such that $E \not\in \premises^*(E')$ for any $E' \in \calE$, then $E$ is independent of $\calE$ when conditioned on $\premises(E)$:
$$
\Pr[E \text{ and } \calE \mid \premises(E)] = \Pr[E \mid \premises(E)]\Pr[\calE \mid \premises(E)]
$$
As a consequence, the event $\alpha \in \vY{}{h}{t}{q}$ is decomposable in a chain-rule manner as a product of probabilities of reduce-events conditioned on their premises, with one event per node of the derivation tree $\tree^*(\alpha,q)$. Informally, the following lemma says that $\alpha \in \vY{}{h}{t}{q}$ occurs if and only if all independent coin-tosses in the $\reduce$ procedures required to propagate the samples that compose $\alpha$ up in $\tree^*(\alpha,q)$ are successful.

\begin{restatable}{lemma}{decomposeFull}\label{lemma:decomposeFull}
Let $E_q\,:\, \alpha \in \vY{}{h}{t}{q}$. For $q' \in \tree^*(\alpha,q)$, let $\alpha_{q'}$ be the restriction of $\alpha$ to $var(q')$, let $h_{q'}$ be the restriction of $h$ to $\descendants(q')$ and let $E_{q'}\,:\,\alpha_{q'} \in \vY{}{h_{q'}}{t_{q'}}{q'}$. Then
$$
\Pr[E_q] = \prod_{q' \in \tree^*(\alpha,q)} \Pr[E_{q'} \mid \premises(E_{q'})] = \prod_{E' \in \premises^*[E_q]} \Pr[E' \mid \premises(E')].
$$
\end{restatable}
\begin{proof}
We proceed by induction on $q$'s effective height. If $q$'s effective height is $0$ then the premises of $E_q$ consist in the event $\alpha \in \supp(q)$ and thus occur with probability $1$. In this case $\tree^*(\alpha,q)$ consists in a single leaf $(q)$. So $\prod_{q' \in \tree^*(\alpha,q)} \Pr[E_{q'} \mid \premises(E_{q'})] = \Pr[E_q \mid \premises(q)] = \Pr[E_q \mid premises(q)]\Pr[\premises(q)]$. Now suppose $q$'s effective height is $d > 0$ and that the claim holds for all contain-events for nodes of smaller effective height.
\begin{align}\label{eq:decomposeFulleq}
\Pr[E_q] = \Pr[E_q \mid \premises(E_q)]\Pr[\premises(E_q)] 
\end{align}
\begin{itemize}
\item If $q$ is a $+$ node then $\premises(E_q)$ is a contain-event for a child $q'$ of $q$; call it $E_{q'}\,:\,\alpha \in \vY{}{h'}{t'}{q'}$. By induction 
$
\Pr[E_{q'}] = \prod_{q'' \in \tree^*(\alpha,q')} \Pr[E_{q''} \mid \premises(E_{q''})] 
$
so, since $\tree^*(\alpha,q) = (q,\tree^*(\alpha,q'))$ the claim follows from~(\ref{eq:decomposeFulleq}).
\item If $q$ is a $\times$ node $q_1 \times q_2$ then $\premises(E_q)$ is the conjunction of the two contain-events $E_{q_1}\,:\,\alpha_1 \in \vY{}{h_1}{t_1}{q_1}$ and $E_{q_2}\,:\,\alpha_2 \in \vY{}{h_2}{t_2}{q_2}$ By induction 
$
\Pr[E_{q_1}] = \prod_{q'' \in \tree^*(\alpha,q_1)} \Pr[E_{q''} \mid \premises(E_{q''})] 
$
and
$
\Pr[E_{q_2}] = \prod_{q'' \in \tree^*(\alpha,q_2)} \Pr[E_{q''} \mid \premises(E_{q''})] 
$
so, since $\tree^*(\alpha,q) = (q,\tree^*(\alpha_1,q_1),\tree^*(\alpha_2,q_2))$ and since $\Pr[\premises(E_q)] = \Pr[E_{q_1},\,E_{q_2}] = \Pr[E_{q_1}]\Pr[E_{q_2}]$, the claim follows from ~(\ref{eq:decomposeFulleq}).
\end{itemize}
\end{proof}

\noindent The decomposition of Lemma~\ref{lemma:decomposeFull} extends to conjunctions of contain-events.
\begin{restatable}{lemma}{decomposeMany}\label{lemma:decomposeMany}
For $\calE$ a set of contain-events we have
$
\Pr[\calE] = \prod\limits_{E \in \premises^*[\calE]} \Pr[E \mid \premises(E)].
$
\end{restatable}
\begin{proof}
Let $maxHeight(\calE)$ be the largest effective height of nodes targeted by contain-events in $\calE$.
We proceed by induction on $maxHeight(\calE)$. If $maxHeight(\calE) = 0$ then $\calE$ has only events of the form $\alpha \in \reduce(\supp(q),p)$ that are mutually independent and whose premises $\alpha \in \supp(q)$ occur with probability $1$. Formally, $\Pr[\alpha \in \reduce(\supp(q),p)] = \Pr[\alpha \in \reduce(\supp(q),p) \mid \alpha \in \supp(q)]\Pr[\alpha \in \supp(q)] = \Pr[\alpha \in \reduce(\supp(q),p) \mid \alpha \in \supp(q)]$ and
$$
\Pr[\calE] = \prod_{E \in \calE} \Pr[E \mid \premises(E)]
$$
and the claim follows from $\premises^*[E] = \{E\}$ for contain-events of effective height $0$. Now suppose the claim holds for all sets $\calE'$ with $maxHeight(\calE') < i$ (with $i > 0$) and suppose $maxHeight(\calE) = i$. Let $E_1,\dots,E_k$ be the events of effective height $i$ in $\calE$. $E_1$ occurs only if its premises occur:
\begin{align*}
\Pr[\calE] &= \Pr[\calE \cup \premises(E_1)] = \Pr[E_1,\, \calE \setminus \{E_1\} \mid \premises(E_1)]\cdot\Pr[\premises(E_1)]
\end{align*}
Since $E_1$ has maximum effective height in $\calE$, it cannot be an extended premises of any event in $\calE \setminus \{E_1\}$, hence
\begin{align*}
\Pr[\calE] &= \Pr[E_1 \mid \premises(E_1)]\cdot \Pr[\calE \setminus \{E_1\} \mid \premises(E_1)]\Pr[\premises(E_1)]
\\
& =\Pr[E_1 \mid \premises(E_1)]\cdot \Pr[\calE \setminus \{E_1\}\, \premises(E_1)].
\end{align*}
Repeating this argument with $E_2,E_3,\dots,E_k$ gives
\begin{align*}
\Pr[\calE] &= \Pr[E_1 \mid \premises(E_1)]\cdot\Pr[\calE \setminus \{E_1\},\, \premises(E_1)]
\\
&= \Pr[E_1 \mid \premises(E_1)]\Pr[E_2 \mid \premises(E_2)]\cdot\Pr[\calE \setminus \{E_1,\, E_2\},\,  \premises(E_1),\, \premises(E_2)]
\\
&= \Pr[\calE \setminus \{E_1,\dots,E_k\},\,\premises(E_1),\dots,\premises(E_k)]\cdot \prod_{j = 1}^k\Pr[E_j \mid \premises(E_j)]
\end{align*}
Let $\calE' =  \calE \setminus \{E_1,\dots,E_k\} \cup \bigcup_{j = 1}^k \premises(E_j)$. By induction, 
$
\Pr[\calE'] = \prod_{E \in \premises^*[\calE']} \Pr[E \mid \premises(E)]
$. The proof is finished by observing that 
\begin{align*}
\premises^*[\calE'] &= \premises^*[\calE \setminus \{E_1,\dots,E_k\}] \cup \bigcup_{j = 1}^k \premises^*[\premises(E_j)] 
\\
&=  \premises^*[\calE \setminus \{E_1,\dots,E_k\}] \cup \bigcup_{j = 1}^k \premises^*[E_j] \setminus \{E_j\} 
\\
&= \premises^*[\calE] \setminus \{E_1,\dots,E_k\}
\end{align*} 
\end{proof}

\begin{proof}[Proof of Lemma~\ref{lemma:proba_second_order}]
	Let $E \,:\,\alpha \in \vY{}{h}{t}{q}$ and $E'\,:\,\alpha' \in \vY{}{h}{t}{q}$ with $\alpha \neq \alpha'$. Let $T = \tree^*(\alpha,q)$ and $T' = \tree^*(\alpha',q)$. From Lemmas~\ref{lemma:decomposeFull} and~\ref{lemma:decomposeMany}, we have that 
	$$
	\Pr[E,\,E'] = \frac{\Pr[E] \cdot \Pr[E']}{\prod_{\substack{ X \in \premises^*[E] \\ \cap \premises^*[E']}} \Pr[X \mid \premises(X)]}.
	$$ 
	For $\hat q \in T$, let $\alpha_{\hat q}$ be the restriction of $\alpha$ to $var(\hat q)$ and let $h_{\hat q}$ be the restriction of $h$ to $\desc(\hat q)$ and let $E_{\hat q} : \alpha_{\hat q} \in \vY{}{h_{\hat q}}{h(\hat q)}{\hat q}$. We claim that $\premises^*[E] \cap \premises^*[E']$ is exactly $\bigcup_{\hat q \in T \wedge T'} \premises^*[E_{\hat q}]$. First, observe that $\premises^*[E] = \{E_{\hat q} \mid \hat q \in T\}$ and $\premises^*[E'] = \{E'_{\hat q} \mid \hat q \in T'\}$. Now, $E_{\hat q} = E'_{\hat q}$ when $\alpha_{\hat q} = \alpha'_{\hat q}$ for some node $\hat q$ in both $T$ and $T'$, we have that $T$ and $T'$ below $\hat q$ both equal $\tree^*(\alpha_{\hat q},\hat q)$. So $\hat q$ is either in $T \wedge T'$ or is a descendant of a node in $T \wedge T'$. So the events $X$ in $\premises^*[E] \cap \premises^*[E']$ are exactly the $E_{\hat q}$ for $\hat q$ in $T \wedge T'$ or below $T \wedge T'$. Hence $\premises^*[E] \cap \premises^*[E'] = \bigcup_{q'\in T \wedge T'} \premises^*[E_{q'}]$. 
	\begin{align*}
		\prod_{\substack{ X \in \premises^*[E] \\ \cap \premises^*[E']}} \Pr[X \mid \premises(X)] &= \prod_{q' \in T \wedge T'} \prod_{X \in \premises^*[E_{q'}]} \Pr[X \mid \premises(X)] 
		\\
		&= \prod_{q' \in T \wedge T'} \Pr[E_{q'}] \tag{Lemma~\ref{lemma:decomposeFull}}
		\\
		&= \prod_{q' \in T \wedge T'} h(q') \tag{Lemma~\ref{lemma:proba_first_order}}
	\end{align*}
	Thus $\Pr[\alpha \in \vY{}{h}{t}{q},\, \alpha' \in \vY{}{h}{t}{q}] = \Pr[E,\,E'] = \frac{\Pr[E]\Pr[E']}{\prod_{q' \in T \wedge T'} h(q')} = \frac{t^2}{\prod_{q' \in T \wedge T'} h(q')}$ and the first part of Lemma~\ref{lemma:proba_second_order} is shown. 
	
	For the second part, suppose $q$ is a $+$ node. there are two unique children $q_i$ and $q_j$ of $q$ such that
	\begin{align*}
		\Pr&[\alpha \in \vY{}{h}{t}{q},\, \alpha' \in \vY{}{h}{t}{q}]\\
		= &\Pr\left[\alpha \in \reduce\left(\vZ{}{h}{q},\frac{t}{t_{\min}}\right),\, \alpha' \in \reduce\left(\vZ{}{h}{q},\frac{t}{t_{\min}}\right) \, \bigg|\, \alpha \in \vZ{}{h}{q}\, \alpha' \in \vZ{}{h}{q}\right]
		\cdot \Pr[\alpha \in \vZ{}{h}{q},\, \alpha' \in \vZ{}{h}{q}]
	\end{align*}
	The two reduce events are conditioned on their premises and thus 
	\begin{align*}
		\Pr&[\alpha \in \vY{}{h}{t}{q},\, \alpha' \in \vY{}{h}{t}{q}]\\
		= &\Pr\left[\alpha \in \reduce\left(\vZ{}{h}{q},\frac{t}{t_{\min}}\right)  \bigg|\,  \alpha \in \vZ{}{h}{q}\right]\Pr\left[\alpha' \in \reduce\left(\vZ{}{h}{q},\frac{t}{t_{\min}}\right) \bigg|\,  \alpha' \in \vZ{}{h}{q}\right] \Pr[\alpha \in \vZ{}{h}{q},\, \alpha' \in \vZ{}{h}{q}]
		\\
		= &\frac{t^2}{t^2_{\min}}\Pr[\alpha \in \vZ{}{h}{q},\, \alpha' \in \vZ{}{h}{q}]
	\end{align*}
	It suffices to use that $\Pr[\alpha \in \vY{}{h}{t}{q},\, \alpha' \in \vY{}{h}{t}{q}] = \frac{t^2}{\prod_{q' \in T \wedge T'} h(q')}$ to conclude.
\end{proof}


\begin{thebibliography}{10}

\bibitem{AMJ07}
{\sc V.~A. Alfred, S.~L. Monica, and D.~U. Jeffrey}, {\em Compilers principles,
  techniques \& tools}, pearson Education, 2007.

\bibitem{ArenasCJR21}
{\sc M.~Arenas, L.~A. Croquevielle, R.~Jayaram, and C.~Riveros}, {\em {\#}{NFA}
  admits an {FPRAS:} efficient enumeration, counting, and uniform generation
  for logspace classes}, J. {ACM}, 68 (2021), pp.~48:1--48:40,
  \url{https://doi.org/10.1145/3477045}, \url{https://doi.org/10.1145/3477045}.

\bibitem{ArenasCJR21a}
{\sc M.~Arenas, L.~A. Croquevielle, R.~Jayaram, and C.~Riveros}, {\em When is
  approximate counting for conjunctive queries tractable?}, in 53rd Annual
  {ACM} {SIGACT} Symposium on Theory of Computing, {ACM}, 2021, pp.~1015--1027,
  \url{https://doi.org/10.1145/3406325.3451014},
  \url{https://doi.org/10.1145/3406325.3451014}.

\bibitem{Darwiche01}
{\sc A.~Darwiche}, {\em Decomposable negation normal form}, J. {ACM}, 48
  (2001), pp.~608--647, \url{https://doi.org/10.1145/502090.502091},
  \url{https://doi.org/10.1145/502090.502091}.

\bibitem{GoreJKSM97}
{\sc V.~Gore, M.~Jerrum, S.~Kannan, Z.~Sweedyk, and S.~R. Mahaney}, {\em A
  quasi-polynomial-time algorithm for sampling words from a context-free
  language}, Inf. Comput., 134 (1997), pp.~59--74,
  \url{https://doi.org/10.1006/INCO.1997.2621},
  \url{https://doi.org/10.1006/inco.1997.2621}.

\bibitem{AS11}
{\sc A.~Jha and D.~Suciu}, {\em Knowledge compilation meets database theory:
  compiling queries to decision diagrams}, in Proceedings of the 14th
  International Conference on Database Theory, 2011, pp.~162--173.

\bibitem{KSM95}
{\sc S.~Kannan, Z.~Sweedyk, and S.~R. Mahaney}, {\em Counting and random
  generation of strings in regular languages}, in Proceedings of the Sixth
  Annual {ACM-SIAM} Symposium on Discrete Algorithms, K.~L. Clarkson, ed.,
  {ACM/SIAM}, 1995, pp.~551--557,
  \url{http://dl.acm.org/citation.cfm?id=313651.313803}.

\bibitem{MeelCM24}
{\sc K.~S. Meel, S.~Chakraborty, and U.~Mathur}, {\em A faster {FPRAS} for
  {\#}{NFA}}, Proc. ACM Manag. Data, 2 (2024),
  \url{https://doi.org/10.1145/3651613}, \url{https://doi.org/10.1145/3651613}.

\bibitem{MeeldC25}
{\sc K.~S. Meel and A.~de~Colnet}, {\em {An FPRAS for Model Counting for
  Non-Deterministic Read-Once Branching Programs}}, in 28th International
  Conference on Database Theory (ICDT 2025), vol.~328, 2025, pp.~30:1--30:21,
  \url{https://doi.org/10.4230/LIPIcs.ICDT.2025.30}.

\bibitem{MdC25b}
{\sc K.~S. Meel and A.~de~Colnet}, {\em Towards practical fpras for\# nfa:
  Exploiting the power of dependence}, Proceedings of the ACM on Management of
  Data, 3 (2025), pp.~1--23.

\bibitem{MVC21}
{\sc K.~S. Meel, N.~Vinodchandran, and S.~Chakraborty}, {\em Estimating the
  size of union of sets in streaming models}, in Proceedings of the 40th ACM
  SIGMOD-SIGACT-SIGAI Symposium on Principles of Database Systems, 2021,
  pp.~126--137.

\bibitem{PBDG05}
{\sc H.~Palacios, B.~Bonet, A.~Darwiche, and H.~Geffner}, {\em Pruning
  conformant plans by counting models on compiled d-dnnf representations.}, in
  ICAPS, vol.~5, 2005, pp.~141--150.

\bibitem{SGRM18}
{\sc S.~Sharma, R.~Gupta, S.~Roy, and K.~S. Meel}, {\em Knowledge compilation
  meets uniform sampling.}, in LPAR, 2018, pp.~620--636.

\bibitem{ValiantSBR83}
{\sc L.~G. Valiant, S.~Skyum, S.~Berkowitz, and C.~Rackoff}, {\em Fast parallel
  computation of polynomials using few processors}, {SIAM} J. Comput., 12
  (1983), pp.~641--644, \url{https://doi.org/10.1137/0212043},
  \url{https://doi.org/10.1137/0212043}.

\end{thebibliography}
\end{document}